\newcommand{\Te}{T_{\mrm{eff}}}
\newcommand{\Fq}{\mathbb{F}_q}
\newcounter{actr}
{\begin{list}{(\alph{actr})}{\usecounter{actr}}}{\end{list}}
\newcounter{ictr}
{\begin{list}{(\roman{ictr})}{\usecounter{ictr}}}{\end{list}}
\newcolumntype{C}[1]{>{\centering\let\newline\\\arraybackslash\hspace{0pt}}m{#1}}
\newcolumntype{L}[1]{>{\raggedright\let\newline\\\arraybackslash\hspace{0pt}}m{#1}}
\newcolumntype{R}[1]{>{\raggedleft\let\newline\\\arraybackslash\hspace{0pt}}m{#1}}
\newtheorem{remark}{Remark}
\newtheorem{thm}{Theorem}
\newtheorem{lemma}{Lemma}
\newtheorem{corol}{Corollary}
\newtheorem{prop}{Proposition}
\newtheorem{defn}{Definition}
\newtheorem{fact}{Fact}
\newenvironment{new-proof}[1]
{{\em Proof }:\\}%
{ \noindent\qed }
\newcommand{\defeq}{\stackrel{\Delta}{=}}
\newcommand{\mrm}{\mathrm}
\newcommand{\bc}{{\mathbf{c}}}
\newcommand{\cC}{{\mathcal{C}}}
\newcommand{\cE}{{\mathcal{E}}}
\newcommand{\bG}{{\mathbf{G}}}
\newcommand{\bH}{{\mathbf{H}}}
\newcommand{\bI}{{\mathbf{I}}}
\newcommand{\cO}{{\mathcal{O}}}
\newcommand{\bp}{{\mathbf{p}}}
\newcommand{\bq}{{\mathbf{q}}}
\newcommand{\bQ}{{\mathbf{Q}}}
\newcommand{\bs}{{\mathbf{s}}}
\newcommand{\cS}{{\mathcal{S}}}
\newcommand{\bu}{{\mathbf{u}}}
\newcommand{\bU}{{\mathbf{U}}}
\newcommand{\bv}{{\mathbf{v}}}
\newcommand{\bV}{{\mathbf{V}}}
\newcommand{\bw}{{\mathbf{w}}}
\newcommand{\cW}{{\mathcal{W}}}
\newcommand{\bx}{{\mathbf{x}}}
\newcommand{\bX}{{\mathbf{X}}}
\newcommand{\cX}{{\mathcal{X}}}
\newcommand{\by}{{\mathbf{y}}}
\newcommand{\bY}{{\mathbf{Y}}}
\newcommand{\al}{\alpha}
\newcommand{\eps}{\varepsilon}
\DeclareMathAlphabet{\mathbsf}{OT1}{cmss}{bx}{n}
\DeclareMathAlphabet{\mathssf}{OT1}{cmss}{m}{sl}
\DeclareSymbolFont{bsfletters}{OT1}{cmss}{bx}{n}
\DeclareSymbolFont{ssfletters}{OT1}{cmss}{m}{n}
\DeclareMathSymbol{\bsfGamma}{0}{bsfletters}{'000}
\DeclareMathSymbol{\ssfGamma}{0}{ssfletters}{'000}
\DeclareMathSymbol{\bsfDelta}{0}{bsfletters}{'001}
\DeclareMathSymbol{\ssfDelta}{0}{ssfletters}{'001}
\DeclareMathSymbol{\bsfTheta}{0}{bsfletters}{'002}
\DeclareMathSymbol{\ssfTheta}{0}{ssfletters}{'002}
\DeclareMathSymbol{\bsfLambda}{0}{bsfletters}{'003}
\DeclareMathSymbol{\ssfLambda}{0}{ssfletters}{'003}
\DeclareMathSymbol{\bsfXi}{0}{bsfletters}{'004}
\DeclareMathSymbol{\ssfXi}{0}{ssfletters}{'004}
\DeclareMathSymbol{\bsfPi}{0}{bsfletters}{'005}
\DeclareMathSymbol{\ssfPi}{0}{ssfletters}{'005}
\DeclareMathSymbol{\bsfSigma}{0}{bsfletters}{'006}
\DeclareMathSymbol{\ssfSigma}{0}{ssfletters}{'006}
\DeclareMathSymbol{\bsfUpsilon}{0}{bsfletters}{'007}
\DeclareMathSymbol{\ssfUpsilon}{0}{ssfletters}{'007}
\DeclareMathSymbol{\bsfPhi}{0}{bsfletters}{'010}
\DeclareMathSymbol{\ssfPhi}{0}{ssfletters}{'010}
\DeclareMathSymbol{\bsfPsi}{0}{bsfletters}{'011}
\DeclareMathSymbol{\ssfPsi}{0}{ssfletters}{'011}
\DeclareMathSymbol{\bsfOmega}{0}{bsfletters}{'012}
\DeclareMathSymbol{\ssfOmega}{0}{ssfletters}{'012}
\renewcommand{\defeq}{\triangleq}
\newcommand{\rvbs}{{\mathbsf{s}}}
\newcommand{\rvbx}{{\mathbsf{x}}}
\newcommand{\braceup}[4]{\draw[decorate, decoration={brace, amplitude=5pt},thick] ([xshift=0.5mm,yshift=#3]#1.north west)--([xshift=-0.5mm,yshift=#3]#2.north east) node[midway,anchor=south,outer sep=2mm] {#4}}
\newcommand{\bracedn}[4]{\draw[decorate, decoration={brace, amplitude=5pt},thick] ([xshift=-0.5mm,yshift=#3]#2.south east)--([xshift=0.5mm,yshift=#3]#1.south west) node[midway,anchor=north,outer sep=2mm] {#4}}
\newcommand{\dimup}[4]{\draw[<->,thick] ([xshift=0.5mm,yshift=#3]#1.north west)--([xshift=-0.5mm,yshift=#3]#2.north east) node[midway,anchor=south] {#4}}
\newcommand{\dimdn}[4]{\draw[<->,thick] ([xshift=-0.5mm,yshift=#3]#2.south east)--([xshift=0.5mm,yshift=#3]#1.south west) node[midway,anchor=north] {#4}}
\definecolor{light-gray}{gray}{0.75}
\newcolumntype{g}{>{\columncolor{light-gray}}c}
\tikzstyle{nosym} = [rectangle, font=\large, minimum width=4mm, minimum height=4mm, text centered]
\tikzstyle{sym} = [draw, thick, rectangle, font=\large, minimum width=4mm, minimum height=4mm, text centered]
\tikzstyle{esym} = [sym, fill=light-gray]
\tikzstyle{usym} = [sym, fill=white]
\tikzstyle{diagbox} = [draw, rectangle, font=\footnotesize, fill=white, text centered, rounded corners]
\tikzstyle{codebox} = [draw, rectangle, font=\footnotesize, minimum height=7mm, fill=white, text centered]
\tikzstyle{triple2} = [rectangle split, anchor=text,rectangle split parts=4]
\tikzstyle{double2} = [rectangle split, anchor=text,rectangle split parts=2]
\tikzstyle{single2} = [rectangle split, anchor=text,rectangle split parts=1]
\tikzstyle{triple} = [draw, rectangle split,rectangle split parts=4]
\tikzstyle{double} = [draw, rectangle split,rectangle split parts=2]
\tikzstyle{single} = [draw, rectangle split,rectangle split parts=1]
\tikzset{block/.style={rectangle,draw}}
\tikzset{block2/.style={rectangle}}
\tikzset{%
    body/.style={inner sep=0pt,outer sep=0pt,shape=rectangle,draw,thick,pattern=north east lines wide},
    dimen/.style={<->,>=latex,thin,every rectangle node/.style={fill=white,midway,font=\sffamily}},
    symmetry/.style={dashed,thin},
}
\begin{document}
\title{Layered Constructions for Low-Delay \\ Streaming Codes}


\author{Ahmed~Badr,~\IEEEmembership{Student Member,~IEEE},
 Pratik Patil, 
 Ashish~Khisti,~\IEEEmembership{Member,~IEEE},
 Wai-Tian~Tan,~\IEEEmembership{Member,~IEEE} and John Apostolopoulos,~\IEEEmembership{Fellow,~IEEE}
\thanks{A.~Badr, P.~Patil and A.~Khisti are with University of Toronto, Toronto, ON, Canada. W.~Tan and J.~Apostolopoulos were with Hewlett Packard Laboratories, USA when this work was done. They are now with Cisco Systems, USA. The corresponding author is Ashish Khisti (ashish.khisti@gmail.com).}
\thanks{This work was supported by an Ontario Early Researcher Award, the Canada Research Chair program and by Hewlett Packard through a HP-IRP Award. }
\thanks{Part of this work was presented at the INFOCOM, Turin, Italy, 2013, CWIT, Toronto, ON, Canada, 2013, ISIT, Istanbul, Turkey, 2013 and the Asilomar Conference on Signals, Systems, and Computers, Pacific Grove, CA, 2013.}}

\maketitle

\begin{abstract}
We propose a new class of error correction codes for low-delay streaming communication. We consider an online setup where a source packet arrives at the encoder every $M$ channel uses, and needs to be decoded with a maximum delay of $T$ packets. We consider a sliding-window erasure channel --- $\cC(N,B,W)$ --- which introduces either up to $N$ erasures in arbitrary positions, or $B$ erasures in a single burst, in any window of length $W$. When $M=1$, the case where source-arrival and channel-transmission rates are equal, we propose a class of codes --- MiDAS codes --- that achieve a near optimal rate. Our construction is based on a {\em layered} approach. We first construct an optimal code for the $\cC(N=1,B,W)$ channel, and then concatenate an additional layer of parity-check symbols to deal with $N>1$. When $M > 1$, the case where source-arrival and channel-transmission rates are unequal, we characterize the capacity when $N=1$ and $W \ge M(T+1),$ and for $N>1$, we propose a construction based on a layered approach. Numerical simulations over Gilbert-Elliott and Fritchman channel models indicate significant gains in the residual loss probability over baseline schemes. We also discuss the connection between the error correction properties of the MiDAS codes and their underlying column distance and column span.
\end{abstract}

\begin{IEEEkeywords}
Delay Constrained Capacity, Application Layer Error Correction, Packet Erasure Channels, Real-Time Streaming Communication, Deterministic Channel Models
\end{IEEEkeywords}

\IEEEpeerreviewmaketitle

\section{Introduction}
\label{sec:intro}

\IEEEPARstart{M}{any} multimedia applications such as interactive audio/video conferencing, mobile gaming and cloud-computing require error correction of streaming sources under strict latency constraints. The transmitter must encode a source stream sequentially, and the receiver must decode each source packet within a fixed playback deadline. 
Classical error correction codes such as Maximum Distance Separable (MDS) and rateless codes are far from ideal in such situations. Their encoders operate on a source stream in blocks and introduce buffering delays, whereas the decoders can only recover missing source packets simultaneously without considering the deadline of each packet.
Naturally both the structure of optimal codes and the associated fundamental limits are expected to be different in the streaming scenario. For example, it is well known that the Shannon capacity of an erasure channel only depends on the fraction of erasures. However when delay constraints are imposed, the actual pattern of packet losses also becomes relevant. The decoding delay over channels with burst losses can be very different than isolated or random losses. In practice, channels introduce both burst and isolated losses often captured by statistical models such as the Gilbert-Elliott (GE) channel. Therefore the underlying codes must simultaneously correct both types of patterns. The central question we address in this paper is how to construct near optimal \emph{streaming codes} for such channels.

Problems involving real-time coding and compression have been studied from many different perspectives in related literature. Some structural properties of optimal codes have been studied in e.g.,~\cite{Witsenhausen:79,Teneketzis,jscc-rt}, and a dynamic programming based formulation is proposed. Schulman~\cite{Schulman96} and Sahai~\cite{sahaiThesis} study coding techniques based on {\em tree codes} in a streaming setup with discrete memoryless channels. Sukhavasi and Hassibi~\cite{sukhavai-11} have proposed linear time-invariant tree codes for the class of i.i.d.\ erasure channels, which are attractive due to low decoding complexity. For the class of burst erasure channels, Martinian et.\ al.\ \cite{MartinianS04,MartinianT07} study optimal streaming codes assuming an upper-bound on the erasure burst length. Unfortunately these constructions are not robust in the presence of random or isolated erasures. Some specific examples of robust codes are presented in~\cite{MartinianS04} using a computer search, but these examples provide limited insights towards a general construction.

In the present paper we initiate a systematic investigation of streaming codes over channels which introduce both burst and isolated erasures. As discussed before, in many practical channel models, both these erasure patterns are relevant. We introduce a class of deterministic channels that could be interpreted as an approximation of the GE channel. We show that in such models there exists an inherent tradeoff between the burst error correction and isolated error correction capabilities of streaming codes i.e., given a fixed rate and delay, a streaming code that can correct a long burst cannot recover from many isolated erasures and vice versa. We also present a new class of streaming codes that are near optimal with respect to this tradeoff. Our constructions are based on a layered approach. We first construct an optimal burst erasure code and then concatenate additional parity-check symbols to enable recovery from isolated erasures. In practice, our proposed constructions can be easily adapted if the number of isolated erasures to be corrected varies based on channel conditions. 

In the first part of the paper, we consider the case when the source-arrival and channel-transmission rates are equal. We propose a class of codes --- MiDAS codes --- that attain a near optimal tradeoff between the burst error correction and isolated error correction properties as discussed above. In the second part of  the paper, we consider the case when the source-arrival and channel-transmission rates are unequal. Each source symbol may arrive once every $M$ channel uses, where $M$ is a parameter determined by the application. For example in high definition video streaming, each source frame may arrive every 40 ms, whereas each channel packet may be transmitted every millisecond, resulting in $M=40$. We note that the first part of the paper corresponds to $M=1$. We show that a straightforward extension of these codes to $M>1$ is suboptimal, and propose an optimal construction for the burst erasure channel, as well as its robust extension. Finally we present extensive simulation results over the GE and Fritchman channels that indicate substantial performance gains over baseline codes for a wide range of channel parameters.

In other related works, references~\cite{khistiSingh:09,de-sco,mu-sco} study an extension of streaming codes over a burst erasure channel that adapt the delay based on the burst length. When the burst length is small, the decoding delay is smaller, when it is long, the corresponding delay is also longer. References~\cite{LuiCWIT,LuiMASc} study an extension of streaming codes to parallel channels with burst erasures whereas~\cite{liKG:11} studies streaming codes that can correct multiple bursts. In a parallel work, references~\cite{tekin,leong,leong2} also study streaming codes motivated by connections between streaming and unicast network coding. However to the best of our knowledge, these papers do not consider channels with {\em both} burst and isolated erasures, nor consider the layered approach for coding, which is the focus of this work. There is also a significant body of literature on adapting various coding techniques for streaming systems, see e.g.,~\cite{streaming-1, streaming-2,
 streaming-4, streaming-9} and references therein.

\section{System Model and Main Results}
\label{sec:system-model}

\begin{center}
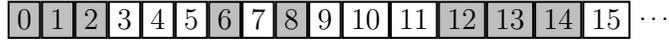
\begin{figure}[t]
\centering
	\centering
	\resizebox{0.5\columnwidth}{!}{
	\begin{tikzpicture}[node distance=0mm]
		\node[esym]  (x100) {$0$};
		\node[esym, right = of x100]     (x101) {$1$};
		\node[esym, right = of x101]     (x102) {$2$};
		\node[usym, right = of x102]     (x103) {$3$};
		\node[usym, right = of x103]     (x104) {$4$};
		\node[usym, right = of x104]     (x105) {$5$};
		\node[esym, right = of x105]     (x106) {$6$};
		\node[usym, right = of x106]     (x107) {$7$};
		\node[esym, right = of x107]     (x108) {$8$};
		\node[usym, right = of x108]     (x109) {$9$};
		\node[usym, right = of x109]     (x110) {$10$};
		\node[usym, right = of x110]     (x111) {$11$};
		\node[esym, right = of x111]     (x112) {$12$};
		\node[esym, right = of x112]     (x113) {$13$};
		\node[esym, right = of x113]     (x114) {$14$};
		\node[usym, right = of x114]     (x115) {$15$};
		\node      [right = of x115]     (x1end) {$\cdots$};		
	\end{tikzpicture}}
	\caption{An example of the  deterministic channel $\cC(N=2,B=3,W=5)$.  In any sliding window of length $W=5,$ there is either a single erasure burst of length no greater than $B=3,$ or no more than $N=2$ isolated erasures.}
	\label{fig:chan-mix}
\end{figure}

\vspace{-1em}
\end{center}

We consider a class of packet erasure channels where the erasure patterns are locally constrained. In any sliding window of length $W$, the channel can introduce one of the following patterns,
(i) a single erasure burst of maximum length $B$, 
or (ii) a maximum of $N$ erasures in arbitrary locations.
We use the notation $\cC(N,B,W)$ for such a channel. Fig.~\ref{fig:chan-mix} provides an example of such a channel when $N=2$, $B=3$ and $W=5$. Note that the condition ${N \le B}$ follows since a burst erasure is a special type of erasure pattern. We will assume throughout the paper that $B+1 \le W$, so that in any window of length $W$ there is at-least one non-erased symbol\footnote{If this condition is violated, it can be easily shown that the capacity is trivially zero.}. Note that the special case when $N=1$ reduces the above model to a burst-only channel model. In this case the guard separation between successive bursts is at-least ${W-1}$. 

In practice we can view $\cC(N,B,W)$ as an approximation of statistical models such as the Gilbert-Elliott (GE) channel model. A GE channel is in one of two states. In the good state, it behaves an an i.i.d.\ erasure channel, while in the bad state, it behaves as a burst-erasure channel. Thus the interval consisting of a burst loss corresponds to the bad state, whereas a window comprising of isolated erasures corresponds to the good state. The values of $N$, $B$ as well as $W$ will depend on the underlying channel parameters. Some insights into the connections between the deterministic and statistical models will be presented in our simulation results.


We further separately treat the cases when the source-arrival and channel-transmission rates are equal and when they are not.

\subsection{Equal Source-Channel Rates}
\label{subsec:eq-rates}
At each time-slot ${i \ge 0}$, the encoder observes a source symbol $\bs[i]$, and transmits a channel symbol $\bx[i]$.
We will assume throughout that the source and channel alphabets are $\cS = \mathbb{F}_q^k$ and $\cX = \mathbb{F}_q^n$ respectively, where $q$ denotes the size of the base field. The rate of the code equals $R = \frac{k}{n}$. The channel input at time $i$ can depend causally on all the source symbols observed up to and including time $i$, but not on any future symbols i.e., $\bx[i] = f_i(\bs[0],\ldots, \bs[i])$.
The channel output at time $i$ is denoted by the symbol $\by[i]$. Note that for the erasure channel we have either $\by[i] =\bx[i]$, or $\by[i] = \star$, when the channel introduces an erasure. The decoder is required to reconstruct each source symbol with a delay of $T$, i.e., for each $i\ge 0$ we must have a decoding function,
$\bs[i] = g_i(\by[0], \ldots, \by[i+T]).$ Such a collection of encoding and decoding functions constitute a streaming code.

\begin{defn}[Streaming Capacity - Equal Source-Channel Rates]
A rate $R$ is achievable with a delay of $T$ over the $\cC(N,B,W)$ channel, if there exists a streaming code of this rate over some field-size $q$ such that every source symbol $\bs[i]$ can be decoded at the destination with a delay of $T$ symbols. The largest achievable rate is the capacity.
\end{defn}

We establish the following upper and lower bounds on the capacity.

\begin{thm}
\label{thm:Chan-1-UB}
Any achievable rate for $\cC(N,B,W)$ and delay $T \ge B$ must satisfy
\begin{align}
\left(\frac{R}{1-R}\right)B + N \le \Te+1,
\label{eq:r-ub}
\end{align}
where $\Te \defeq \min(T,W-1)$. 
\hfill$\Box$

\end{thm}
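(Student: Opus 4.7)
The plan is to prove this upper bound by exhibiting a specific adversarial erasure pattern in $\cC(N,B,W)$ and carrying out a dimension-counting argument for the recovery of the first source symbol. Without loss of generality I would assume the streaming code is linear and systematic, writing $\bx[i]=(\bs[i],\bp[i])$ with $\bp[i]\in\Fq^{n-k}$ a linear function of $\bs[0],\ldots,\bs[i]$; this reduction is standard and preserves the set of rates achievable for erasure correction.

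The central construction is a two-cluster erasure pattern: a burst of length $B$ at positions $[0,B-1]$ and a secondary cluster of $N$ consecutive erasures at positions $[W+B-N,W+B-1]$. Verifying that this pattern is admissible under $\cC(N,B,W)$ is the key combinatorial step. Windows of length $W$ that intersect only one of the two clusters see at most $B$ contiguous erasures, which is an allowed burst. Windows of length $W$ that straddle both clusters---those with starting position $s\in[B-N+1,B-1]$---see a tail $[s,B-1]$ of the first burst and a head $[W+B-N,s+W-1]$ of the secondary cluster, which together comprise exactly $N$ erasures separated by the received gap $[B,W+B-N-1]$; the count $\le N$ and the non-contiguity force the interpretation through the isolated branch of $\cC(N,B,W)$, which is admissible.

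I would then apply the delay constraint to $\bs[0]$, which by the decoding rule must be a function of $\by[0:T]$. In the critical regime $\Te=W-1$, choosing $T$ just large enough for both clusters to lie within the decoding window, the only received symbols between the two clusters are $\bx[B],\ldots,\bx[W+B-N-1]$, totaling $W-N=\Te+1-N$ channel symbols. Subtracting the known systematic parts $\bs[B],\ldots,\bs[W+B-N-1]$ from the corresponding parity parts gives a linear system of $(\Te+1-N)(n-k)$ equations in the $Bk$ unknown field elements of the first burst. Since the decoder must recover $\bs[0]$ injectively from these constraints, dimension counting forces $Bk\le(\Te+1-N)(n-k)$, which rearranges to the claimed bound.

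For the remaining regime $T<W-1$ (so $\Te=T$), the decoding window $[0,T]$ fits inside a single window of length $W$, so the channel constraint does not permit appending a secondary cluster. In this case the pure burst $[0,B-1]$ alone restricts the received symbols in $[0,T]$ to $\bx[B],\ldots,\bx[T]$, yielding the sharper bound $Bk\le(T-B+1)(n-k)$; since $N\le B$, this implies the claimed $Bk\le(T+1-N)(n-k)$. The main obstacle is the consistency verification for the two-cluster pattern---specifically the case analysis over starting positions $s$ of windows straddling both clusters---after which the remaining dimension count follows directly from the systematic structure of the code.
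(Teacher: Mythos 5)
Your admissibility check for the two-cluster pattern is sound, and the overall plan of adversarial pattern plus dimension counting is the right spirit, but the counting step in both regimes has genuine gaps. In the regime $\Te=W-1$ you count only the parities $\bx[B],\ldots,\bx[W+B-N-1]$ between the two clusters and equate $Bk$ unknowns against $(\Te+1-N)(n-k)$ equations; however $T$ is given, not chosen, and when $T>W$ the deadline for $\bs[j]$ with $j\ge 1$ is $j+T$, which lies past the second cluster, so the decoder also sees the unerased parities $\bx[W+B],\ldots,\bx[j+T]$. These extra equations invalidate the count. In addition, "$\bs[0]$ is recovered injectively" only forces the kernel of the constraint map to project trivially onto the first $k$ coordinates, not to be trivial; you need the simultaneous recoverability of $\bs[0],\ldots,\bs[B-1]$ by their respective deadlines to get the full-rank condition, and once you use those later deadlines you bring in the extra parities just mentioned. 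In the regime $\Te=T$ your claimed bound $Bk\le(T-B+1)(n-k)$ is actually false: with $N=1$, $B=T$ it would force $R\le 1/(T+1)$, contradicting the Maximally Short codes of rate $T/(T+B)=1/2$ that the paper shows are optimal for this channel. The correct pure-burst count is $Bk\le T(n-k)$ (you forgot the parities at times $T+1,\ldots,T+B-1$ used for the later deadlines), which only yields the $N=1$ version of the bound, so a pure burst cannot establish the theorem for $N>1$.

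The missing idea is the paper's \emph{periodic} erasure pattern and the observation that the streaming code must be feasible on it even though the pattern is generally \emph{not} in $\cC(N,B,W)$. Concretely, for $W\ge T+1$ the paper takes a period $\tau_P=T+B-N+1$ with the first $B$ symbols of each period erased; it then argues, by examining the recovery window $[i,i+T]$ of each $\bs[i]$, that every such window contains either a burst of length at most $B$ or exactly $N$ non-contiguous erasures, so the code must recover all symbols in the first period, after which the argument repeats. The rate is then bounded by the capacity of this periodic erasure channel, $1-B/\tau_P$, which requires no linearity or systematic-form assumption. This periodic extension is exactly what makes the dimension count robust to arbitrarily large $T$: your two-cluster pattern is the first period of the paper's pattern, but without the periodic repetition the decoder gains slack from times beyond the second cluster whenever $T$ is large enough. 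Replacing your single-shot argument with a periodic one (or with an explicit recursive cancel-and-repeat argument over the erasure windows) is the fix; in the regime $W<T+1$ the pattern with period $W+B-N$ is in $\cC(N,B,W)$ directly, so that branch is simpler once you switch to the periodic view.
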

To interpret~\eqref{eq:r-ub}, note that $\Te+1 = \min(T+1,W)$, and ${T+1}$ denotes the {\em active} duration of each source packet i.e., each source packet $\bs[i]$ arrives at time $t=i$ and must be decoded by time ${t=i+T}$. When $W > T+1$ the upper bound in~\eqref{eq:r-ub} is governed by the decoding delay, otherwise it depends on $W$.
One can also interpret~\eqref{eq:r-ub} as follows. When the rate $R$ and delay $T$ are fixed, there exists a tradeoff between the achievable values of $B$ and $N$. We cannot have a streaming code that can simultaneously correct long erasure bursts and many isolated erasures. 

\begin{remark}
By definition, the streaming code is a convolutional code where the source stream $\bs[i]$ is the input and $\bx[i]$ is the output. In Appendix~\ref{app:distance-span}, we establish that a key property for any feasible code for the $\cC(N,B,W)$ channel is that it should simultaneously have a certain {\em column distance} and {\em column span}. Thus Theorem~\ref{thm:Chan-1-UB} also leads to a fundamental tradeoff between the achievable column distance and column span of any convolutional code (cf.~Prop.~\ref{prop:cTdT-tradeoff} in Appendix~\ref{app:distance-span}).
\end{remark}

The proof of Theorem~\ref{thm:Chan-1-UB} is provided in Section~\ref{subsec:match-ch1-converse}.
We further propose a class of streaming codes, {\em Maximum Distance And Span tradeoff (MiDAS) codes}, that achieve a near-optimal tradeoff.

\begin{thm}
\label{thm:midas}
For any channel $\cC(N,B,W)$ and delay $T \ge B$, there exists a code of rate $R$ that satisfies
\begin{align}
\left(\frac{R}{1-R}\right)B + N > \Te,
\label{b-achiev}
\end{align}
where $\Te \defeq \min(T,W-1)$. 
$\hfill\Box$
\end{thm}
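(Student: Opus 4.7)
The plan is to exhibit an explicit family of streaming codes, the MiDAS codes, whose rate satisfies \eqref{b-achiev} and whose decoder meets the delay constraint $T$ against every erasure pattern allowed by $\cC(N,B,W)$. I would target the parameters $k=\Te-N+1$ and $n-k=B$, so that $R=(\Te-N+1)/(\Te+B-N+1)$ and hence $(R/(1-R))B+N=\Te+1>\Te$. The central task then reduces to constructing a code at this rate that decodes within delay $T$ under both of the erasure regimes permitted by $\cC(N,B,W)$.

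Following the layered philosophy announced in the introduction, I would first recall (or construct) an optimal code for $\cC(1,B,W)$, which has rate $\Te/(\Te+B)$ and is built in a Martinian--Trott style: each source block $\bs[i]\in\mathbb{F}_q^{\Te}$ is split into an urgent part $\bu[i]\in\mathbb{F}_q^{B}$ and a non-urgent part $\bv[i]\in\mathbb{F}_q^{\Te-B}$, and the $B$-symbol parity $\bp[i]$ is formed as $\bp[i]=\bu[i-\Te]+\bq[i]$, where $\bq[i]$ is a sliding MDS combination on a window of recent non-urgent blocks. I would then modify this base code by shortening $\bv[i]$ by $N-1$ coordinates (so that $k_v=\Te-B-N+1$ and $k=\Te-N+1$) and by strengthening $\bq[i]$ so that it forms a sliding parity with enough redundancy to recover $N-1$ additional isolated erasures in the non-urgent stream. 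The parity budget per time step remains $B$, producing the target rate.

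Decodability then splits into two cases. For a burst of length at most $B$ starting at time $j$: all symbols $\bx[j+B],\ldots,\bx[j+\Te]$ are received uncorrupted, and the delayed urgent copies $\bu[j],\ldots,\bu[j+B-1]$ appear inside $\bp[j+\Te],\ldots,\bp[j+\Te+B-1]$; after peeling off the $\bq[\cdot]$-components (which are computable from known $\bv$-blocks outside the burst, together with the sliding MDS property) we recover the erased $\bu$'s within delay $\Te\le T$, and the erased $\bv$ blocks inside the burst are then reconstructed via the sliding MDS structure of the $\bq[\cdot]$ sequence. For at most $N$ isolated erasures in a window of length $W$: direct observation of $\bu[\cdot]$ supplies all urgent components except for the (at most $N$) erased ones, and after subtracting the now-known delayed-copy terms from the received $\bp[\cdot]$'s, the residual $\bq[\cdot]$-sequence acts as a window-MDS parity of the $\bv$-stream with enough redundancy to reconstruct all remaining missing coordinates within delay $\Te\le T$.

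The main obstacle, and what I expect to require the most care, is designing the single $B$-symbol parity sequence $\bp[i]$ so that it simultaneously carries the shift-by-$\Te$ delayed copy needed against bursts and an MDS layer on the $\bv$-stream strong enough to correct $N-1$ extra isolated erasures, without exceeding the $B$-symbol budget. This reduces to choosing a generator matrix for the $\bq[\cdot]$ layer whose sliding-window submatrices are MDS---guaranteed over a sufficiently large base field $\mathbb{F}_q$---and, as noted in the remark following Theorem~\ref{thm:Chan-1-UB}, to simultaneously achieving the required column distance and column span of the resulting convolutional code, a joint condition that the chosen parameters saturate up to an additive constant of $1$ relative to the converse in Theorem~\ref{thm:Chan-1-UB}.
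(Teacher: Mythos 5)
Your proposal targets the rate $R=(\Te-N+1)/(\Te+B-N+1)$, which gives $\frac{R}{1-R}B+N=\Te+1$, i.e.\ it meets the converse of Theorem~\ref{thm:Chan-1-UB} \emph{with equality}. The paper does not achieve this; the MiDAS rate is $R=\Te/\bigl(\Te+B\frac{\Te+1}{\Te-N+1}\bigr)$, which is strictly smaller than your claimed rate and only yields $\frac{R}{1-R}B+N>\Te$. This discrepancy signals a gap in your construction, and the gap is real.

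The key structural difference from the paper is that you keep the parity budget per packet at exactly $n-k=B$: you shrink $k^v$ by $N-1$ to make the sliding parity $\bq[\cdot]$ over the $\bv$-stream more redundant, but you add no new parity channel. The paper, by contrast, transmits $\bx[i]=(\bu[i],\bv[i],\bq[i],\bp^u[i])$, where $\bp^u[i]\in\Fq^{k^s}$ with $k^s=\frac{N}{\Te-N+1}k^u$ is a \emph{third} layer: the output of a separate $(k^u+k^s,k^u,\Te)$ Strongly-MDS code applied to the $\bu$-stream alone. This extra layer is precisely what allows $\bu[0]$ to be recovered under isolated erasures, and its cost $k^s>0$ is why the achievable rate falls strictly short of the converse.

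To see why your two-layer version fails, take $N\ge 2$ and isolated erasures at $t=0$ and $t=\Te$ (allowed by $\cC(N,B,W)$ for any $W\ge B+1$). Then $\bu[0]$ is lost at $t=0$, and its only other appearance in your scheme is the superimposed copy $\bu[0]$ sitting inside $\bq[\Te]=\bp^v[\Te]+\bu[0]$, which is also lost at $t=\Te$. The parities $\bq[1],\dots,\bq[\Te-1]$ that the decoder does receive, after subtracting the known terms $\bu[1-\Te],\dots,\bu[-1]$, reduce to $\bp^v[\cdot]$: functions of the $\bv$-stream only, containing no information about $\bu[0]$. The packets $\bq[\Te+1],\dots,\bq[T]$ (when $\Te<T$) carry $\bu[1],\dots,\bu[T-\Te]$, not $\bu[0]$. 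Hence $\bu[0]$ is undeterminable by its deadline and the code is infeasible for $\cC(N,B,W)$ with $N\ge 2$. No choice of generator matrices (Strongly-MDS or otherwise) for the $B$-symbol $\bq$ layer can fix this, because the issue is informational: no received symbol in the decoding window is a nontrivial function of $\bu[0]$. A secondary issue: your $k^v=\Te-B-N+1$ can be negative (e.g.\ $\Te=B$, $N\ge 2$), so the construction is not even well-defined across the claimed parameter range, whereas the paper's $k^v=\Te-B\ge 0$ always holds.

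In short, your high-level layered philosophy matches the paper, and your burst-decoding argument mirrors the paper's Generalized MS analysis, but the isolated-erasure case needs the additional $\bp^u$ layer. Once you add it and account for its $k^s$ symbols of overhead, you land exactly on the paper's rate and proof, and the achieved bound becomes the strict inequality $\frac{R}{1-R}B+N>\Te$ rather than the equality $\Te+1$.
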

The proof of Theorem~\ref{thm:midas} is presented in Section~\ref{subsec:midas}. 
As will be apparent, our construction is based on a layered approach. We first construct an optimal streaming code for $\cC(N'=1, B, W)$ channel. Then we append an additional layer of parity-check sub-symbols that enables us to correct $N$ erasures in any sliding window of length $W$. By directly comparing~\eqref{b-achiev} and~\eqref{eq:r-ub}, we see that the proposed codes are near-optimal. We also remark that the upper bound~\eqref{eq:r-ub} establishes that some of the $R=1/2$ codes found via a computer search in~\cite[Section V-B]{MartinianS04} are indeed optimal.

Note that Theorem~\ref{thm:midas} does not explicitly state the field-size $q$. 
The underlying constructions are based on Strongly-MDS codes~\cite{strongly-mds-2, strongly-mds} which are known to exist for field-sizes
that increase exponentially in $\Te$. However we also provide an alternate construction in Section~\ref{subsec:midas-field}, that attains~\eqref{b-achiev}, and whose field-size increases as $\cO(\Te^3)$.

\begin{center}
\begin{figure}[t]
	\centering
	\resizebox{0.7\columnwidth}{!}{
	\begin{tikzpicture}[node distance=0mm]
		\node[usym]  (x100) {$$};
		\node[nosym, above = 1em of x100]     (x200) {$\bs[0]$};
		\node[usym, right = of x100]     (x101) {$$};
		\node[usym, right = of x101]     (x102) {$$};
		\node[usym, right = of x102]     (x103) {$$};
		\node[usym, right = of x103]     (x104) {$$};
		\node[usym, right = of x104]     (x105) {$$};
		\node[nosym, above = 1em of x105]     (x205) {$\bs[1]$};
		\node[usym, right = of x105]     (x106) {$$};
		\node[usym, right = of x106]     (x107) {$$};
		\node[usym, right = of x107]     (x108) {$$};
		\node[usym, right = of x108]     (x109) {$$};
		\node[usym, right = of x109]     (x110) {$$};
		\node[nosym, above = 1em of x110]     (x210) {$\bs[2]$};
		\node[usym, right = of x110]     (x111) {$$};
		\node[usym, right = of x111]     (x112) {$$};
		\node[usym, right = of x112]     (x113) {$$};
		\node[usym, right = of x113]     (x114) {$$};
		\node[usym, right = of x114]     (x115) {$$};
		\node[usym, right = of x115]     (x116) {$$};
		\node[usym, right = of x116]     (x117) {$$};
		\node[usym, right = of x117]     (x118) {$$};
		\node[usym, right = of x118]     (x119) {$$};
		\node[usym, right = of x119]     (x120) {$$};
		\node[usym, right = of x120]     (x121) {$$};
		\node[usym, right = of x121]     (x122) {$$};
		\node[usym, right = of x122]     (x123) {$$};
		\node[usym, right = of x123]     (x124) {$$};
		\node[usym, right = of x124]     (x125) {$$};
		\node[nosym, above = 1em of x125]     (x225) {$\bs[T]$};
		\node[usym, right = of x125]     (x126) {$$};
		\node[usym, right = of x126]     (x127) {$$};
		\node[usym, right = of x127]     (x128) {$$};
		\node[usym, right = of x128]     (x129) {$$};
		\node[nosym, below = 2em of x129]     (x229) {\begin{tabular}{c} Recover\\ $\bs[0]$ \end{tabular}};
		\node[usym, right = of x129]     (x130) {$$};
		\node[nosym, above = 1em of x130]     (x230) {$\bs[T+1]$};
		\node[usym, right = of x130]     (x131) {$$};
		\node[usym, right = of x131]     (x132) {$$};
		\node[usym, right = of x132]     (x133) {$$};
		\node[usym, right = of x133]     (x134) {$$};
		\node[nosym, below = 2em of x134]     (x234) {\begin{tabular}{c} Recover\\ $\bs[1]$ \end{tabular}};
		\node      [right = of x134]     (x1end) {$\cdots$};
		\draw[-latex] (x200.south) -| (x100.north);
		\bracedn{x100}{x104}{-1mm}{\footnotesize{$\bX[0,:]$}};
		\draw[-latex] (x205.south) -| (x105.north);
		\bracedn{x105}{x109}{-1mm}{\footnotesize{$\bX[1,:]$}};
		\draw[-latex] (x210.south) -| (x110.north);
		\bracedn{x110}{x114}{-1mm}{\footnotesize{$\bX[2,:]$}};
		\draw[-latex] (x225.south) -| (x125.north);
		\bracedn{x125}{x129}{-1mm}{\footnotesize{$\bX[T,:]$}};
		\draw[-latex] (x230.south) -| (x130.north);
		\draw[-latex] (x129.south) -| (x229.north);
		\bracedn{x130}{x134}{-1mm}{\footnotesize{$\bX[T+1,:]$}};
		\draw[-latex] (x134.south) -| (x234.north);
	\end{tikzpicture}}
	\caption{Each source symbol $\bs[i]$ arrives just before the transmission of $\bX[i,:]$ and needs to be reconstructed at the destination after a delay of $T$ macro-packets.}
	\label{fig:mismatch}
\end{figure}
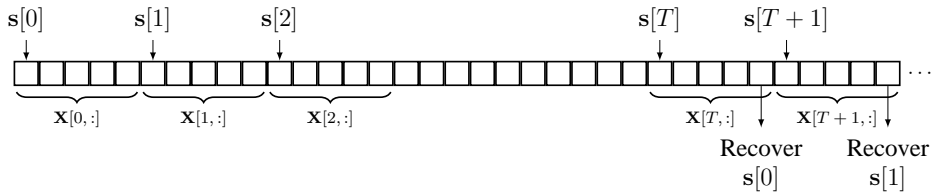
\end{center}

\subsection{Unequal Source-Channel Rates}
\label{subsec:uneq-rates}
We discuss a generalization of the setup in Section~\ref{subsec:eq-rates} where one source symbol arrives every $M$ channel uses. The alphabets of source and channel symbols are as before. For convenience, the collection of $M$ channel symbols is termed as a macro-packet. The index of each macro-packet is denoted using the letter $i$ i.e.,
\begin{align}
\bX[i,:] = \left[\bx[i,1] ~|~ \dots~|~ \bx[i,M]\right] \in \mathbb{F}_q^{n \times M}
\label{eq:macro-packet}
\end{align}
denotes the macro-packet $i$ consisting of $M$ channel symbols. At the start of macro-packet $i$, the encoder observes the source symbol $\bs[i] \in {\mathbb F}_q^k$ and generates $M$ symbols $\bx[i,j] \in {\mathbb F}_q^n$, for $j \in \{1,\dots,M\}$ which can depend on all the observed source packets up to that time i.e.,
\begin{equation}
\bx[i,j] = f_{i,j}(\bs[0], \bs[1], \cdots, \bs[i]).
\end{equation} 
These symbols are transmitted in the $M$ time-slots corresponding to the macro-packet $i$. Fig.~\ref{fig:mismatch} shows the system model. Note that for the case when $M=1$ the setup reduces to that in Section~\ref{subsec:eq-rates}.

The $j$th channel output symbol in the macro-packet $i$ is denoted by $\by[i,j]$. When the channel input is not erased, we have, $\by[i,j]=\bx[i,j]$, whereas when the channel input is erased, $\by[i,j]=\star$. The channel output macro-packets are expressed as $\bY[i,:] = \left[\by[i,1] ~|~ \dots ~|~ \by[i,M] \right]$. The decoder is required to decode each source symbol with a maximum delay of $T$ macro-packets i.e.,
\begin{equation}
 \bs[i] = g_i(\bY[0,:], \bY[1,:], \cdots, \bY[i+T,:]).
\end{equation}


We note that the rate of this code is given by $R = \frac{k}{Mn}$. In this definition, we are normalizing the rate with the size of each macro-packet. The rate expression is motivated by the fact that $nM$ channel symbols are transmitted over the channel for each $k$ source symbols.

\begin{defn}[Streaming Capacity - Unequal Source-Channel Rates]
A rate $R$ is achievable with a delay of $T$ macro-packets over $\cC(N,B,W)$ if there exists a streaming code of this rate over some field-size $q$ such that every source symbol $\bs[i]$ can be decoded with a delay of $T$ macro-packets. The largest achievable rate is the capacity.
\end{defn}
For the above setup, the capacity has been obtained when $N=1$ and $W \ge M(T+1)$.
\begin{thm}
\label{thm:capacity}
For the channel $\cC(N=1, B, W)$, and any $M$ and delay $T$, such that $W \ge M(T+1)$, the capacity $C$ is expressed as follows,
\begin{align}
\label{eq:capacity}
C = \left\{ \begin{array}{ll} 
\frac{T}{T+b}, & B' \leq \frac{b}{T+b}M,~T \geq b, \\ 
\frac{M(T+b+1)-B}{M(T+b+1)}, & B' > \frac{b}{T+b}M,~T > b,\\ 
\frac{M-B'}{M}, & B' > \frac{M}{2},~T = b,\\
0, & T < b,
\end{array}\right.
\end{align}
where the constants $b$ and $B'$ be defined via
\begin{align}B =bM+B', \quad B' \in \{0,\dots,M-1\}, b \in \mathbb{N}^0,\label{eq:B-split} \end{align}
$\hfill\Box$
\end{thm}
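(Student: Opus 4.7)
The proof of Theorem~\ref{thm:capacity} splits along the four regimes of the capacity formula, and in each regime requires both a converse (rate upper bound) and an achievability (explicit construction). My plan is to establish each converse by exhibiting an erasure pattern that $\cC(1,B,W)$ allows and bounding the rate via a degrees-of-freedom count between source dimensions and available (non-erased) channel dimensions, possibly aided by a genie. The achievability would be obtained from a layered / diagonally interleaved extension of the $M=1$ Martinian--Sundberg (MS) burst-erasure streaming code, leveraging the same philosophy used elsewhere in the paper.

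For the converse, the two degenerate cases are immediate. When $T < b$, a single burst of length $bM \le B$ starting at the beginning of macro-packet $0$ erases macro-packets $0,\ldots,b-1$ entirely; since $T \le b-1$, the entire decoding window $[0,T]$ for $\bs[0]$ lies inside the burst, so $\bs[0]$ is undecodable and $C=0$. When $T = b$ and $B' > M/2$, a burst of length $B$ at the start erases macro-packets $0,\ldots,b-1$ fully and the first $B'$ symbols of macro-packet $T$; the remaining $M-B'$ available symbols in $[0,T]$, combined with a genie revealing $\bs[1],\ldots,\bs[T]$, give a map from $\bs[0] \in \mathbb{F}_q^k$ into $\mathbb{F}_q^{n(M-B')}$, forcing $R \le (M-B')/M$. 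For the two main regimes, I would use a periodic burst pattern with period $P$ respecting the channel's separation constraint $a_{k+1} \ge b_k + W$, and tune the burst length and macro-packet alignment (possibly using a burst of length $bM < B$ in Case~1, or aligning the burst to straddle $b+1$ macro-packets in Case~2) so that the per-period count of non-erased channel dimensions $n(P-B)$ against source dimensions $k \cdot P/M$ collapses to exactly $T/(T+b)$ and $(M(T+b+1)-B)/(M(T+b+1))$ respectively. If the naive periodic pattern is not sharp on its own, the fallback is to combine the pattern with the column-span tradeoff established in Appendix~\ref{app:distance-span} (Prop.~\ref{prop:cTdT-tradeoff}) to close the gap.

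On the achievability side, the plan is to build an explicit code. When $B' = 0$, each macro-packet acts as a super-symbol over $\mathbb{F}_q^n$, and the original MS construction of rate $T/(T+b)$ directly corrects bursts of $b$ super-symbols with delay $T$. When $B' > 0$, additional parities are embedded within the last macro-packet so that the partial erasure is recoverable inside the existing delay budget; this is essentially a diagonally interleaved embedding of a shorter MS code, with the parity layout chosen so that for Case~1 the partial erasure is absorbed without rate loss and for Case~2 the construction degrades gracefully to a $(T+b+1)$-macro-packet code. The hardest step will be verifying that the code handles \emph{every} alignment of a length-$B$ burst within macro-packet boundaries while meeting the delay deadline of every source packet, and simultaneously showing that the converse's periodic pattern is genuinely permissible under the $W$-window constraint; the precise form of the case split, and the exact matching at the boundary $B' = bM/(T+b)$, is what pins down the capacity formula.
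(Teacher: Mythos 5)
Your converse plan is essentially the paper's. The two main regimes are handled, as you anticipate, by two periodic burst patterns with periods $T+b+1$ and $T+b$ macro-packets (erasing the first $B$, respectively $bM$, channel symbols of each period), and the rate bound is the ratio of unerased to total channel dimensions; the $T<b$ and $T=b$, $B'>M/2$ cases are the trivial and entropy arguments you sketch. One caution: the periodic patterns are \emph{not} in general admissible under the strict sliding-window definition of $\cC(1,B,W)$, since the gap between successive bursts may be shorter than $W-1$. The paper does not claim that the pattern ``is allowed''; it argues recursively that any code meeting the delay $T$ over $\cC(1,B,W)$ must also decode the periodic channel (recover all erased packets in the first period by their deadlines, treat them as if never erased, repeat). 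Your phrasing ``exhibiting an erasure pattern that $\cC(1,B,W)$ allows'' would therefore need to be replaced by this recursive argument, and the clause about ``respecting the separation constraint $a_{k+1} \ge b_k + W$'' is a red herring.

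The genuine gap is in achievability. You propose a ``diagonally interleaved embedding of a shorter MS code,'' but that is precisely the baseline that the paper shows to be suboptimal: splitting each source symbol into $M$ sub-symbols and running the MS code at delay $T'=MT$ over the expanded stream yields rate $\frac{MT}{MT+B}$ as in~\eqref{eq:R-SCO}, which matches the capacity only when $B'=0$ and falls strictly short whenever $B'>0$. The capacity-achieving code is not a diagonal re-interleaving. It is the Generalized MS construction of Section~\ref{subsec:gms} (source split into $(\bu_\mrm{vec},\bv_\mrm{vec})$, Strongly-MDS parities $\bp_\mrm{vec}$ on $\bv_\mrm{vec}$, repetition layer $\bq_\mrm{vec}[i]=\bp_\mrm{vec}[i]+\bu_\mrm{vec}[i-T]$), followed by a \emph{reshaping} step~\eqref{eq:UVP}--\eqref{eq:X} that arranges each macro-packet as $[\bU[i,:]\,|\,\bV[i,:]\,|\,\bQ[i,:]]$ with all $\bu$-columns at the front and the $\bq$-columns at the back (and in reversed order). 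This ordering is what produces Case~1: when $B'\le \frac{b}{T+b}M$, a burst that starts at the beginning of a macro-packet and overflows into macro-packet $b$ erases only $\bu$-sub-symbols there, so the number of usable $\bp_\mrm{vec}$ parities in the recovery window does not drop as $B'$ grows, and the rate stays at $T/(T+b)$ independent of $B'$. Without this mechanism, and without Lemma~\ref{lem:mdp_recovery_mismatch} (that the worst burst alignment is at a macro-packet boundary, so that all $M$ offsets are covered), the achievability cannot meet the converse at the boundary $B'=\frac{b}{T+b}M$, and your plan would leave a gap precisely in the regime where $B'\in\left(0,\frac{b}{T+b}M\right]$.
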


The proof of Theorem~\ref{thm:capacity} is divided into two main parts. The code construction is illustrated in section~\ref{subsec:mismatch-construction} while the converse appears in section~\ref{subsec:mismatch-converse}.

The constructions in Theorem~\ref{thm:capacity} only apply to the burst-erasure channel. Based on the layered approach in Theorem~\ref{thm:midas} we also propose a robust construction for the case when $N>1$ in Section~\ref{subsec:robust-ext}. However the optimal construction is left for future work.

\section{Performance Analysis of Baseline Schemes}
\label{sec:background}

We review two constructions --- Strongly-MDS codes and Maximally Short codes --- that have been proposed in earlier works. While the achievable rate of these codes in the present setup can be far from optimal, they constitute important building blocks in our proposed constructions.


\subsection{Strongly-MDS Codes}
\label{subsec:strongly-mds}

Classical erasure codes are designed for maximizing the underlying distance properties. Roughly speaking, such codes will recover all the missing source symbols simultaneously once sufficiently many parity-check symbols have been received at the decoder. Indeed a commonly used family of such codes, {\em random-linear codes}, see e.g.,~\cite{ho-med, roundrobin}, are designed to guarantee that the underlying system of equations is full rank with high probability. We discuss one particular class of {\em deterministic code} constructions with optimal distance properties~\cite{strongly-mds-2, strongly-mds} in this section.

Consider a $(\bar{n},\bar{k},\bar{m})$ convolutional code that maps an input source stream $\bs[i] = (s_0[i],\dots,s_{\bar{k}-1}[i])^\dagger \in {\mathbb F}_q^{\bar{k}}$ to an output $\bx[i] = (x_0[i],\dots,x_{\bar{n}-1}[i])^\dagger \in {\mathbb F}_q^{\bar{n}}$ using a memory $\bar{m}$ encoder\footnote{We use $^\dagger$ to denote the vector/matrix transpose operation. Throughout this paper, we will treat $\bs[i]$ and $\bx[j]$ as column vectors and therefore $\bs^\dagger[i]$ and $\bx^\dagger[j]$ denote the associated row vectors. For convenience, we will not use the $^\dagger$ notation when the dimensions are clear.}. In particular let
\begin{align}
\bx[i] = \left(\sum_{t=0}^{\bar{m}} \bs^\dagger[{i-t}] \cdot \bG_{t}\right)^\dagger, \label{eq:conv-code}
\end{align}
where $\bG_0,\ldots, \bG_{\bar{m}}$ are ${\bar{k}\times \bar{n}}$ matrices with elements in ${\mathbb F}_q$. 
Furthermore the convolutional code is systematic\footnote{Throughout the paper, we only consider systematic Strongly-MDS codes and thus the word systematic is dropped for convenience}
 if we can express each sub-generator matrix in the following form,
\begin{align}
\label{eq:systematic}
\bG_0=[\bI_{\bar{k}\times \bar{k}}~ \bH_0], \qquad \bG_t = [{\bf 0}_{\bar{k} \times \bar{k}}~\bH_t],~t =1,\ldots,\bar{m}
\end{align}
where $\bI_{\bar{k}\times \bar{k}}$ denotes the ${\bar{k}\times \bar{k}}$ identity matrix, ${\bf 0}_{\bar{k} \times \bar{k}}$ denotes the ${\bar{k}\times \bar{k}}$ zero matrix, and ${\bH_t \in {\mathbb F}_q^{\bar{k} \times (\bar{n}-\bar{k})}}$ for $t=0,1,\dots,\bar{m}$. For a systematic convolutional code,~\eqref{eq:conv-code} reduces to 
\begin{align}
\bx[i] = \left[\begin{array}{c}\bs[i] \\ \bp[i]\end{array}\right], \qquad \bp[i] = \left(\sum_{t=0}^{\bar{m}} \bs^\dagger[i-t] \cdot\bH_t\right)^\dagger. \label{eq:conv-code-sys}
\end{align}

The Strongly-MDS codes (see e.g.~\cite[Corollary~2.5]{strongly-mds}), correspond to a certain choice of $\bH_t$ that result in the following error correction properties in the streaming setup.

\begin{lemma}
\label{lem:mds-sub}
Consider a systematic $(\bar{n},\bar{k},\bar{m})$ Strongly-MDS code and suppose that the sub-symbols in $\bx[i]$ i.e.,
\begin{align}
\bx[i] = \left(s_0[i], \ldots, s_{\bar{k}-1}[i], p_0[i],\ldots, p_{\bar{n}-\bar{k}-1}[i]\right) \label{eq:strong-mds-systematic}
\end{align}
are transmitted sequentially in the time interval $[i\cdot \bar{n}, (i+1)\cdot \bar{n} -1]$ over the channel\footnote{Note that in this statement we are only transmitting sub-symbols over ${\mathbb F}_q$ over the channel. Subsequently, we will adapt these properties for transmitting symbols over $({\mathbb F}_q)^{\bar{n}}$}. 
The following properties hold for each $j =0,1,\ldots,\bar{m}$.
\begin{enumerate}
\item[L1.] If $\hat{N} \le (\bar{n}-\bar{k})(j+1)$ transmitted sub-symbols are erased in the interval $[0,(j+1)\bar{n}-1]$, then  $\bs[0] = (s_0[0],\dots,s_{\bar{k}-1}[0])$ can be recovered by time $(j+1)\bar{n}-1$.
\item[L2.] If the channel introduces an erasure-burst of length $\hat{B}$ sub-symbols in the interval $[c, c+\hat{B}-1],$ where $\hat{B} \leq (\bar{n}-\bar{k})(j+1)$ and $0 \le c \le \bar{k}-1,$ then all erased source symbols are recovered by time $(j+1)\bar{n}-1$.
\item[L3.] If the channel introduces an erasure burst of length $\hat{B}$ sub-symbols in the interval $[c, c+\hat{B}-1],$ where $0 \le c \le \bar{k}-1,$ followed by a total of no more than $\hat{I}$ isolated erasures such that $\hat{B}+\hat{I} \le (\bar{n}-\bar{k})(j+1)$, then all the erased symbols in the burst are recovered by time $(j+1)\bar{n}-1$.
\end{enumerate}
\end{lemma}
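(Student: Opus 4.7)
My approach hinges on the defining column distance property of a systematic $(\bar{n},\bar{k},\bar{m})$ Strongly-MDS convolutional code: its column distance profile meets the generalized Singleton bound, i.e., $d_j = (\bar{n}-\bar{k})(j+1)+1$ for every $j \in \{0,1,\dots,\bar{m}\}$. Equivalently, whenever at most $(\bar{n}-\bar{k})(j+1)$ of the first $(j+1)\bar{n}$ transmitted sub-symbols are erased, $\bs[0]$ is uniquely determined by the received vector in $[0,(j+1)\bar{n}-1]$. Property \textbf{L1} is a direct restatement of this fact with $\hat{N}$ erasures.

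For \textbf{L2} and \textbf{L3}, I would layer an iterative (``peeling'') decoding argument on top of the column distance property, exploiting the systematic and time-invariant structure of the code. Starting at $i=0$, I apply the column distance property at delay $j$ to recover $\bs[0]$: the total erasure count in $[0,(j+1)\bar{n}-1]$ equals $\hat{B}$ for L2 or is at most $\hat{B}+\hat{I}$ for L3, both bounded by $(\bar{n}-\bar{k})(j+1)$. Once $\bs[0],\dots,\bs[i-1]$ have been recovered, I subtract their known contributions from the parity sub-symbols at positions $\ge i\bar{n}$; by time-invariance of the convolutional code, the residual problem is another copy of the same Strongly-MDS code whose first input is $\bs[i]$ and whose delay-$(j-i)$ column distance is $(\bar{n}-\bar{k})(j-i+1)+1$. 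Inductively this yields $\bs[i]$ from the shifted window $[i\bar{n},(j+1)\bar{n}-1]$, provided the residual erasure count in that window does not exceed $(\bar{n}-\bar{k})(j-i+1)$.

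The crux is verifying this residual count. Because the burst begins at position $c \le \bar{k}-1$, at least $i\bar{n}-c$ burst sub-symbols lie in $[0,i\bar{n}-1]$ whenever $i\ge 1$, so at most $\hat{B}-(i\bar{n}-c)$ burst erasures survive in $[i\bar{n},(j+1)\bar{n}-1]$; for L3, the isolated erasures contribute at most an additional $\hat{I}$ (and in fact all lie strictly after the burst). Substituting the hypothesis $\hat{B}+\hat{I}\le (\bar{n}-\bar{k})(j+1)$ reduces the required inequality to $c\le i\bar{k}$, which holds for all $i\ge 1$ since $c\le\bar{k}-1<\bar{k}\le i\bar{k}$. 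The induction therefore proceeds up to $i=\lfloor (c+\hat{B}-1)/\bar{n}\rfloor$, precisely the largest index whose source portion overlaps the burst; intermediate $\bs[i]$ whose own source sub-symbols happen to be untouched are already known from the systematic form and cost nothing in the induction. Property \textbf{L2} then emerges as the $\hat{I}=0$ special case of \textbf{L3}.

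The main obstacle I anticipate is exactly this erasure-accounting step: one must carefully track how much of the burst has been ``consumed'' by the time the peeling decoder reaches each $\bs[i]$, and confirm that this, together with whatever isolated erasures still sit inside the shifted window, does not exceed the column distance budget $(\bar{n}-\bar{k})(j-i+1)$ of the shifted code. The hypothesis $c\le\bar{k}-1$ (burst initiating inside the source portion of $\bs[0]$) is sharp for the argument; moving the burst's start further into the parity portion would cause the critical inequality $c\le i\bar{k}$ to fail already at $i=1$, and the peeling would break down.
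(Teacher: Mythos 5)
Your proof follows the same route as the paper: L1 is read off the sub-symbol-level column distance profile $d_j^c = (\bar{n}-\bar{k})(j+1)+1$ of a Strongly-MDS code, and L2/L3 are obtained by the same peeling argument over the shifted windows $[i\bar{n},(j+1)\bar{n}-1]$, with the same residual-erasure accounting showing the needed inequality reduces to $c\le i\bar{k}$ (the paper verifies the instances $i=1,2$ explicitly, and you verify the general $i$; the two checks are equivalent). The only cosmetic difference is that you present L2 as the $\hat{I}=0$ special case of L3 whereas the paper proves L2 first and then notes L3 is the analogous argument; this is a matter of ordering, not substance.
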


\begin{proof}
See Appendix~\ref{app:strongly-mds}.
\end{proof}

We now discuss how the properties in Lemma~\ref{lem:mds-sub} can be applied to our system model. In the case in Section~\ref{subsec:eq-rates} when the source and channel-transmission rates are equal, Lemma~\ref{lem:mds-sub} immediately yields the following.


\begin{corol}
\label{cor:mds}
Consider a systematic $(\bar{n},\bar{k},\bar{m})$ Strongly-MDS code of rate $R = \frac{\bar{k}}{\bar{n}}$ which transmits the entire channel symbol $\bx[i] = (x_0[i],\dots,x_{\bar{n}-1}[i]) \in \mathbb F_q^{\bar{n}}$ in time-slot $i$. For each $j =0,1,\ldots,\bar{m}$, we have the following,
\begin{enumerate}
\item[P1.] Suppose that in the window $[0,j],$ the channel introduces $N \le (1-R)(j+1)$ erasures in arbitrary locations, then $\bs[0]$ is recovered by time $t=j$.
\item[P2.] Suppose an erasure burst happens in the interval $[0,B-1],$ where $B \le (1-R)(j+1)$, then all the symbols $\bs[0],\ldots, \bs[B-1]$ are simultaneously recovered by time $t=j$.
\end{enumerate}
$\hfill\Box$
\end{corol}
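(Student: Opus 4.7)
My plan is to derive Corollary~\ref{cor:mds} as a direct consequence of Lemma~\ref{lem:mds-sub} by establishing a correspondence between the two transmission models. In Lemma~\ref{lem:mds-sub}, the $\bar{n}$ sub-symbols of $\bx[i]$ are transmitted sequentially over the sub-symbol time slots $[i\bar{n}, (i+1)\bar{n}-1]$, whereas in the corollary the entire vector $\bx[i]$ is transmitted in a single macro-time-slot $i$. The key observation is that an erasure of the macro-symbol $\bx[i]$ corresponds exactly to the erasure of all $\bar{n}$ consecutive sub-symbols $x_0[i],\ldots,x_{\bar{n}-1}[i]$, and that the recovery deadline at macro-time $t=j$ coincides with sub-symbol time $(j+1)\bar{n}-1$.

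For P1, I would take the $N$ macro-symbol erasures in the window $[0,j]$ and re-interpret them as $\hat{N} = N\bar{n}$ sub-symbol erasures in the interval $[0,(j+1)\bar{n}-1]$. The hypothesis $N \le (1-R)(j+1)$ is then equivalent to $\hat{N} \le (\bar{n}-\bar{k})(j+1)$, which is precisely the condition required to invoke property L1 in Lemma~\ref{lem:mds-sub}. This guarantees recovery of $\bs[0]$ by sub-symbol time $(j+1)\bar{n}-1$, i.e., by macro-time $t=j$.

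For P2, I would similarly translate the macro-symbol burst in $[0,B-1]$ into a sub-symbol burst of length $\hat{B}=B\bar{n}$ starting at sub-symbol position $c=0$. Since $c=0 \le \bar{k}-1$ trivially, and since $B \le (1-R)(j+1)$ translates to $\hat{B} \le (\bar{n}-\bar{k})(j+1)$, property L2 of Lemma~\ref{lem:mds-sub} applies. Because the burst covers the source sub-symbols of every macro-symbol $\bs[0],\ldots,\bs[B-1]$, this yields their simultaneous recovery by macro-time $t=j$.

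Given that the result is essentially a reindexing of Lemma~\ref{lem:mds-sub}, I do not anticipate any serious obstacle. The only detail requiring care is verifying that the burst-starting-position constraint $c \le \bar{k}-1$ in L2 is automatically satisfied when the macro-level burst begins at time zero, and that the conversion $(1-R)(j+1)\bar{n} = (\bar{n}-\bar{k})(j+1)$ between the macro- and sub-symbol erasure budgets is applied consistently in both parts.
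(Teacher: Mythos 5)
Your proposal is correct and follows essentially the same route as the paper: both proofs translate the $N$ (resp.\ $B$) macro-symbol erasures into $N\bar{n}$ (resp.\ $B\bar{n}$) sub-symbol erasures, note that the macro-time deadline $t=j$ corresponds to sub-symbol time $(j+1)\bar{n}-1$, and then invoke L1 (for P1) and L2 with $c=0$ (for P2) from Lemma~\ref{lem:mds-sub}. No substantive difference.
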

\begin{proof}
Since the channel erases entire symbols, an erasure of $N$ symbols is equivalent to the erasure of $\hat{N} = \bar{n}N \le (\bar{n}-\bar{k})(j+1)$ sub-symbols in Lemma~\ref{lem:mds-sub}. Property L1 in Lemma~\ref{lem:mds-sub} guarantees that $\bs[0]$ is recovered by time $t = (j+1)\bar{n}-1$, when any $(\bar{n}-\bar{k})(j+1)$ {\em sub-symbols} are erased in the interval $[0, (j+1)\bar{n}-1]$. It immediately follows that if no more than $\frac{\bar{n}-\bar{k}}{\bar{n}}(j+1)$ {\em symbols} in the interval $[0,j]$ are erased, the symbol $\bs[0]$ can be recovered. Furthermore, the interval $[0, (j+1)\bar{n}-1]$ consisting of $(j+1)\bar{n}$ sub-symbols in Lemma~\ref{lem:mds-sub} corresponds to the interval $[0,j]$ consisting of $j+1$ symbols in Corollary~\ref{cor:mds} and thus $\bs[0]$ is recovered by time $j$. Thus Property P1 follows upon substituting $R = \frac{\bar{k}}{\bar{n}}$. 

Property P2 follows in an analogous fashion upon using property L2 in Lemma~\ref{lem:mds-sub} with $c=0$.

\end{proof}

From Corollary~\ref{cor:mds}, it follows that any $(N,B)$ pair that satisfies
\begin{align}
N \le (1-R)(T+1), \quad B \le (1-R)(T+1) \label{eq:NB-MDS}
\end{align}
is achieved using a $(n,k,T)$ Strongly-MDS code of rate $R = \frac{k}{n}$ with delay $T$ and $W \ge T+1$. In particular, if the channel introduces up to $(1-R)(T+1)$ erasures in the window $[0,T]$, it follows from Property P1 in Corollary~\ref{cor:mds} that $\bs[0]$ is recovered at $t=T$. Once $\bs[0]$ has been recovered, its effect can be subtracted out from all parity-checks involving $\bs[0]$. By the same property, $\bs[1]$ is guaranteed to be recovered at time ${t=T+1}$. This argument can be successively repeated until all the erased symols are recovered. Furthermore upon substituting $B=N$ in~\eqref{eq:r-ub}, we note that the Strongly-MDS attain one extreme point on the tradeoff, namely when $N=B$. This is clearly the largest feasible value of $N$ in~\eqref{eq:r-ub}.

In a similar fashion, it can be shown that for the case of unequal source-channel rates in Section~\ref{subsec:uneq-rates} , when $W \ge M(T+1)$, any $(N, B)$ is achievable that satisfies
\begin{align}
N \le M(1-R)(T+1), \quad B \le M(1-R)(T+1). \label{eq:r-MDS-mismatch}
\end{align}
We will omit the details as they are very similar to the justification of~\eqref{eq:NB-MDS}. 

\subsection{Maximally Short (MS) Codes}
\label{subsec:maximally-short}
While the Strongly-MDS codes achieve the extreme point of the upper bound~\eqref{eq:r-ub} corresponding to $N=B$, the Maximally Short (MS) codes
achieve the other extreme point, corresponding to $N=1$. In particular the maximum value of $B$ with $N=1$ is given in the following result.\footnote{The construction in~\cite{MartinianS04,MartinianT07} only consider a single erasure burst during the entire duration. However it easily follows that the resulting codes can correct multiple erasure bursts provided that the separation between them is at-least $T$ symbols or equivalently $W \ge T+1$.} 
\begin{lemma}[Martinian and Sundberg~\cite{MartinianS04}]
\label{lem:ms}
Consider the channel $\cC(N=1,B, W)$ with $W \ge {T+1}$ and $M=1$. There exists an MS code of rate $R$ satisfying
\begin{align}
\label{eq:B-UB}
R = \begin{cases}
\frac{T}{T+B}, & B \le T, \\
0, &\text{else}.
\end{cases}
\end{align}
$\hfill\Box$
\end{lemma}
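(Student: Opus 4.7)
The plan is two-fold: dispatch the degenerate case $B > T$ with an impossibility argument, and for $B \le T$ exhibit an MS code attaining rate $T/(T+B)$. When $B > T$, a burst erasing $[0, B-1] \supseteq [0, T]$ leaves the decoder with no information about $\bs[0]$ by its deadline $T$, forcing rate zero. For $B \le T$, I would construct a systematic convolutional code with source alphabet $\mathbb{F}_q^T$ and channel alphabet $\mathbb{F}_q^{T+B}$, giving $R = T/(T+B)$ by construction. Each channel symbol has the form $\bx[i] = (\bs[i], \bp[i])$ with $\bp[i] \in \mathbb{F}_q^B$ a causal linear combination of the most recent $T$ source symbols. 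Following Martinian and Sundberg~\cite{MartinianS04}, I would split $\bs[i]$ into an ``urgent'' part $\bv[i] \in \mathbb{F}_q^B$ and a ``non-urgent'' part $\bu[i] \in \mathbb{F}_q^{T-B}$, and choose the parity coefficient matrices so the two sub-streams interleave into an MDS-like diagonal structure compatible with the per-symbol deadlines.

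The main step is to verify that any burst of length $B$ at $[j, j+B-1]$ is decoded within delay $T$ per symbol. A matching count shows that the $T$ unerased parities received in $[j+B, j+T+B-1]$ provide $BT$ equations in the $BT$ erased source sub-symbols. The MS coefficient choice is engineered so that these equations resolve in the right order: the urgent parts $\bv[j+\ell]$ are isolated first using parities close to the burst, after which the non-urgent parts $\bu[j+\ell]$ are recovered from later parities, so that $\bs[j+\ell]$ is fully decoded by its deadline $j+\ell+T$ for each $\ell = 0, \ldots, B-1$. Multiple bursts are handled by the guard condition $W \ge T+1$: since $N=1$ forbids two disjoint erasure clusters inside any $W$-window, consecutive bursts must be separated end-to-start by at least $W - 1 \ge T$ clean symbols, which is exactly the buffer needed to complete decoding of one burst before the next begins; induction on the burst index extends the argument to the whole stream.

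The main obstacle is establishing the full-rank decoding system at each partial window of parities, that is, showing that after substituting the known systematic source symbols from times $\ge j + B$, the remaining equations yield each $\bs[j+\ell]$ within the prescribed delay. This is the algebraic heart of the Martinian-Sundberg design and is accomplished by a Reed-Solomon-style choice of coefficients over a field $\mathbb{F}_q$ of size $q \ge T+B$.
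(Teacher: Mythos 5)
Your impossibility argument for $B>T$ and your guard-interval argument for multiple bursts are both correct. But your source split has the two parts' sizes (and consequently their decoding roles) reversed, and this is not just a naming issue — it breaks the parity accounting for all $B\le T$ except $T=2B$. In the construction the paper gives as Proposition~\ref{prop:gen-MS} (which, for $W\ge T+1$, is exactly this lemma; the paper itself merely cites~\cite{MartinianS04} for Lemma~\ref{lem:ms} and then supplies its own self-contained construction), each $\bs[i]\in\mathbb{F}_q^T$ is split into $\bu[i]\in\mathbb{F}_q^{B}$ and $\bv[i]\in\mathbb{F}_q^{T-B}$, and the size-$B$ parity is $\bq[i]=\bp^v[i]+\bu[i-T]$: a size-$B$ Strongly-MDS parity of the $\bv$-stream superimposed with a size-$B$ delayed repetition of $\bu$. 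After a burst on $[0,B-1]$, the $T-B$ clean parities at times $B,\dots,T-1$ (each of size $B$, and each with the non-erased repeated $\bu[\cdot-T]$ cancelled) supply exactly $(T-B)B$ equations for the $(T-B)B$ erased $\bv$ sub-symbols, so all $\bv[0],\dots,\bv[B-1]$ are recovered by time $T-1$; then at time $T+\ell$ the parity $\bq[T+\ell]$, once $\bp^v[T+\ell]$ is computed and subtracted, yields $\bu[\ell]$ of size exactly $B$ — matching the size of the parity — for $\ell=0,\dots,B-1$. With your assignment ($\bv\in\mathbb{F}_q^{B}$ recovered first, $\bu\in\mathbb{F}_q^{T-B}$ recovered at the deadline), the deadline-time parity has size $B$ but must reveal a symbol of size $T-B$, which is impossible whenever $T>2B$; and if instead you try to make $\bv$ the MDS-protected part, the clean parities before the first deadline supply $(T-B)B$ equations for $B^2$ erased $\bv$ sub-symbols, which is insufficient whenever $T<2B$. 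So you should swap the sizes: the MDS-protected, simultaneously-recovered part gets $T-B$ sub-symbols and the delayed-repetition, sequentially-recovered part gets $B$.

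One further point of comparison: you frame the construction as Martinian--Sundberg's block code followed by diagonal interleaving, whereas the paper's route deliberately skips the block-code intermediate step and works directly with a systematic convolutional code built from a Strongly-MDS parity plus a shifted repetition. The Strongly-MDS property (Lemma~\ref{lem:mds-sub} / Corollary~\ref{cor:mds}) then handles the ``full-rank at each partial window'' step that you identify as the algebraic heart, so you do not need to re-derive Reed--Solomon-style coefficient conditions from scratch. Once you swap the sizes, adopting that framework would close the gap you flagged as the main obstacle.
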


Furthermore, $R$ in~\eqref{eq:B-UB} is the maximum achievable rate for $\cC(N=1,B,W \ge T+1)$ channel.

The construction of MS codes presented in~\cite{MartinianS04,MartinianT07} involves first constructing a specific low-delay block code and then converting it into a streaming code using a diagonal interleaving technique. Thus the problem of constructing a streaming code is reduced to the problem of constructing a block code with certain properties. While such a simplification is appealing, unfortunately it does not appear to easily generalize when seeking extensions of MS codes.
Note that the above MS codes can only achieve $N=1$ and are highly sensitive to isolated losses over the channel. In~\cite{MartinianS04} some examples of codes with higher $N$ were reported using a numerical search but a general approach for constructing robust streaming codes remained elusive.
In Section~\ref{subsec:gms}, we present an alternative perspective that easily extends to achieve a near optimal rate for any $(N,B)$.

\begin{center}
\begin{figure}[t]
	\centering
	\resizebox{\columnwidth}{!}{
	\begin{tikzpicture}[node distance=1mm]
		\node[usym,minimum height = 3em,minimum width = 4em]  										(x100) {$\bx[i,1]$};
		\node[usym, right = of x100,minimum height = 3em,minimum width = 4em]     (x101) {$\bx[i,2]$};
		\node[usym, right = of x101,minimum height = 3em,minimum width = 4em]     (x102) {$\dots$};
		\node[usym, right = of x102,minimum height = 3em,minimum width = 4em]     (x103) {$\dots$};
		\node[usym, right = of x103,minimum height = 3em,minimum width = 4em]     (x104) {$\bx[i,M]$};
		\node[usym, above = 3em of x100,minimum height = 3em,minimum width = 4em]     (w100) {$\bw[i,1]$};
		\node[usym, above = 3em of x101,minimum height = 3em,minimum width = 4em]     (w101) {$\bw[i,2]$};
		\node[usym, above = 3em of x102,minimum height = 3em,minimum width = 4em]     (w102) {$\dots$};
		\node[usym, above = 3em of x103,minimum height = 3em,minimum width = 4em]     (w103) {$\dots$};
		\node[usym, above = 3em of x104,minimum height = 3em,minimum width = 4em]     (w104) {$\bw[i,M]$};
		\node[nosym, font=\LARGE, above = 3em of w100]    												(s100) {$\bs[i]$};
		\node[nosym, below = 2em of x100,minimum width = 4em]    (rw100) {$$};
		\node[nosym, below = 2em of rw100,minimum width = 4em]   (rs100) {$$};
		
		\node[usym, right = of x104,minimum height = 3em,minimum width = 4em]  		(x200) {$\bx[i+1,1]$};
		\node[usym, right = of x200,minimum height = 3em,minimum width = 4em]     (x201) {$\bx[i+1,2]$};
		\node[usym, right = of x201,minimum height = 3em,minimum width = 4em]     (x202) {$\dots$};
		\node[usym, right = of x202,minimum height = 3em,minimum width = 4em]     (x203) {$\dots$};
		\node[usym, right = of x203,minimum height = 3em,minimum width = 4em]     (x204) {$\bx[i+1,M]$};
		\node[usym, above = 3em of x200,minimum height = 3em,minimum width = 4em]     (w200) {$\bw[i+1,1]$};
		\node[usym, above = 3em of x201,minimum height = 3em,minimum width = 4em]     (w201) {$\bw[i+1,2]$};
		\node[usym, above = 3em of x202,minimum height = 3em,minimum width = 4em]     (w202) {$\dots$};
		\node[usym, above = 3em of x203,minimum height = 3em,minimum width = 4em]     (w203) {$\dots$};
		\node[usym, above = 3em of x204,minimum height = 3em,minimum width = 4em]     (w204) {$\bw[i+1,M]$};
		\node[nosym, font=\LARGE, above = 3em of w200]     												(s200) {$\bs[i+1]$};
		
		\node[usym, right = of x204,minimum height = 3em,minimum width = 4em]  		(x300) {$$};
		\node[usym, right = of x300,minimum height = 3em,minimum width = 4em]     (x301) {$$};
		\node[usym, right = of x301,minimum height = 3em,minimum width = 4em]     (x302) {$$};
		\node[usym, right = of x302,minimum height = 3em,minimum width = 4em]     (x303) {$$};
		\node[usym, right = of x303,minimum height = 3em,minimum width = 4em]     (x304) {$$};
		
		\node[usym, right = of x304,minimum height = 3em,minimum width = 4em]  		(x500) {$\bx[i+T,1]$};
		\node[usym, right = of x500,minimum height = 3em,minimum width = 4em]     (x501) {$\bx[i+T,2]$};
		\node[usym, right = of x501,minimum height = 3em,minimum width = 4em]     (x502) {$\dots$};
		\node[usym, right = of x502,minimum height = 3em,minimum width = 4em]     (x503) {$\dots$};
		\node[usym, right = of x503,minimum height = 3em,minimum width = 4em]     (x504) {$\bx[i+T,M]$};
		\node[usym, above = 3em of x500,minimum height = 3em,minimum width = 4em]     (w500) {$\bw[i+T,1]$};
		\node[usym, above = 3em of x501,minimum height = 3em,minimum width = 4em]     (w501) {$\bw[i+T,2]$};
		\node[usym, above = 3em of x502,minimum height = 3em,minimum width = 4em]     (w502) {$\dots$};
		\node[usym, above = 3em of x503,minimum height = 3em,minimum width = 4em]     (w503) {$\dots$};
		\node[usym, above = 3em of x504,minimum height = 3em,minimum width = 4em]     (w504) {$\bw[i+T,M]$};
		\node[nosym, font=\LARGE, above = 3em of w500]     												(s500) {$\bs[i+T]$};
		\node[nosym, below = 2em of x500]     												(rw500) {$\bw[i,1]$};
		\node[nosym, below = 2em of x501]     												(rw501) {$\bw[i,2]$};
		\node[nosym, below = 2.5em of x502]     												(rw502) {$\dots$};
		\node[nosym, below = 2.5em of x503]     												(rw503) {$\dots$};
		\node[nosym, below = 2em of x504]     												(rw504) {$\bw[i,M]$};
		\node[nosym, font=\LARGE, below = 3em of rw504]     												(rs504) {$\bs[i]$};
		
		\node      [right = of x504]     (x1end) {$\cdots$};
		
		\node[nosym, font=\LARGE, left = 2em of s100]     												(s000) {Source Stream};
		\node[nosym, font=\LARGE, left = 2em of w100]     												(w000) {Expanded Source Stream};
		\node[nosym, font=\LARGE, left = 2em of x100]     												(x000) {Channel Packets};
		\node[nosym, font=\LARGE, left = 2em of rs100]     												(rw000) {Source Recovery};
		\node[nosym, font=\LARGE, left = 2em of rw100]     												(rs000) {Expanded Source Recovery};
		
		\draw [->,rounded corners, very thick] (s100.south) -- (w100.north);
		\draw [->,rounded corners, very thick] (s100.south) -- (w101.north);
		\draw [->,rounded corners, very thick] (s100.south) -- (w102.north);
		\draw [->,rounded corners, very thick] (s100.south) -- (w103.north);
		\draw [->,rounded corners, very thick] (s100.south) -- (w104.north);
		
		\draw [->,rounded corners, very thick] (s200.south) -- (w200.north);
		\draw [->,rounded corners, very thick] (s200.south) -- (w201.north);
		\draw [->,rounded corners, very thick] (s200.south) -- (w202.north);
		\draw [->,rounded corners, very thick] (s200.south) -- (w203.north);
		\draw [->,rounded corners, very thick] (s200.south) -- (w204.north);
		
		\draw [->,rounded corners, very thick] (s500.south) -- (w500.north);
		\draw [->,rounded corners, very thick] (s500.south) -- (w501.north);
		\draw [->,rounded corners, very thick] (s500.south) -- (w502.north);
		\draw [->,rounded corners, very thick] (s500.south) -- (w503.north);
		\draw [->,rounded corners, very thick] (s500.south) -- (w504.north);
		
		\draw [->,rounded corners, very thick] (x500.south) -- (rw500.north);
		\draw [->,rounded corners, very thick] (x501.south) -- (rw501.north);
		\draw [->,rounded corners, very thick] (x502.south) -- (rw502.north);
		\draw [->,rounded corners, very thick] (x503.south) -- (rw503.north);
		\draw [->,rounded corners, very thick] (x504.south) -- (rw504.north);

		\draw [->,rounded corners, very thick] (rw500.south) -- (rs504.north);
		\draw [->,rounded corners, very thick] (rw501.south) -- (rs504.north);
		\draw [->,rounded corners, very thick] (rw502.south) -- (rs504.north);
		\draw [->,rounded corners, very thick] (rw503.south) -- (rs504.north);
		\draw [->,rounded corners, very thick] (rw504.south) -- (rs504.north);
		
		\braceup{x100}{x104}{1mm}{{$\bX[i,:]$}};
		\braceup{x200}{x204}{1mm}{{$\bX[i+1,:]$}};
		\braceup{x500}{x504}{1mm}{{$\bX[i+T,:]$}};

	\end{tikzpicture}}
	\caption{Each source symbol $\bs[i]$ is split into $M$ sub-packets i.e., $\bs[i] = (\bw[i,1],\bw[i,2],\dots,\bw[i,M])$. The expanded source stream is then encoded using a Maximally-Short code. 	The decoder recovers each $\bw[i,j]$ once $\by[i+T,j]$ is received which ensures that $\bs[i]$ is recovered by the end of the macro-packet $i+T$.}
	\label{fig:MS-Unequal}
\end{figure}
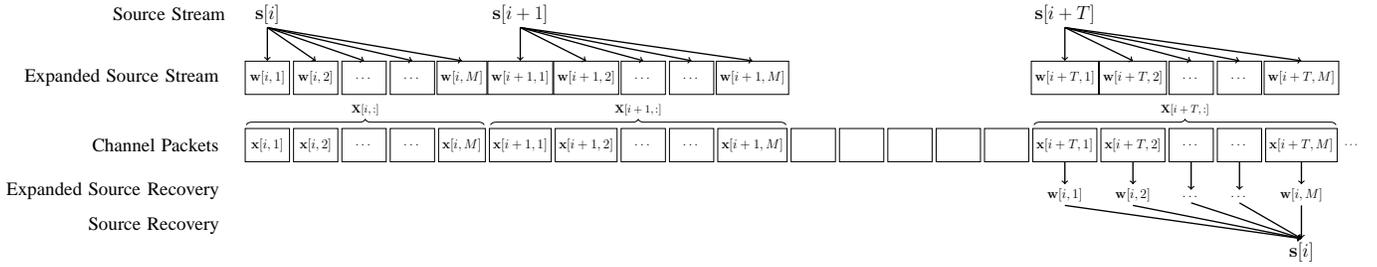
\end{center}

For the case of unequal source-channel rates, a straightforward adaptation of the MS codes is as follows. We split each symbol $\bs[i]$ into $M$ sub-symbols, one for each time-slot in the macro-packet and then apply a MS code in Lemma~\ref{lem:ms} to this expanded source stream with delay $T' = MT$ (cf. Fig.~\ref{fig:MS-Unequal}). In particular we assume that $\bs[i] \in ({\mathbb F}_q)^{kM}$ and proceed as follows.
\begin{itemize}
\item Split each $\bs[i] = (\bw[i,1],\ldots, \bw[i,M])$ where $\bw[i,j] \in {\mathbb F}_q^k$.
\item Apply a MS code in Lemma~\ref{lem:ms} for the $\cC(N=1, B, W)$ channel with delay $T' = MT$ (channel symbols) and $W \ge M(T+1)$.
\item Transmit the associated channel symbol $\bx[i,j] \in {\mathbb F}_q^n$ in slot $j$ of the macro-packet $i$.
\end{itemize}
From~\eqref{eq:B-UB} we have that
\begin{align}
R = \frac{MT}{MT+B} =\frac{T}{T+b + \frac{B'}{M}}
\label{eq:R-SCO}
\end{align} 
is achievable when $B \le MT$. Note that in the second equality in~\eqref{eq:R-SCO}, we use~\eqref{eq:B-split}. 
Note that the delay of $T' = M\cdot T$ channel symbols in the expanded stream, implies that  $\bw[i,j]$ is recovered when $\by[i+T, j]$ is received for each $j \in \{1,2,\ldots, M\}$. Thus the entire source symbol $\bs[i]$ is guaranteed to be recovered at the end of macro-packet $i+T$, thus satisfying the delay constraint. We note that the rate in~\eqref{eq:R-SCO} is only positive if $B \le MT$ and attains the capacity in Theorem~\ref{thm:capacity} in the special case when $B'=0$. If $B> MT$ the above construction is not feasible and the rate attained is zero. 

\section{Equal Source-Channel Rates}
\label{sec:midas}

In this section, we consider the case when source-arrival and channel-transmission rates are equal i.e., $M=1$. We start by establishing the upper-bound in Theorem~\ref{thm:Chan-1-UB} in Section~\ref{subsec:match-ch1-converse}. In Section~\ref{subsec:gms}, we give a generalized construction for Maximally Short codes which are optimal over the $\cC(N=1,B,W)$ channel. We provide a robust extension of these codes -- MiDAS codes -- in Section~\ref{subsec:midas} and study their optimality. In Section~\ref{subsec:midas-field}, we provide an alternative construction of MiDAS codes achieving the same tradeoff in Theorem~\ref{thm:midas} but with a smaller field-size. Finally, we compare the performance of the two constructions through an example in Section~\ref{subsec:midas-nonideal}.

\subsection{Upper-bound}
\label{subsec:match-ch1-converse}

To establish the upper bound in Theorem~\ref{thm:Chan-1-UB}, we separately consider the cases where $W \ge {T+1}$
and $W < T+1$. When $W \ge T+1$, consider a periodic erasure channel with a period of ${\tau_{P} = T+B-N+1}$ and suppose that in every such period the first
${B}$ symbols are erased (see Fig.~\ref{fig:Chan1-PEC}). While such a channel is not included in $\cC(N,B,W)$, we nonetheless show that any code for $\cC(N,B,W)$ and delay $T$ is also feasible for the proposed periodic erasure channel.\footnote{A similar converse argument involving periodic erasure channel is also presented in \cite{MartinianS04,de-sco}. For an information theoretic argument, we refer the reader to~\cite{LuiMASc,mu-sco}, but it will not be presented in this paper.}

Consider the first period that spans the interval $[0, \tau_P-1]$. We note the following
\begin{itemize}
\item The first $B-N+1$ symbols, $\{ \bs[i] \}_{0 \le i \le B-N}$, must be all recovered with delay $T$ since the recovery window $[i,i+T]$ of each such symbol only have a burst of length ${B}$ or smaller. Thus all these symbols are recovered by time $t=\tau_P-1$.
\item The recovery window of each of the $N-1$ symbols, $\{ \bs[i] \}_{B-N+1 \le i \le B-1}$ is $[i,i+T]$ which sees two bursts. The first burst spans $[i, B-1]$ and is of length $B-i$. The second burst spans $[T+B-N+1, i+T]$ and is of length $i+N-B$. Thus the total number of erased symbols in each recovery period is exactly $N$. Thus any feasible code over the $\cC(N,B,W)$ channels guarantees that each such symbol is also recovered at time $i+T$.
\item The recovery window of each of the remaining symbols in the first period, $\bs[B],\ldots, \bs[\tau_P-1]$, again sees a single-erasure burst of length $B$ at the end of the window. Hence, each of these symbols is also guaranteed to be recovered with delay $\le T$, in particular, by time $\tau_P -1$.
\end{itemize}
We have thus shown that all the symbols in the first period spanning $[0,\tau_P-1]$ can be recovered with delay $T$. We can repeat the same argument for all the remaining periods and thus the claim follows. Thus using the capacity of the periodic erasure channel, we have
\begin{align}
R \le 1- \frac{B}{T+B-N+1}.
\label{eq:Rbnd}
\end{align}

\begin{figure*}
    \centering
	\resizebox{0.75\columnwidth}{!}{
	\begin{tikzpicture}[node distance=0mm]
		\node                       (x1start) {Link:};
		\node[esym, right = of x1start]  (x100) {};
		\node[esym, right = of x100]     (x101) {};
		\node[esym, right = of x101]     (x102) {};
		\node[esym, right = of x102]     (x103) {};
		\node[esym, right = of x103]     (x104) {};
		\node[esym, right = of x104]     (x105) {};
		\node[usym, right = of x105]     (x106) {};
		\node[usym, right = of x106]     (x107) {};
		\node[usym, right = of x107]     (x108) {};
		\node[usym, right = of x108]     (x109) {};
		\node[usym, right = of x109]     (x110) {};
		\node[esym, right = of x110]     (x111) {};
		\node[esym, right = of x111]     (x112) {};
		\node[esym, right = of x112]     (x113) {};
		\node[esym, right = of x113]     (x114) {};
		\node[esym, right = of x114]     (x115) {};
		\node[esym, right = of x115]     (x116) {};
		\node[usym, right = of x116]     (x117) {};
		\node[usym, right = of x117]     (x118) {};
		\node[usym, right = of x118]     (x119) {};
		\node[usym, right = of x119]     (x120) {};
		\node[usym, right = of x120]     (x121) {};
		\node[esym, right = of x121]     (x122) {};
		\node[esym, right = of x122]     (x123) {};
		\node[esym, right = of x123]     (x124) {};
		\node[esym, right = of x124]     (x125) {};
		\node[esym, right = of x125]     (x126) {};
		\node[esym, right = of x126]     (x127) {};
		\node[usym, right = of x127]     (x128) {};
		\node[usym, right = of x128]     (x129) {};
		\node[usym, right = of x129]     (x130) {};
		\node[usym, right = of x130]     (x131) {};
		\node[usym, right = of x131]     (x132) {};
		\node      [right = of x132]     (x1end) {$\cdots$};
		\dimup{x100}{x105}{2mm}{$B$};
		\dimdn{x100}{x103}{-2mm}{$B-N+1$};
		\dimup{x106}{x110}{2mm}{$T-N+1$};
		\dimup{x111}{x116}{2mm}{$B$};
		\dimdn{x111}{x114}{-2mm}{$B-N+1$};
		\dimup{x117}{x121}{2mm}{$T-N+1$};
		\dimup{x122}{x127}{2mm}{$B$};
		\dimdn{x122}{x125}{-2mm}{$B-N+1$};
		\dimup{x128}{x132}{2mm}{$T-N+1$};
	\end{tikzpicture}}
	 \caption{The periodic erasure channel in the  proof  Theorem~\ref{thm:Chan-1-UB}. The shaded symbols are erased while the remaining ones are received by the destination.}
	\label{fig:Chan1-PEC}
\end{figure*}
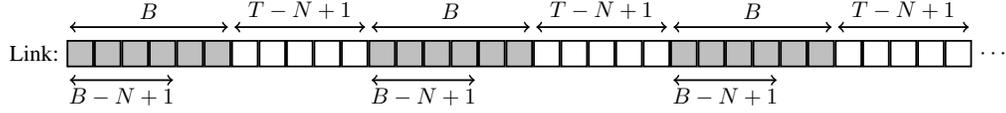

For the case when ${W < T+1}$, we consider a periodic erasure channel with a period of $\tau_P = {W+B-N}$ where in each period the first ${B}$ symbols are erased and the remaining $W-N$ symbols are not erased. Such a channel by construction is a $\cC(N,B,W)$ channel. In any window $\cW_i = [i,i+W-1]$ of length $W$ there exists either a single burst of maximum length $B$, or up to $N$ isolated erasures. Thus every erased symbol on such a channel must be recovered i.e., we have that
\begin{align}
R \le 1- \frac{B}{W+B-N}.
\label{eq:Rbnd2}
\end{align}

Rearranging~\eqref{eq:Rbnd} and~\eqref{eq:Rbnd2} and using $T_\mrm{eff} = \min(W-1, T)$, we easily recover~\eqref{eq:r-ub}. This completes the proof of the upper bound.

\subsection{Generalized MS Codes}
\label{subsec:gms}

In this section we present a generalization of the MS codes introduced in section~\ref{subsec:maximally-short}.
The proposed construction applies to any ${W\ge B+1}$, and eliminates the intermediate step of constructing a block code. 
\begin{prop}
\label{prop:gen-MS}
Let $T_\mrm{eff} \defeq \min(W-1, T)$. For the $\cC(N=1, B, W)$ channel, there exists a streaming code with delay $T$ and rate 
\begin{align}
R = 
\begin{cases}
\frac{T_\mrm{eff}}{T_\mrm{eff}+B}, & \Te \ge B \\
0, & \text{else}.
\end{cases}
\label{eq:gen-MS}
\end{align}
$\hfill\Box$
\end{prop}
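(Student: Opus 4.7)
The plan is to apply the Maximally Short (MS) construction from Lemma~\ref{lem:ms} with a design delay equal to $T_\mrm{eff}$, rather than the nominal delay $T$. Specifically, I would set $T' \defeq T_\mrm{eff} = \min(T, W-1)$ and invoke Lemma~\ref{lem:ms} with this delay. Since $T' \ge B$ by hypothesis, this yields a code of rate $T'/(T'+B)$, exactly matching~\eqref{eq:gen-MS}. The remainder of the argument has two steps: (i) show that the resulting code is feasible over $\cC(N=1,B,W)$, and (ii) verify that its decoding delay does not exceed $T$.

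When $W \ge T+1$ we have $T' = T$, so $W \ge T'+1$ and Lemma~\ref{lem:ms} applies directly, with the delay constraint trivially met. The interesting case is $W \le T$, where $T' = W-1$ and $T'+1 = W$. Here I would observe that the MS code of delay $T'$ can tolerate any erasure pattern in which consecutive bursts of length at most $B$ are separated so that no window of length $T'+1 = W$ contains more than one burst (this is the multi-burst extension of Lemma~\ref{lem:ms} noted in the paper's footnote). But this is precisely the defining constraint of the channel $\cC(N=1, B, W)$. Hence every admissible erasure pattern produced by the channel lies in the class that the MS code can correct. Moreover, $T' = W - 1 \le T$, so the decoding delay is within budget.

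The only nontrivial ingredient is the exact matching between the guard separation required by an MS code of delay $T'$ and the window length $W$ that characterizes the channel. Once this compatibility is noted, the rate $(W-1)/(W-1+B) = T_\mrm{eff}/(T_\mrm{eff}+B)$ achieved in this regime follows directly from Lemma~\ref{lem:ms}, completing the argument. The main obstacle to anticipate in a careful write-up is making the multi-burst extension of Lemma~\ref{lem:ms} explicit, so that one can cleanly verify that consecutive bursts permitted by the channel's $W$-window constraint are correctly handled in series by the MS decoder.
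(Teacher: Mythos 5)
Your argument is correct, but it takes a genuinely different route from the paper's. You establish existence by re-parameterizing Lemma~\ref{lem:ms}: set the design delay to $T' = T_\mrm{eff}$, invoke the multi-burst extension noted in that lemma's footnote, and observe that the channel's $W$-window constraint supplies exactly the required guard separation $T' = W-1$, while $T' \le T$ keeps the delay within budget. The paper never invokes Lemma~\ref{lem:ms} in proving Proposition~\ref{prop:gen-MS}; instead it builds a fresh code --- the Generalized MS construction --- by splitting each source symbol into $(\bu[i], \bv[i])$, applying a Strongly-MDS code to the $\bv[\cdot]$ stream, and superimposing a $\Te$-delayed repetition of $\bu[\cdot]$ onto the resulting parity-checks, with correctness verified through an explicit two-step decoding argument (simultaneous recovery of $\bv[\cdot]$ via Corollary~\ref{cor:mds}, then sequential recovery of $\bu[\cdot]$). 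Your route is shorter, but it leans on the multi-burst footnote of Lemma~\ref{lem:ms}, which the paper asserts without proof, and it inherits the block-code/diagonal-interleaving machinery of the original Martinian--Sundberg codes. The paper deliberately replaces that machinery: the MiDAS construction of Theorem~\ref{thm:midas} is obtained by stacking one more Strongly-MDS layer on top of precisely this $(\bu,\bv,\bq)$ structure, and the paper remarks at the end of Section~\ref{subsec:maximally-short} that the original MS codes do not admit such extensions. So your proof settles the existence claim as stated, but it does not yield the constructive object that the remainder of Section~\ref{sec:midas} is built upon.
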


The encoding steps are as follows,

\begin{itemize}
\item \textbf{Source Splitting:} Split each source symbol $\bs[i] \in {\mathbb F}_q^{k}$ into two groups $\bu[i] \in {\mathbb F}_q^{k^u}$ and $\bv[i] \in {\mathbb F}_q^{k^v}$ as follows,
\begin{align}
\bs[i] = (\underbrace{u_0[i],\ldots, u_{k^u-1}[i]}_{=\bu[i]}, \underbrace{v_0[i],\ldots, v_{k^v-1}[i]}_{=\bv[i]}),
\label{eq:s-split}
\end{align}
where $k^u+k^v=k$, i.e., $\bu[i]$ constitutes the first $k^u$ sub-symbols in $\bs[i]$ whereas $\bv[i]$ constitutes the remaining $k^v$ sub-symbols.

\item \textbf{Strongly-MDS Parity-Checks:} Apply a $(k^u+k^v, k^v, \Te)$ Strongly-MDS code of rate $R^v = \frac{k^v}{k^u+k^v}$ on the symbols $\bv[i]$ and generate parity-check symbols
\begin{align}
\bp^v[i] = \left(\sum_{j=0}^{\Te} \bv^\dagger[i-j]\cdot \bH^v_j\right)^\dagger, \quad \bp^v[i] \in {\mathbb F}_q^{k^u} \label{eq:pv-conc},
\end{align}
where the matrices $\bH_j^v \in {\mathbb F}_q^{k^v \times k^u}$ are associated with a Strongly-MDS code~\eqref{eq:systematic}.
\item \textbf{Repetition Code:} Superimpose the $\bu[\cdot]$ symbols onto $\bp^v[\cdot]$ and let \begin{align}
\bq[i] = \bp^v[i] + \bu[i-\Te]. \label{eq:MS-q}
\end{align}
\item \textbf{Channel Packet Generation:} Concatenate the generated parity-checks to the source symbols so that the channel input at time $i$ is given by $\bx[i] = \left(\bu[i],\bv[i],\bq[i]\right) \in {\mathbb F}_q^n$, where $n = 2k^u+k^v$.
\end{itemize}


In our construction discussed above, we select $k = \Te$, $k^u = B$, $k^v = \Te-B$ and $n = \Te +B$. Clearly the rate of the proposed code $R = \frac{k}{n}$ matches the expression in~\eqref{eq:gen-MS}.

For decoding, we suppose that the first erasure burst of length $B$ spans the interval $[0,B-1]$. We are then guaranteed that there are no additional erasures in the interval $[B, T_\mrm{eff}+B-1]$ for the $\cC(N=1, B,W)$ channel. We claim that each $\bs[0],\bs[1],\ldots, \bs[B-1]$ is recovered by time $t = T_\mrm{eff}, \Te+1,\ldots, T_\mrm{eff} + B-1$ respectively.

The decoder proceeds in two steps. 
\begin{itemize}
\item {\em{Simultaneously recover}} $\bv[0],\dots,\bv[B-1]$ by time $t = \Te-1$. In this step the decoder proceeds as follows. For each $j \in \{B,\ldots, \Te-1\}$, the decoder recovers the parity-check symbols $\bp^v[j]$, by subtracting the unerased $\bu[j-\Te]$ from the associated $\bq[j]=\bp^v[j] + \bu[j-\Te]$ symbols. These recovered parity-checks can then be used to recover $\bv[0],\dots,\bv[B-1]$. 

Note that using property P2 in Corollary~\ref{cor:mds} and substituting $R = R^v$ and $j=\Te-1$ we get,
\begin{align}
\label{eq:simultaneous}
(1-R^v)\Te = B,
\end{align}
and hence the recovery of $\bv[0],\dots,\bv[B-1]$ by time $t = \Te-1$ is guaranteed.
\item {\em{Sequentially recover}} $\bu[0],\dots,\bu[B-1]$ at times $\Te,\dots,\Te+B-1$, respectively. Consider the parity-checks $\bq[j] = \bu[j-\Te] + \bp^v[j]$ for $j \in \{\Te, \dots, \Te+B-1\}$, which are available to the decoder. Upon the recovery of $\bv[0],\dots,\bv[B-1]$ in the previous step, the required $\bp^v[j]$ can be computed, subtracted from $\bq[j]$, and the underlying $\bu[\cdot]$ symbols can be sequentially decoded by their deadlines.

\end{itemize}
Upon completion of the two steps stated above, the recovery of $\bs[i]$ for $i \in \{0,\ldots, B-1\}$ follows. Any subsequent burst, starting at time $t \ge \Te + B$, can be corrected in a similar fashion. Since the rate of the code is clearly given by~\eqref{eq:gen-MS}, the proof of Prop.~\ref{prop:gen-MS} is complete.

The Generalized MS code is no longer feasible when $N>1$. To see this consider two isolated erasures one at $t=0$ and the other at $t=\Te$. In this case both $\bu[0]$ as well as its repeated copy are erased. We propose a modification to these codes that can deal with any value of $N$ in Theorem~\ref{thm:midas}.

\subsection{MiDAS: Code Construction}
\label{subsec:midas}

\begin{figure}[t]
\resizebox{\columnwidth}{!}{
\begin{tikzpicture}[node distance=1mm,
  ]
  \tikzstyle{triple2} = [rectangle split, anchor=text,rectangle split parts=4]
  \tikzstyle{double2} = [rectangle split, anchor=text,rectangle split parts=2]
  \tikzstyle{triple} = [draw, rectangle split,rectangle split parts=4]
	\tikzstyle{double} = [draw, rectangle split,rectangle split parts=2]
	\tikzset{block/.style={rectangle,draw}}
	
	
	\node[triple2, minimum width=1.6cm] (start) {$k^u$ Sub-symbols
    \nodepart{second}
      $k^v$ Sub-symbols
    \nodepart{third}
     \tikz{\node[double2] {$k^u$ \nodepart{second}Sub-symbols};}
		\nodepart{fourth}
		$k^s$ Sub-symbols
  };
	
  \node[triple,  right = of start,fill=light-gray,minimum width=1.6cm] (p1) {$\bu[0]$
    \nodepart{second}
      $\bv[0]$
    \nodepart{third}
      \tikz{\node[double2] {$\bu[-\Te]$ \nodepart{second}$+\bp^v[0]$};}
			\nodepart{fourth}
			$\bp^u[0]$
  };
  
  \node[triple, right = of p1,fill=light-gray,minimum width=1.6cm] (p11) {$\bu[1]$
    \nodepart{second}
      $\bv[1]$
    \nodepart{third}
      \tikz{\node[double2] {$\bu[-\Te+1]$ \nodepart{second}$+\bp^v[1]$};}
			\nodepart{fourth}
			$\bp^u[1]$
  };
  
  \node[triple, font=\color{light-gray}, right = of p11,fill=light-gray,minimum width=1.6cm] (p2) {$\bu[0]$
    \nodepart{second}
      $\bv[0]$
    \nodepart{third}
      \tikz{\node[double2] {$\bu[-\Te]$ \nodepart{second}$+\bp^v[0]$};}
			\nodepart{fourth}
			$\bp^u[0]$
  };
	
  \node[triple,  right = of p2,fill=light-gray,minimum width=1.6cm] (p3) {$\bu[B-1]$
    \nodepart{second}
      $\bv[B-1]$
    \nodepart{third}
      \tikz{\node[double2] {$\bu[-\Te+B-1]$ \nodepart{second}$+\bp^v[B-1]$};}
			\nodepart{fourth}
			$\bp^u[B-1]$
  };
  
  \node[triple,  right = of p3,minimum width=1.6cm] (p4) {$\bu[B]$
    \nodepart{second}
      $\bv[B]$
    \nodepart{third}
      \tikz{\node[double2] {$\bu[-\Te+B]$ \nodepart{second}$+\bp^v[B]$};}
			\nodepart{fourth}
			$\bp^u[B]$
  };
  
  \node[triple, font=\color{white}, right = of p4,minimum width=1.6cm] (p5) {$\bu[\Te]$
    \nodepart{second}
      $\bv[\Te]$
    \nodepart{third}
      \tikz{\node[double2] {$\bu[0]+$ \nodepart{second}$\bp^v[\Te]$};}
			\nodepart{fourth}
			$\bp^u[\Te]$
  };
	
  \node[triple,  right = of p5,minimum width=1.6cm] (p6) {$\bu[\Te-1]$
    \nodepart{second}
      $\bv[\Te-1]$
    \nodepart{third}
      \tikz{\node[double2] {$\bu[-1]+$ \nodepart{second}$\bp^v[\Te-1]$};}
			\nodepart{fourth}
			$\bp^u[\Te-1]$
  };
  
  \node[triple, right = of p6,minimum width=1.6cm] (p7) {$\bu[\Te]$
    \nodepart{second}
      $\bv[\Te]$
    \nodepart{third}
      \tikz{\node[double2] {$\bu[0]+$ \nodepart{second}$\bp^v[\Te]$};}
			\nodepart{fourth}
			$\bp^u[\Te]$
  };
  
  \bracedn{p1}{p3}{-2mm}{\footnotesize{Erased Packets}};
  \bracedn{p4}{p6}{-2mm}{\footnotesize{Recover $\bv[0],\cdots ,\bv[B-1]$ using $\bp^v[\cdot]$}};
  \bracedn{p7}{p7}{-2mm}{\footnotesize{Recover $\bu[0]$}};

\end{tikzpicture}}
\caption{A window of $\Te+1$ channel packets showing the decoding steps of a MiDAS code when an erasure burst takes place. In the case of isolated erasures, the $\bv[\cdot]$ and $\bu[\cdot]$ symbols are recovered separately using the $\bp^v[\cdot]$ parities in the window $[0,\Te-1]$ and $\bp^u[\cdot]$ parities in the window $[0,\Te]$, respectively. 
}
\label{fig:midas-const}
\end{figure}

Our proposed construction is based on a layered approach. As illustrated in Fig.~\ref{fig:midas-const}, we first construct a Generalized MS code for $\cC(N=1,B,W)$ channel and then concatenate an additional layer of parity symbols when $N > 1$. We again assume that $\bs[i] \in {\mathbb F}_q^{k}$ and split it into two groups $\bu[i]$ and $\bv[i]$ as in~\eqref{eq:s-split} and generate the parity-checks $\bq[i]$ as in~\eqref{eq:MS-q}. The resulting code up to this point can only correct burst erasures. 
We further apply a $(k^u+k^s,k^u,\Te)$ Strongly-MDS code of rate $R^u = \frac{k^u}{k^u+k^s}$ to the $\bu[i]$ symbols and generate additional parity-check symbols,
\begin{align}
\bp^u[i] = \left(\sum_{j=0}^{\Te} \bu^\dagger[i-j]\cdot \bH^u_j \right)^\dagger, \qquad \bp^u[i] \in {\mathbb F}_q^{k^s} \label{eq:pu-conc},
\end{align}
where $\bH^u_j \in {\mathbb F}_q^{k^u \times k^s}$ are matrices associated with a Strongly-MDS code~\eqref{eq:systematic}. We simply concatenate the parity-checks $\bq[i]$ and $\bp^u[i]$ with the source symbols i.e., ${\bx[i] = \left(\bu[i], \bv[i], \bq[i], \bp^u[i]\right)}$. Fig.~\ref{fig:midas-const} illustrates the layered approach in our code construction.
Note that $\bx[i] \in {\mathbb F}_q^{n}$, where $n = 2k^u+k^v+k^s$ and the associated rate is given by $R = \frac{k}{n}$.

In our construction, we select $k^u = B$, $k^v = \Te-B$, $k = k^u+k^v = \Te$ and,
\begin{align}
k^s = \frac{N}{\Te-N+1}k^u,\label{eq:K-opt}
\end{align}

\begin{remark}
We note that if the value of $k^s$ in~\eqref{eq:K-opt} is non-integer, extra source splitting by a certain factor of $m$ is needed. In particular, we set $k^u = mB$, $k^v = m(\Te-B)$, $k = k^u+k^v = m\Te$ and $k^s = \frac{N}{\Te-N+1} k^u = \frac{N}{\Te-N+1} mB$. It can be clearly seen that choosing $m = \Te-N+1$ is sufficient for $k^s$ to be an integer.
\end{remark}

\subsubsection*{Decoder Analysis}
In the analysis of the decoder, we consider the interval $[0,\Te]$ and show that the decoder can recover $\bs[0]$ by time $t=\Te$ if there is either an erasure burst of length $B$ or smaller, or up to $N$ isolated erasures in this interval. Once we show the recovery of $\bs[0]$ by time $t=\Te$, we can cancel its effect from all future parity-check symbols if necessary. The same argument can then be used to show that $\bs[1]$ can be recovered by time ${\Te+1}$ if there are no more than $N$ isolated erasures or a single burst erasure of maximum length $B$ in the interval $[1,\Te+1]$. Recursively continuing this argument we are guaranteed the recovery of each $\bs[i]$ by time ${i+\Te}$.

If there is a burst of length $B$ in the interval $[0,\Te]$ our construction of $\bq[\cdot]$ already guarantees the recovery of $\bs[0]$ by time $t=\Te$ (cf.~Section~\ref{subsec:gms}). Thus we only need to consider the case when there are $N$ isolated erasures in the interval $[0,\Te]$. We show that the decoder is guaranteed to recover $\bv[0]$ at time $t=\Te-1$
using the parity-checks $\bq[\cdot]$ and $\bu[0]$ at time $t=\Te$ using the parity-checks $\bp^u[\cdot]$. 

The recovery of $\bv[0]$ by time $\Te-1$ follows in a fashion similar to the {\emph{simultaneous recovery}} step above~\eqref{eq:simultaneous} in the previous section. However, we use P1 in corollary~\ref{cor:mds} instead. Recall from~\eqref{eq:MS-q} that $\bq[i] = \bp^v[i] + \bu[i-\Te],$ where $\bp^v[i]$ are the parity-checks of the Strongly-MDS code~\eqref{eq:pv-conc}. Since the interfering $\bu[\cdot]$ symbols in the interval $[0,\Te-1]$ are not erased, they can be canceled out by the decoder from $\bq[\cdot]$ and the corresponding parity-checks $\bp^v[\cdot]$ are recovered at the decoder. Since the code $(\bv[i], \bp^v[i])$ is a Strongly-MDS code of rate $R^v = \frac{\Te-B}{\Te}$, applying property P1 in Corollary~\ref{cor:mds} the number of isolated erasures under which the recovery of $\bv[0]$ is possible is given by $N^v = (1-R^v)\Te= B$. Since $N \le B$ holds, the recovery of $\bv[0]$ by time ${t=\Te-1}$ is guaranteed by the code construction.

For recovering $\bu[0]$ at time $t=\Te$, we use the $\bp^u[\cdot]$ parity-checks in the interval $[0,\Te]$. Note that the associated code
$(\bu[i], \bp^u[i])$ is a Strongly-MDS code with rate $R^u = \frac{k^u}{k^u+k^s}$ and hence it follows from P1 in Corollary~\ref{cor:mds} that the number of isolated erasures under which the recovery of $\bu[0]$ is possible is given by
\begin{align}
(1-R^u)(\Te+1) = \frac{k^s}{k^s+k^u}(\Te+1) = N,
\end{align}
where we substitute~\eqref{eq:K-opt} in the last equality. The rate of the code satisfies
\begin{align}
R &= \frac{k^u+k^v}{2k^u+k^v+k^s} = \frac{\Te}{\Te+B + B\frac{N}{\Te-N+1}}\label{eq:K-sub}\\
&> \frac{\Te}{\Te+B + B\frac{N}{\Te-N}} \notag\\
&= \frac{\Te-N}{\Te-N+B}\label{eq:R-lb-final}
\end{align}
where~\eqref{eq:K-sub} follows by substituting in~\eqref{eq:K-opt}. Rearranging~\eqref{eq:R-lb-final}
we have that
\begin{align}
\frac{R}{1-R}B + N > \Te.\label{eq:midas-lb}
\end{align}
The proof of Theorem~\ref{thm:midas} is thus completed.
$\hfill\blacksquare$

\subsubsection*{Example - MiDAS $(N,B,T) = (2,3,4)$ and $W \ge T+1 = 5$}
\begin{table}[tp]
	\centering
\scriptsize{
		\begin{tabular}{c|c|c|c|c|c}
			& $[i]$ 			& $[i+1]$ 			& $[i+2]$ 			& $[i+3]$ 			& $[i+4]$ 			\\ \hline
			\multirow{3}{*}{$k^u = 3$}
			& $u_{0}[i]$ 	& $u_{0}[i+1]$	& $u_{0}[i+2]$	& $u_{0}[i+3]$	& $u_{0}[i+4]$ 	\\
			& $u_{1}[i]$ 	& $u_{1}[i+1]$	& $u_{1}[i+2]$	& $u_{1}[i+3]$	& $u_{1}[i+4]$ 	\\
			& $u_{2}[i]$ 	& $u_{2}[i+1]$	& $u_{2}[i+2]$	& $u_{2}[i+3]$	& $u_{2}[i+4]$ 	\\\hline
			\multirow{1}{*}{$k^v = 1$}
			& $v_{0}[i]$ 	& $v_{0}[i+1]$	& $v_{0}[i+2]$	& $v_{0}[i+3]$	& $v_{0}[i+4]$ 	\\\hline
			\multirow{3}{*}{$k^u = 3$}
			& $u_0[i-4]+p^v_{0}[i]$	& $u_0[i-3]+p^v_{0}[i+1]$	& $u_0[i-2]+p^v_{0}[i+2]$		& $u_0[i-1]+p^v_{0}[i+3]$	& $u_0[i]+p^v_{0}[i+4]$	\\
			& $u_1[i-4]+p^v_{1}[i]$	& $u_1[i-3]+p^v_{1}[i+1]$	& $u_1[i-2]+p^v_{1}[i+2]$		& $u_1[i-1]+p^v_{1}[i+3]$	& $u_1[i]+p^v_{1}[i+4]$	\\
			& $u_2[i-4]+p^v_{2}[i]$	& $u_2[i-1]+p^v_{2}[i+1]$	& $u_2[i-2]+p^v_{2}[i+2]$		& $u_2[i-1]+p^v_{2}[i+3]$	& $u_2[i]+p^v_{2}[i+4]$	\\\hline
			\multirow{2}{*}{$k^s = 2$}
			& $p^u_{0}[i]$ 	& $p^u_{0}[i+1]$	& $p^u_{0}[i+2]$	& $p^u_{0}[i+3]$	& $p^u_{0}[i+4]$ 	\\
			& $p^u_{1}[i]$ 	& $p^u_{1}[i+1]$	& $p^u_{1}[i+2]$	& $p^u_{1}[i+3]$	& $p^u_{1}[i+4]$ 	\\\hline
		\end{tabular}}
		\caption{MiDAS code construction for $(N,B) = (2,3)$, a delay of $T=4$ and rate $R = 4/9$.}
		\label{tab:MIDAS_234}
		\vspace{-2em}
\end{table}

Table~\ref{tab:MIDAS_234} illustrates a MiDAS construction for ${(N,B) = (2,3)}$ and ${T=4}$ and $\Te = T$. The encoding steps are as follows,
\begin{itemize}
\item Split each source symbol $\bs[i]$ into $k = T = 4$ sub-symbols. The first $k^u = B = 3$ sub-symbols are $\bu[i] = (u_0[i],u_1[i],u_2[i])$ while the last $k^v = T-B = 1$ sub-symbol is $v_0[i]$.
\item Apply a $(k^u+k^v,k^v,T) = (4,1,4)$ Strongly-MDS code of rate $R^v = \frac{1}{4}$ to the $v$ sub-symbols generating the $B = 3$ parity-check sub-symbols,
\begin{align}
\label{eq:pv}
\bp^v[i] = (p^v_0[i],p^v_1[i],p^v_2[i]) = \sum_{j=0}^{4} v_0[i-j] \bH^v_j.
\end{align}

\item Combine the $\bu[\cdot]$ symbols with $\bp^v[\cdot]$ symbols and generate $\bq[i] = \bp^v[i]+ \bu[i-T]$. 

\item Apply a $(k^u+k^s,k^u,T) = (5,3,4)$ Strongly-MDS code of rate $R^u = \frac{3}{5}$ to the $u$ symbols generating $k^s = \frac{N}{T-N+1}k^u = 2$ parity-check sub-symbols,
\begin{align}
\bp^u[i] = (p^u_0[i],p^u_1[i]) = \sum_{j=0}^{4} \begin{bmatrix} u_0[i-j] & u_1[i-j] & u_2[i-j] \end{bmatrix} \bH^u_j.
\end{align}.
\end{itemize}
The channel symbol at time $i$ is given by,
\begin{align}
\bx[i] = \left( \bu[i], \bv[i], \bq[i], \bp^u[i] \right),
\end{align}
whose rate is $R = \frac{k^u+k^v}{2k^u+k^v+k^s} = \frac{T}{T+B+\frac{NB}{T-N+1}} = \frac{4}{9}$.

For decoding, first assume that an erasure burst spans the interval $[i, i+2]$. We first recover $p_0^v[i+3],p_1^v[i+3], p_2^v[i+3]$ by subtracting $u_0[i-1],u_1[i-1],u_2[i-1]$ from the parity-check sub-symbols $q_0[i+3],q_1[i+3], q_2[i+3]$ respectively. In the interval $[i,i+T-1] = [i,i+3]$, the channel introduces a burst of length $3$. Thus, the $(4,1,4)$ Strongly-MDS code is capable of recovering the three erased symbols $v_0[i]$, $v_0[i+1]$ and $v_0[i+2]$ by time $i+3$ since $(1-R^v)(T) = 3$. Once all the erased $\bv[t]$ are recovered, we can compute the parity-check symbols $\bp^v[t]$ for $t \in \{i+4,i+5,i+6\}$ and subtract them from the corresponding $\bq[t]$ to recover $\bu[i],\bu[i+1],\bu[i+2]$ at time $i+4,i+5,i+6$ respectively i.e., within a delay of $T = 4$.

In the case of isolated erasures, we consider a channel introducing $N=2$ isolated erasures in the interval $[i,i+4]$ of length $T+1=5$. We first recover the unerased parity-check symbols $\bp^v[\cdot]$ in the interval $[i,i+3]$ by subtracting the corresponding $\bu[\cdot]$ symbols. The $(4,1,4)$ is then capable of recovering $v_0[i]$ by time $i+T-1 = i+3$ since $(1-R^v)T = 3 > 2 = N$. Also, $\bu[0]$ can be recovered by time $i+4$ using the $(5,3,4)$ Strongly-MDS code in the interval $[i,i+4]$ since $(1-R^u)(T+1) = 2 = N$.

\subsection{MiDAS Codes with Improved Field-Size}
\label{subsec:midas-field}

Our constructions in section~\ref{subsec:midas} are based on Strongly-MDS codes~\cite{strongly-mds-2, strongly-mds}. Such codes are guaranteed to exist only when the underlying field-sizes are very large. In particular, the field-size must increase exponentially in $\Te$ except in some special cases~\cite{strongly-mds}.
In this section we suggest an alternative construction that uses block-MDS codes instead of Strongly-MDS codes. This construction requires a field-size that only increases as $\cO(\Te^3)$. While this alternate construction also attains the tradeoff in Theorem~\ref{thm:midas}, it does come at a price. It incurs some performance loss in simulations and is less robust to non-ideal erasure patterns as discussed in Section~\ref{subsec:midas-nonideal}.

\begin{prop}
\label{prop:MiDAS-MDS}
For the channel $\cC(N,B,W)$ and delay $T$, there exists a streaming code of rate $R$ that satisfies~\eqref{b-achiev} in Theorem~\ref{thm:midas} with a field-size that increases as $\cO(\Te^3)$.
$\hfill\Box$
\end{prop}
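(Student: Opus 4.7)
The plan is to replace each of the two Strongly-MDS components used in the MiDAS construction of Section~\ref{subsec:midas} with constructions built from block-MDS codes, such as Reed--Solomon codes. The key observation is that a block-MDS code of length $n$ exists over any field of size at least $n$, whereas Strongly-MDS codes in general require fields of size exponential in the memory $\Te$. The decoder analysis in Section~\ref{subsec:midas} only uses two consequences of Strongly-MDS: property P1 for isolated-erasure recovery, and property P2 for simultaneous burst recovery. I will show that both properties can be reproduced by suitable block-MDS constructions with a field-size that is only polynomial in $\Te$.

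For the outer layer responsible for producing the $\bp^u[\cdot]$ parities, I would use $B$ parallel diagonally-interleaved copies of a systematic $(\Te+1, \Te-N+1)$ Reed--Solomon code. With this arrangement, each time slot produces $B(\Te-N+1) = k^u$ information sub-symbols and $BN = k^s$ parity sub-symbols, matching the original MiDAS counts exactly. Property P1 then follows because each diagonal codeword spans exactly $\Te+1$ consecutive time slots and sees at most $N$ erasures within any such window, which the MDS decoder can correct, delivering the corresponding sub-symbol of $\bu[0]$ by time $\Te$. An analogous diagonally-interleaved scheme based on $m = \Te-N+1$ parallel copies of a $(\Te, \Te-B)$ Reed--Solomon code handles the inner $\bv$-layer in the isolated-erasure case, yielding the recovery of $\bv[0]$ by time $\Te-1$.

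For the burst case, the decoder of Section~\ref{subsec:gms} additionally requires the simultaneous recovery of $\bv[0],\dots,\bv[B-1]$ by time $\Te-1$. A plain diagonal interleaving does not automatically guarantee this, since some codewords that begin inside the burst are not fully received until after time $\Te-1$. I would address this by arranging the $m$ parallel Reed--Solomon copies with a carefully chosen set of time shifts so that, for each $j \in \{0,\ldots,B-1\}$, the burst erasures always fall among the \emph{last} $B$ positions of every codeword carrying a sub-symbol of $\bv[j]$. The $\Te-B$ MDS information and parity symbols of each such codeword, being transmitted no later than time $\Te-1$, are then all received by that deadline, and the MDS property suffices to recover the erased positions, reproducing property P2. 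Once the $\bv[\cdot]$ sub-symbols are in hand, the sequential recovery of $\bu[0],\dots,\bu[B-1]$ at times $\Te,\dots,\Te+B-1$ proceeds by peeling $\bp^v[\cdot]$ off the $\bq[\cdot]$ parities, exactly as in the original Generalized MS decoder.

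Finally, for the field-size bookkeeping, each constituent Reed--Solomon code has length $\cO(\Te)$ and so is defined over a field of size $\cO(\Te)$. The source-splitting factor $m = \cO(\Te)$, together with the need to embed all constituent codes and the superposition operation $\bq[i] = \bp^v[i] + \bu[i-\Te]$ into a common field that simultaneously supports the inner layer, the outer layer, and the MDS code lengths arising across a window of $\Te+1$ slots, gives an overall field-size of $\cO(\Te^3)$. The main obstacle I anticipate is designing the shifted parallel Reed--Solomon structure for the inner layer so that property P2 holds by the exact deadline $\Te-1$: simultaneous burst recovery is the one property for which Strongly-MDS codes are tailor-made, and it has to be reconstructed by hand in the block-MDS setting by matching erasure positions to parity positions within each codeword.
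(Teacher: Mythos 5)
Your choice of building blocks matches the paper's construction exactly: source-split by $m=\Te-N+1$, use $\Te-N+1$ diagonally-interleaved $(\Te,\Te-B)$ block-MDS codes on the $\bv[\cdot]$ stream and $B$ diagonally-interleaved $(\Te+1,\Te-N+1)$ block-MDS codes on the $\bu[\cdot]$ stream, superimpose $\bu[i-\Te]$ on $\bp^v[i]$, and let each sub-symbol live over a field of size $\Theta(\Te)$. Your isolated-erasure argument and your field-size bookkeeping are fine.

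The gap is in how you handle simultaneous burst recovery of $\bv[i],\ldots,\bv[i+B-1]$ by time $i+\Te-1$. You correctly flag the concern that some diagonal codewords starting inside the burst are not fully received until after $i+\Te-1$, but the fix you propose --- choosing time shifts so that the burst erasures always fall among the \emph{last} $B$ positions of every codeword that carries an erased sub-symbol of $\bv[j]$ --- is self-contradictory for a systematic MDS code. The sub-symbol $v_\cdot[j]$ is itself inside the burst, so its own position would have to lie among those last $B$ (parity) positions, yet as an information sub-symbol it must occupy one of the first $\Te-B$ positions of the codeword. More importantly, no such shift is needed. The paper keeps the plain diagonal interleaving and instead does a careful count for each relevant codeword $\bc_j^v[r]$, $r\in\{i+1,\ldots,i+B-1\}$: the positions in $[r,i+B-1]$ are burst-erased (there are $i+B-r$ of them), and the positions in $[i+\Te,r+\Te-1]$ are parities $\bq[\cdot]$ that combine with the erased $\bu_\cdot[i],\ldots,\bu_\cdot[i+B-1]$ and must be discarded regardless of whether they have yet arrived (there are $r-i$ of them). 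These two disjoint sets total exactly $B$, so the remaining $\Te-B$ positions, all lying in $[i+B,i+\Te-1]$ and hence received by time $i+\Te-1$, already suffice for the $(\Te,\Te-B)$ MDS erasure decoder. Codewords with $r\le i$ terminate by $i+\Te-1$ and see exactly the $B$ burst erasures. That counting argument, not a special placement of the burst, is what makes the block-MDS variant meet the same $(N,B)$ deadlines as the Strongly-MDS MiDAS code.
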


We start by giving two examples and then discuss the general code construction. The key step is to replace the Strongly-MDS code in~\eqref{eq:pv-conc} and~\eqref{eq:pu-conc} by two block MDS codes applied diagonally to the $\bv[\cdot]$ and $\bu[\cdot]$ symbols.

\subsubsection{Example -  MiDAS $(N,B,T) = (2,3,4)$ and $W \ge T+1 = 5$}

\begin{table}[tp]
	\centering
\scriptsize{
		\begin{tabular}{c|c|c|c|c|c}
			& $[i]$ 			& $[i+1]$ 			& $[i+2]$ 			& $[i+3]$ 			& $[i+4]$ 			\\ \hline
			\multirow{3}{*}{$k^u = 3$}
			& \fboxrule=1pt \fbox{$u_{0}[i]$} 	& $u_{0}[i+1]$	& $u_{0}[i+2]$	& $u_{0}[i+3]$	& $u_{0}[i+4]$ 	\\
			& $u_{1}[i]$ 	& \fboxrule=1pt \fbox{$u_{1}[i+1]$}	& $u_{1}[i+2]$	& $u_{1}[i+3]$	& $u_{1}[i+4]$ 	\\
			& $u_{2}[i]$ 	& $u_{2}[i+1]$	& \fboxrule=1pt \fbox{$u_{2}[i+2]$}	& $u_{2}[i+3]$	& $u_{2}[i+4]$ 	\\\hline
			\multirow{1}{*}{$k^v = 1$}
			& \fcolorbox{lightgray}{lightgray}{$v_{0}[i]$} 	& $v_{0}[i+1]$	& $v_{0}[i+2]$	& $v_{0}[i+3]$	& $v_{0}[i+4]$ 	\\\hline
			\multirow{3}{*}{$k^u = 3$}
			& $u_0[i-4]+p^v_{0}[i]$	& $u_0[i-3]+$ \fcolorbox{lightgray}{lightgray}{$p^v_{0}[i+1]$}	& $u_0[i-2]+p^v_{0}[i+2]$		& $u_0[i-1]+p^v_{0}[i+3]$	& $u_0[i]+p^v_{0}[i+4]$	\\
			& $u_1[i-4]+p^v_{1}[i]$	& $u_1[i-3]+p^v_{1}[i+1]$	& $u_1[i-2]+$ \fcolorbox{lightgray}{lightgray}{$p^v_{1}[i+2]$}		& $u_1[i-1]+p^v_{1}[i+3]$	& $u_1[i]+p^v_{1}[i+4]$	\\
			& $u_2[i-4]+p^v_{2}[i]$	& $u_2[i-1]+p^v_{2}[i+1]$	& $u_2[i-2]+p^v_{2}[i+2]$		& $u_2[i-1]+$\fcolorbox{lightgray}{lightgray}{$p^v_{2}[i+3]$}	& $u_2[i]+p^v_{2}[i+4]$	\\ \hline
			\multirow{2}{*}{$k^s = 2$}
			& $p^u_{0}[i]$ 	& $p^u_{0}[i+1]$	& $p^u_{0}[i+2]$	& \fboxrule=1pt \fbox{$p^u_{0}[i+3]$}	& $p^u_{0}[i+4]$ 	\\
			& $p^u_{1}[i]$ 	& $p^u_{1}[i+1]$	& $p^u_{1}[i+2]$	& $p^u_{1}[i+3]$	& \fboxrule=1pt \fbox{$p^u_{1}[i+4]$} 	\\\hline
		\end{tabular}}
		\caption{MiDAS code construction for $(N,B) = (2,3)$, a delay of $T=4$ and rate $R = 4/9$ with a  block MDS constituent code.}
		\label{tab:MIDAS_MDS_234}
		\vspace{-2em}
\end{table}

Table~\ref{tab:MIDAS_MDS_234} illustrates a MiDAS construction using MDS as constituent codes. The rate of this code is $R = \frac{T}{T+B+\frac{NB}{T-N+1}} = \frac{4}{9}$ from~\eqref{eq:K-sub}. Note that this code has the same parameters as in Table~\ref{tab:MIDAS_234} in Section~\ref{subsec:midas}. The encoding steps, stated below, are also similar except that the Strongly-MDS codes are replaced with block MDS codes.
\begin{itemize}
\item Split each source symbol $\bs[i]$ into $k = T = 4$ sub-symbols. The first $k^u = B = 3$ sub-symbols are $\bu[i] = (u_0[i],u_1[i],u_2[i])$, while the last $k^v = T-B = 1$ sub-symbol is $v_0[i]$.
\item Apply a $(T,T-B) = (4,1)$ MDS code\footnote{This can be a simple repetition code i.e., $p^v_0[i+1] = p^v_1[i+2] = p^v_2[i+3] = v_0[i]$.} to the $v$ sub-symbols generating $B = 3$ parity-check sub-symbols $\bp^v[i] = (p^v_0[i],p^v_1[i],p^v_2[i])$. Hence, at time $i$ the generated codeword is,
\begin{align}
\bc^v[i] &= (v_0[i], p^v_0[i+1], p^v_1[i+2], p^v_2[i+3]) \label{eq:cv}
\end{align}
and is shown using the shaded boxes in Table~\ref{tab:MIDAS_MDS_234}.

\item Combine the $\bu[\cdot]$ symbols with $\bp^v[\cdot]$ symbols and generate $\bq[t] = \bp^v[t]+ \bu[t-T]$. 

\item Apply a $(T+1,T-N+1) = (5,3)$ MDS code to the $u$ symbols generating $N = 2$ parity-check sub-symbols $\bp^u[i] = (p^u_0[i],p^u_1[i])$. The codeword starting at time $i$ is given by,
\begin{align}
\bc^u[i] &= (u_0[i], u_1[i+1], u_2[i+2], p_0^u[i+3], p_1^u[i+4] ) \label{eq:cu}
\end{align}
and is marked by the unshaded boxes in Table~\ref{tab:MIDAS_MDS_234} for convenience.
\end{itemize}
The channel symbol at time $i$ is given by,
\begin{align}
\bx[i] = \left( \bu[i], \bv[i], \bq[i], \bp^u[i] \right),
\end{align}
whose rate is $R = \frac{3+1}{3+1+3+2} = \frac{4}{9}$ which is consistent with~\eqref{eq:K-sub}.

For decoding, first assume that an erasure burst spans the interval $[i, i+2]$. We first recover $p_0^v[i+3],p_1^v[i+3], p_2^v[i+3]$ at time $t=i+3$ from the parity-check symbols $q_0[i+3],q_1[i+3], q_2[i+3]$. We can use the underlying MDS codes to recover $v_0[i], v_1[i+1], v_2[i+2]$ at time $t=i+3$ by considering $\bc^v[i],\bc^v[i+1],\bc^v[i+2]$ respectively. Once all the erased $\bv[t]$ are recovered, we recover $\bu[i]$ at time $t=i+4$, $\bu[i+1]$ at time $t=i+5$ and $\bu[i+2]$ at time $t=i+6$. 

In the case of isolated erasures, we assume a channel introducing $N=2$ isolated erasures in the interval $[0,4]$ of length $T+1=5$. Note that the codeword $\bc^v[i]$ in~\eqref{eq:cv} terminates at time $t=i+3$. Thus there are no more than $N=2$ erasures on it and thus the recovery of $v_0[i]$ is guaranteed at time $t=i+3$. Likewise the codewords $\bc^u[i-2],\bc^u[i-1],\bc^u[i]$ in~\eqref{eq:cu} combining $u_2[i],u_1[i],u_0[i]$, respectively, terminate at time $t=i+4$ and there are no more than $N=2$ erasures on any of them. Thus the recovery of $u_j[i]$ for $j = 0,1,2$ is guaranteed at time $t=i+4$.

However, splitting each source symbol into $k = T$ sub-symbols is not enough in general. In particular, applying a $(T,T-B)$ MDS code to the $\bv[\cdot]$ symbols requires that the $\bv[\cdot]$ symbols are split into a multiple of $T-B$ sub-symbols. Similarly, applying a $(T+1,T-N+1)$ MDS code to the $\bu[\cdot]$ symbols requires splitting them into a multiple of $T-N+1$ sub-symbols. On the other hand, achieving the tradeoff in~\eqref{b-achiev} requires that the ratio between the size of $\bu[\cdot]$ to $\bv[\cdot]$ to be $\frac{B}{T-B}$. Thus, splitting the $\bu[\cdot]$ symbols to $B(T-N+1)$ sub-symbols and splitting the $\bv[\cdot]$ symbols into $(T-N+1)(T-B)$ sub-symbols fulfills all the previous constraints. The following example illustrates this case.

\subsubsection{Example - MiDAS $(N,B,T) = (2,3,5)$ and $W \ge T+1 = 6$}
\begin{table}[tp]
	\centering
\scriptsize{
		\begin{tabular}{c|c|c|c|c|c|c}
			& $[i]$ 				& $[i+1]$ 			& $[i+2]$ 			& $[i+3]$ 			& $[i+4]$ 		& $[i+5]$ 			\\ \hline
			\multirow{12}{*}{$k^u = 12$}
			& \fboxrule=1pt \fbox{$u_{0}[i]$} 	& $u_{0}[i+1]$	& $u_{0}[i+2]$	& $u_{0}[i+3]$	& $u_{0}[i+4]$	& $u_{0}[i+5]$ 	\\
			& $u_{1}[i]$ 		& $u_{1}[i+1]$	& $u_{1}[i+2]$	& $u_{1}[i+3]$	& $u_{1}[i+4]$	& $u_{1}[i+5]$ 	\\
			& $u_{2}[i]$ 		& $u_{2}[i+1]$	& $u_{2}[i+2]$	& $u_{2}[i+3]$	& $u_{2}[i+4]$	& $u_{2}[i+5]$ 	\\
			& $u_{3}[i]$ 		& \fboxrule=1pt \fbox{$u_{3}[i+1]$}	& $u_{3}[i+2]$	& $u_{3}[i+3]$	& $u_{3}[i+4]$	& $u_{3}[i+5]$ 	\\
			& $u_{4}[i]$ 		& $u_{4}[i+1]$	& $u_{4}[i+2]$	& $u_{4}[i+3]$	& $u_{4}[i+4]$	& $u_{4}[i+5]$ 	\\
			& $u_{5}[i]$ 		& $u_{5}[i+1]$	& $u_{5}[i+2]$	& $u_{5}[i+3]$	& $u_{5}[i+4]$	& $u_{5}[i+5]$ 	\\
			& $u_{6}[i]$ 		& $u_{6}[i+1]$	& \fboxrule=1pt\fbox{$u_{6}[i+2]$}	& $u_{6}[i+3]$	& $u_{6}[i+4]$	& $u_{6}[i+5]$ 	\\
			& $u_{7}[i]$ 		& $u_{7}[i+1]$	& $u_{7}[i+2]$	& $u_{7}[i+3]$	& $u_{7}[i+4]$	& $u_{7}[i+5]$ 	\\
			& $u_{8}[i]$ 		& $u_{8}[i+1]$	& $u_{8}[i+2]$	& $u_{8}[i+3]$	& $u_{8}[i+4]$	& $u_{8}[i+5]$ 	\\ 
			& $u_{9}[i]$ 		& $u_{9}[i+1]$	& $u_{9}[i+2]$	& \fboxrule=1pt\fbox{$u_{9}[i+3]$}	& $u_{9}[i+4]$ 	& $u_{9}[i+5]$	\\ 
			& $u_{10}[i]$ 	& $u_{10}[i+1]$	& $u_{10}[i+2]$	& $u_{10}[i+3]$	& $u_{10}[i+4]$ & $u_{10}[i+5]$	\\ 
			& $u_{11}[i]$ 	& $u_{11}[i+1]$	& $u_{11}[i+2]$	& $u_{11}[i+3]$	& $u_{11}[i+4]$ & $u_{11}[i+5]$	\\ \hline
			\multirow{8}{*}{$k^v = 8$}
			& \fcolorbox{lightgray}{lightgray}{$v_{0}[i]$} 	& $v_{0}[i+1]$	& $v_{0}[i+2]$	& $v_{0}[i+3]$	& $v_{0}[i+4]$	& $v_{0}[i+5]$ 	\\
			& $v_{1}[i]$ 	& $v_{1}[i+1]$	& $v_{1}[i+2]$	& $v_{1}[i+3]$	& $v_{1}[i+4]$	& $v_{1}[i+5]$ 	\\
			& $v_{2}[i]$ 	& $v_{2}[i+1]$	& $v_{2}[i+2]$	& $v_{2}[i+3]$	& $v_{2}[i+4]$ 	& $v_{2}[i+5]$	\\
			& $v_{3}[i]$ 	& $v_{3}[i+1]$	& $v_{3}[i+2]$	& $v_{3}[i+3]$	& $v_{3}[i+4]$ 	& $v_{3}[i+5]$	\\
			& $v_{4}[i]$ 	& \fcolorbox{lightgray}{lightgray}{$v_{4}[i+1]$}	& $v_{4}[i+2]$	& $v_{4}[i+3]$	& $v_{4}[i+4]$ 	& $v_{4}[i+5]$	\\
			& $v_{5}[i]$ 	& $v_{5}[i+1]$	& $v_{5}[i+2]$	& $v_{5}[i+3]$	& $v_{5}[i+4]$ 	& $v_{5}[i+5]$	\\
			& $v_{6}[i]$ 	& $v_{6}[i+1]$	& $v_{6}[i+2]$	& $v_{6}[i+3]$	& $v_{6}[i+4]$ 	& $v_{6}[i+5]$	\\
			& $v_{7}[i]$ 	& $v_{7}[i+1]$	& $v_{7}[i+2]$	& $v_{7}[i+3]$	& $v_{7}[i+4]$ 	& $v_{7}[i+5]$	\\	\hline
			\multirow{12}{*}{$k^u = 12$}
			& $p^v_{0}[i]$ 	& $p^v_{0}[i+1]$	& \fcolorbox{lightgray}{lightgray}{$p^v_{0}[i+2]$}	& $p^v_{0}[i+3]$	& $p^v_{0}[i+4]$	& $p^v_{0}[i+5]$ 	\\
			& $p^v_{1}[i]$ 	& $p^v_{1}[i+1]$	& $p^v_{1}[i+2]$	& $p^v_{1}[i+3]$	& $p^v_{1}[i+4]$	& $p^v_{1}[i+5]$ 	\\
			& $p^v_{2}[i]$ 	& $p^v_{2}[i+1]$	& $p^v_{2}[i+2]$	& $p^v_{2}[i+3]$	& $p^v_{2}[i+4]$ 	& $p^v_{2}[i+5]$	\\
			& $p^v_{3}[i]$ 	& $p^v_{3}[i+1]$	& $p^v_{3}[i+2]$	& $p^v_{3}[i+3]$	& $p^v_{3}[i+4]$ 	& $p^v_{3}[i+5]$	\\
			& $p^v_{4}[i]$ 	& $p^v_{4}[i+1]$	& $p^v_{4}[i+2]$	& \fcolorbox{lightgray}{lightgray}{$p^v_{4}[i+3]$}	& $p^v_{4}[i+4]$ 	& $p^v_{4}[i+5]$	\\
			& $p^v_{5}[i]$ 	& $p^v_{5}[i+1]$	& $p^v_{5}[i+2]$	& $p^v_{5}[i+3]$	& $p^v_{5}[i+4]$ 	& $p^v_{5}[i+5]$	\\
			& $p^v_{6}[i]$ 	& $p^v_{6}[i+1]$	& $p^v_{6}[i+2]$	& $p^v_{6}[i+3]$	& $p^v_{6}[i+4]$ 	& $p^v_{6}[i+5]$	\\
			& $p^v_{7}[i]$ 	& $p^v_{7}[i+1]$	& $p^v_{7}[i+2]$	& $p^v_{7}[i+3]$	& $p^v_{7}[i+4]$ 	& $p^v_{7}[i+5]$	\\
			& $p^v_{8}[i]$	& $p^v_{8}[i+1]$	& $p^v_{8}[i+2]$	& $p^v_{8}[i+3]$	& \fcolorbox{lightgray}{lightgray}{$p^v_{8}[i+4]$} 	& $p^v_{8}[i+5]$	\\
			& $p^v_{9}[i]$ 	& $p^v_{9}[i+1]$	& $p^v_{9}[i+2]$	& $p^v_{9}[i+3]$	& $p^v_{9}[i+4]$ 	& $p^v_{9}[i+5]$	\\
			& $p^v_{10}[i]$ 	& $p^v_{10}[i+1]$	& $p^v_{10}[i+2]$	& $p^v_{10}[i+3]$	& $p^v_{10}[i+4]$ 	& $p^v_{10}[i+5]$	\\
			& $p^v_{11}[i]$ 	& $p^v_{11}[i+1]$	& $p^v_{11}[i+2]$	& $p^v_{11}[i+3]$	& $p^v_{11}[i+4]$ 	& $p^v_{11}[i+5]$	\\	\hline
			\multirow{6}{*}{$k^s = 6$}
			& $p^u_{0}[i]$ 	& $p^u_{0}[i+1]$	& $p^u_{0}[i+2]$	& $p^u_{0}[i+3]$	& \fboxrule=1pt \fbox{$p^u_{0}[i+4]$} 	& $p^u_{0}[i+5]$	\\
			& $p^u_{1}[i]$ 	& $p^u_{1}[i+1]$	& $p^u_{1}[i+2]$	& $p^u_{1}[i+3]$	& $p^u_{1}[i+4]$ 	& $p^u_{1}[i+5]$	\\
			& $p^u_{2}[i]$ 	& $p^u_{2}[i+1]$	& $p^u_{2}[i+2]$	& $p^u_{2}[i+3]$	& $p^u_{2}[i+4]$ 	& $p^u_{2}[i+5]$	\\
			& $p^u_{3}[i]$ 	& $p^u_{3}[i+1]$	& $p^u_{3}[i+2]$	& $p^u_{3}[i+3]$	& $p^u_{3}[i+4]$	& \fboxrule=1pt \fbox{$p^u_{3}[i+5]$}	\\
			& $p^u_{4}[i]$ 	& $p^u_{4}[i+1]$	& $p^u_{4}[i+2]$	& $p^u_{4}[i+3]$	& $p^u_{4}[i+4]$ 	& $p^u_{4}[i+5]$	\\
			& $p^u_{5}[i]$ 	& $p^u_{5}[i+1]$	& $p^u_{5}[i+2]$	& $p^u_{5}[i+3]$	& $p^u_{5}[i+4]$ 	& $p^u_{5}[i+5]$	\\	\hline
		\end{tabular}}
		\caption{MiDAS code construction for $(N,B) = (2,3)$, a delay of $T=5$ and rate $R = 10/19$ with a block MDS constituent code. We note that each of the parity-check sub-symbols $p^v_j[t]$ is combined with $u_j[t-5]$ for $j = \{0,1,\dots,11\}$ but are omitted for simplicity.}
		\label{tab:MIDAS_235}
		\vspace{-2em}
\end{table}

Table~\ref{tab:MIDAS_235} illustrates a MiDAS construction using MDS as constituent codes. The rate of this code is $R = \frac{T}{T+B+\frac{NB}{T-N+1}} = \frac{10}{19}$. The encoding steps are as follows.
\begin{itemize}
\item Split each source symbol $\bs[i]$ into $k = (T-N+1)T = 20$ sub-symbols. The first $k^u = (T-N+1)B = 12$ of which are $(u_0[i],\dots,u_{11}[i])$ while the last $k^v = (T-N+1)(T-B) = 8$ are $(v_0[i],\dots,v_7[i])$.
\item Apply a $(T,T-B) = (5,2)$ MDS code to the $v$ sub-symbols with an interleaving factor of $T-N+1=4$. Hence, at time $i$ four codewords are generated as follows,
\begin{align}
\bc_0^v[i] &= (v_0[i], v_4[i+1], p^v_0[i+2], p^v_4[i+3], p^v_8[i+4]) \label{eq:cv-1}\\
\bc_1^v[i] &= (v_1[i], v_5[i+1], p^v_1[i+2], p^v_5[i+3], p^v_9[i+4]) \label{eq:cv-2} \\
\bc_2^v[i] &= (v_2[i], v_6[i+1], p^v_2[i+2], p^v_6[i+3], p^v_{10}[i+4]) \label{eq:cv-3} \\
\bc_3^v[i] &= (v_3[i], v_7[i+1], p^v_3[i+2], p^v_7[i+3], p^v_{11}[i+4]) \label{eq:cv-4}
\end{align}
The codeword $\bc_0^v[i]$ is shown using the shaded boxes in Table~\ref{tab:MIDAS_235}. 
According to~\eqref{eq:cv-1},~\eqref{eq:cv-2},~\eqref{eq:cv-3} and~\eqref{eq:cv-4}, $(T-N+1)B = 12$ parity-check sub-symbols are generated, namely $(p^v_0[i],\dots,p^v_{11}[i])$.

\item Combine the $\bu[\cdot]$ symbols with $\bp^v[\cdot]$ symbols and generate $\bq[t] = \bp^v[t]+ \bu[t-T]$. 
For simplicity we do not show these in Table~\ref{tab:MIDAS_235}.

\item Apply a $(T+1,T-N+1) = (6,4)$ MDS code to the $u$ symbols with an interleaving factor of $B=3$ generating $BN = 6$ parity-check sub-symbols $(p^u_0[i],\dots,p^u_5[i])$. The resulting codewords are as follows,
\begin{align}
\bc_0^u[i] &= (u_0[i], u_3[i+1], u_6[i+2], u_9[i+3], p_0^u[i+4], p_3^u[i+5] ) \label{eq:cu-1}\\
\bc_1^u[i] &= (u_1[i], u_4[i+1], u_7[i+2], u_{10}[i+3], p_1^u[i+4], p_4^u[i+5] ) \label{eq:cu-2}\\
\bc_2^u[i] &= (u_2[i], u_5[i+1], u_8[i+2], u_{11}[i+3], p_2^u[i+4], p_5^u[i+5] ) \label{eq:cu-3}
\end{align}
The codeword $\bc^u_0[i]$ is marked by the unshaded boxes in Table~\ref{tab:MIDAS_235} for convenience.
\end{itemize}
The channel symbol at time $i$ is given by,
\begin{align}
\bx[i] = \left( \bu[i], \bv[i], \bq[i], \bp^u[i] \right),
\end{align}
whose rate is $R = \frac{12+8}{12+8+12+6} = \frac{10}{19}$.

For decoding, first assume that an erasure burst spans the interval $[i, i+2]$. The decoding steps are as follows,
\begin{itemize}
\item Recover $\bp^v[t] = (p_0^v[t],\ldots, p_{11}^v[t])$ for $t = \{i+3,i+4\}$ by subtracting $\bu[t-5]$ from $\bq[t]$. 
\item Recover $\bv[i]$, $\bv[i+1]$ and $\bv[i+2]$ using the underlying $(5,2)$ MDS codes as follows. For $j \in \{0,\dots,3\}$,
\begin{itemize}
\item $\bc_j^v[i-1] = (v_j[i-1],v_{j+4}[i],p_j^v[i+1],p_{j+4}^v[i+2],p_{j+8}^v[i+3])$ has $3$ erasures at $i$, $i+1$ and $i+2$. Hence, the $v_{j+4}[i]$ sub-symbols are recovered by time $i+3$.
\item $\bc_j^v[i] = (v_j[i],v_{j+4}[i+1],p_j^v[i+2],p_{j+4}^v[i+3],p_{j+8}^v[i+4])$ has $3$ erasures at $i$, $i+1$ and $i+2$. Hence, the $v_{j}[i]$ and $v_{j+4}[i+1]$ sub-symbols are recovered by time $i+4$.
\item $\bc_j^v[i+1] = (v_j[i+1],v_{j+4}[i+2],p_j^v[i+3],p_{j+4}^v[i+4],p_{j+8}^v[i+5])$ has $3$ erasures at $i+1$, $i+2$ and $i+5$\footnote{We note that the parity-check sub-symbols $p_{j+8}^v[i+5]$ for $j \in \{0,\dots,3\}$ are counted as erasures since they are combined with $u_{j+8}[i]$ which are erased.}. Hence, the $v_{j}[i+1]$ and $v_{j+4}[i+2]$ sub-symbols are recovered by time $i+4$.
\item $\bc_j^v[i+2] = (v_j[i+2],v_{j+4}[i+3],p_j^v[i+4],p_{j+4}^v[i+5],p_{j+8}^v[i+6])$ has $3$ erasures at $i+2$, $i+5$ and $i+6$. Hence, the $v_{j}[i+2]$ sub-symbols are recovered by time $i+4$.
\end{itemize}
In other words, all the erased $\bv[\cdot]$ symbols are recovered by time $i+4$.
\item Compute the parity-check symbols $\bp^v[t]$ for $t \in \{i+5,i+6,i+7\}$ as they only combine $\bv[\cdot]$ symbols that are either unerased or recovered in the previous step. These parity-check symbols can be subtracted from the corresponding $\bp[t]$ symbols to recover $\bu[i-T]$ symbols within a delay of $T = 5$. In other words, we recover $\bu[i]$ at time $t=i+5$, $\bu[i+1]$ at time $t=i+6$ and $\bu[i+2]$ at time $t=i+7$.
\end{itemize}

In the case of isolated erasures, we assume a channel introducing $N=2$ isolated erasures in a the interval $[0,5]$ of length $T+1=6$. Note that the codewords $\bc_j^v[i]$ in~\eqref{eq:cv-1}-\eqref{eq:cv-4} terminate at time $t=i+4$. Thus there are no more than $N=2$ erasures on either of them and thus the recovery of $v_j[i]$ is guaranteed at time $i+4$. Likewise the codewords $\bc_j^u[i]$ in~\eqref{eq:cu-1}-\eqref{eq:cu-3} terminate at time $t=i+5$ and there are no more than $N=2$ erasures on any of them. Thus the recovery of $u_j[i]$ is guaranteed at time $t=i+5$.

\subsubsection{Code Construction}

The general construction achieving Prop.~\ref{prop:MiDAS-MDS} is as follows.

\begin{itemize}
\item \textbf{Source Splitting:} We assume that each source symbol $\bs[i] \in \mathbb{F}_q^k$ and partition the $k$ sub-symbols into two groups $\bu_{\mrm{vec}}[i] \in \mathbb{F}_q^{k^u}$ and $\bv_{\mrm{vec}}[i] \in \mathbb{F}_q^{k^v}$ as follows,
\begin{align}
\bs[i] = (s_0[i],\dots,s_{k-1}[i]) = (\underbrace{u_0[i],\dots,u_{k^u-1}[i]}_{\bu_{\mrm{vec}}[i]},\underbrace{v_{0}[i],\dots,v_{k^v-1}[i]}_{\bv_{\mrm{vec}}[i]})
\end{align}
where we select
\begin{align}
k^u = (\Te-N+1)B, \quad k^v =(\Te-N+1)(\Te-B).
\end{align}

\item \textbf{MDS Parity-Checks for $\bv[\cdot]$ symbols:} Construct $\Te-N+1$ systematic MDS codes of parameters $(\Te,\Te-B)$ starting at time $i$ whose associated codewords are, 
\begin{align}
\bc_j^v[i] = \left[
\begin{array}{c}
 	v_j[i] \\
 	v_{j+(\Te-N+1)}[i+1] \\
 	v_{j+2(\Te-N+1)}[i+2] \\
 	\vdots \\
	v_{j+(\Te-N+1)(\Te-B-1)}[i+\Te-B-1] \\
	p_j^v[i+\Te-B] \\
	p_{j+(\Te-N+1)}^v[i+\Te-B+1] \\
	\vdots \\
	p_{j+(\Te-N+1)(B-1)}^v[i+\Te-1]
\end{array}
\right],
\end{align}
for $j \in \{ 0,1,\dots,\Te-N \}$. Notice that each codeword $\bc_j^v[i]$ spans the interval $[i, i+\Te-1]$ and the adjacent sub-symbols have an interleaving factor of ${\Te-N+1}$. The resulting parity-check symbols at time $i$ are expressed as: $\bp^v[i] = (p^v_{0}[i],\dots,p^v_{(\Te-N+1)B-1}[i])$

\item \textbf{Repetition of $\bu[\cdot]$ symbols:} Combine the $\bu[\cdot]$ symbols with the parity-check symbols $\bp^v[\cdot]$ after applying a shift of $\Te$, i.e., $\bq[i] = \bp^v[i] + \bu[i-\Te]$.

\item \textbf{MDS Parity-Checks for $\bu[\cdot]$ symbols:} Construct $B$ systematic MDS codes of parameters $(\Te+1,\Te-N+1)$ at time $i$ whose associated codewords are,
\begin{align}
\bc_j^u[i] = \left[
\begin{array}{c}
 	u_j[i] \\
 	u_{j+B}[i+1] \\
 	u_{j+2B}[i+2] \\
 	\vdots \\
	u_{j+B(\Te-N)}[i+\Te-N] \\
	p_j^u[i+\Te-N+1] \\
	p_{j+B}^u[i+\Te-N+2] \\
	\vdots \\
	p_{j+B(N-1)}^u[i+\Te]
\end{array}
\right],
\end{align}
for $j \in \{ 0,1,\dots,B-1 \}$. Notice that each codeword $\bc_j^u[i]$ spans the interval $[i,i+\Te]$ and consists of 
sub-symbols with an interleaving factor of $B$. The resulting parity-check symbols at time $i$ are denoted by $\bp^u[i] = (p^u_{0}[i],\dots,p^u_{BN-1}[i])$.

\item \textbf{Concatenation of Parity-Checks:} Concatenate the parity-check symbols $\bp^u[\cdot]$ and $\bq[\cdot]$, i.e., the channel input at time $i$ is given by,
\begin{align}
\label{eq:MIDAS-MDS}
\bx[i] = \left(\bu[i],\bv[i],\bq[i],\bp^u[i] \right) .
\end{align}
Note that the rate of the code equals
\begin{align}
R = \frac{(\Te-N+1)\Te}{(\Te-N+1)\Te+B(\Te+1)} = \frac{\Te}{\Te + \frac{B(\Te+1)}{\Te-N+1}}
\end{align}
which is identical to the expression in~\eqref{eq:K-sub}.
\end{itemize}

The decoding steps are similar to that discussed in the previous examples and is provided in Appendix~\ref{app:MiDAS-MDS}.

\subsubsection{Field-Size Computation}
To compute the required field-size, note that splitting each source symbol into $(\Te-N+1)\Te$ sub-symbols requires that each source symbol consist of $q_1 = (\Te-N+1)\Te$ sub-symbols. We therefore need to determine the field-size of each sub-symbol. Using the well-known fact that an $(n,k)$ MDS code exists for any prime field-size greater than $n$, we note that the field-size needed for both $(\Te,\Te-B)$ and $(\Te+1,\Te-N+1)$ MDS codes to simultaneously exist is $q_2 = \Theta(\Te)$. Thus the required field-size is $q = q_1 \cdot q_2$ which is of the order $\cO(\Te^3)$.

\subsection{Non-Ideal Erasure Patterns}
\label{subsec:midas-nonideal}

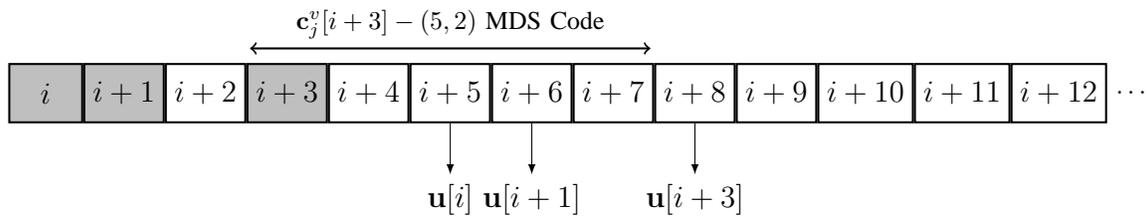
\begin{figure}[!tb]
	\centering
	\resizebox{0.85\columnwidth}{!}{
	\begin{tikzpicture}[node distance=0mm]
		\node[esym, minimum height = 8mm, minimum width = 10mm]  (x100) {$i$};
		\node[esym, minimum height = 8mm, minimum width = 10mm, right = of x100]     (x101) {$i+1$};
		\node[usym, minimum height = 8mm, minimum width = 10mm, right = of x101]     (x102) {$i+2$};
		\node[esym, minimum height = 8mm, minimum width = 10mm, right = of x102]     (x103) {$i+3$};
		\node[usym, minimum height = 8mm, minimum width = 10mm, right = of x103]     (x104) {$i+4$};
		\node[usym, minimum height = 8mm, minimum width = 10mm, right = of x104]     (x105) {$i+5$};
		\node[usym, minimum height = 8mm, minimum width = 10mm, right = of x105]     (x106) {$i+6$};
		\node[usym, minimum height = 8mm, minimum width = 10mm, right = of x106]     (x107) {$i+7$};
		\node[usym, minimum height = 8mm, minimum width = 10mm, right = of x107]     (x108) {$i+8$};
		\node[usym, minimum height = 8mm, minimum width = 10mm, right = of x108]     (x109) {$i+9$};
		\node[usym, minimum height = 8mm, minimum width = 10mm, right = of x109]     (x110) {$i+10$};
		\node[usym, minimum height = 8mm, minimum width = 10mm, right = of x110]     (x111) {$i+11$};
		\node[usym, minimum height = 8mm, minimum width = 10mm, right = of x111]     (x112) {$i+12$};
		\node[nosym, below = 2em of x105]     (x225) {$\bu[i]$};
		\node[nosym, below = 2em of x106]     (x226) {$\bu[i+1]$};
		\node[nosym, below = 2em of x108]     (x228) {$\bu[i+3]$};
		\node      [right = of x112]     (x1end) {$\cdots$};
		
		\draw[-latex] (x105.south) -| (x225.north);
		\draw[-latex] (x106.south) -| (x226.north);
		\draw[-latex] (x108.south) -| (x228.north);
		
		\dimup{x103}{x107}{2mm}{$\bc_j^v[i+3] - (5,2)$ MDS Code};
		
	\end{tikzpicture}}
	\caption{A Non-ideal erasure pattern in Section~\ref{subsec:midas-nonideal}.}
	\label{fig:MIDASPRC_NonIdeal}
\end{figure}

Even though the construction in Section~\ref{subsec:midas-field} attains the same optimal tradeoff over the deterministic erasure channel model with a smaller field-size, their performance is more sensitive compared to the construction in Section~\ref{subsec:midas} when non-ideal erasure patterns are considered. To illustrate this we focus on the case when $N=2$, $B=3$, $T=5$ and $W \ge 6$ in our discussion. The MiDAS construction with block MDS constituent code for these parameters is illustrated in Table~\ref{tab:MIDAS_235}. The MiDAS codes using strongly-MDS codes has a similar structure except that the parity checks $p_j^v[\cdot]$ and $p_j^u[\cdot]$ are generated using the strongly-MDS code.

We consider an erasure pattern that introduces a burst of length $2$ in the interval $[i,i+1]$ and an additional isolated erasure at time $i+3$. Clearly such a pattern violates a $\cC(N=2,B=3,W=6)$. Nonetheless, we argue that the MiDAS codes are able to completely recover from this erasure pattern but the alternative construction using block MDS codes in Table~\ref{tab:MIDAS_235} cannot.

In particular note that the the parity symbols $\bp[i+2]$ and $\bp[i+4]$ contribute a total of $24$ sub-symbols which suffice to recover $\bv[i],\bv[i+1]$ and $\bv[i+3]$, each of which involves $8$ sub-symbols. Thus by time $i+4$ all the erased $\bv[\cdot]$ sub-symbols are recovered and we can proceed to recover $\bu[i],\bu[i+1]$ and $\bu[i+3]$ at time $i+5,i+6$ and $i+8$, respectively, i.e., a delay of $T = 5$ symbols.

In the MiDAS construction with MDS constituent codes, illustrated in Table~\ref{tab:MIDAS_235}, we either use $\bc^u[\cdot]$ or $\bc^v[\cdot]$ codewords to recover $\bu[i]$.
\begin{itemize}
\item {Using $\bc^u[\cdot]$ codewords:} 
Here, a $(T+1,T-N+1) = (6,4)$ block MDS code is applied to each of the $\bu[\cdot]$ symbols. Each of the codewords $\bc^u_j[i]$ for $j \in \{0,1,2\}$ in~\eqref{eq:cu-1},~\eqref{eq:cu-2}, and~\eqref{eq:cu-3} has $3$ erasures at $i$, $i+1$ and $i+3$ and hence the recovery of $\bu[0]$ is impossible. 
\item {Using $\bc^v[\cdot]$ codewords:} 
Also, the $\bv[\cdot]$ symbols are protected using a $(T,T-B) = (5,2)$ MDS codes. Let us consider the codewords $\bc^v_j[i+3] = (v_j[i+3],v_{j+4}[i+4],p_j[i+5],p_{j+4}[i+6],p_{j+8}[i+7])$ for $j \in \{0,1,2,3\}$ in~\eqref{eq:cv-1},~\eqref{eq:cv-2},~\eqref{eq:cv-3} and~\eqref{eq:cv-4}. Each of these codewords has an erasure at time $i+3$ and the parity-check symbols $p_j[i+5]$ and $p_{j+4}[i+6]$ are combined with $u_j[i]$ and $u_{j+4}[i+1]$ which are erased by the channel. Thus, a total of $3$ erasures at times $i+3$, $i+5$ and $i+6$, which implies that $v_j[i+3]$ can be recovered at time $i+7$. Now, the decoder can compute $p_j[i+5]$ and $p_{j+4}[i+6]$ and subtract them from $q_j[i+5]$ and $q_{j+4}[i+6]$ to recover $u_j[i]$ and $u_j[i+1]$ with a delay of $7$ and $6$, respectively, i.e., exceeds the delay of $T=5$. 
\end{itemize}

 
We will also see performance loss from using MDS block codes instead of Strongly-MDS codes in our simulation results.


\section{Unequal Source-Channel Rates}
\label{sec:mismatch}

In this section, we study the case when the source-arrival and channel-transmission rates are unequal i.e., $M > 1$. We start by interpreting the capacity expression in Theorem~\ref{thm:capacity} in Section~\ref{subsec:expr}. In Section~\ref{subsec:mismatch-construction}, we provide the code construction achieving such capacity. The decoding analysis is discussed in Section~\ref{subsec:mismatch-decoding}. We illustrate both the encoding and decoding steps through a numerical example in Section~\ref{sec:code_examples}. We then provide the converse proof of Theorem~\ref{thm:capacity} in Section~\ref{subsec:mismatch-converse}. Finally, we present constructions that are robust against isolated erasures in Section~\ref{subsec:robust-ext}.

\subsection{Capacity Expression}
\label{subsec:expr}

We note that $C=0$, if $T < b$\footnote{Recall from~\eqref{eq:B-split} that we express $B = bM + B'$ where $B' \in [0, M-1]$.}. This follows since an erasure burst of length $B$ can span all underlying channel symbols in macro-packets $[i,i+T]$ thus making the recovery of $\bs[i]$ by macro-packet $i+T$ impossible. This trivial case will therefore not be discussed further in the paper. When $T=b$, the capacity in Theorem~\ref{thm:capacity} reduces to the following,
\begin{align}
C = \begin{cases}
\frac{1}{2}, & 0 \le B' \le \frac{M}{2},~T=b,\\
\frac{M-B'}{M}, & \frac{M}{2} < B' \le M-1,~ T=b.
\end{cases}
\label{eq:minT-cap}
\end{align}
In the special case of $T=b$, during the recovery of $\bs[i]$
we can only use the unerased symbols in $\bY[i,:]$ and $\bY[i+b,:]$ as all the intermediate macro-packets are completely erased. It turns out that a simple repetition code that uses $\min\left({M-B'}, \frac{M}{2}\right)$ information symbols and an identical number of parity-check symbols in each macro-packet achieves the capacity when ${T=b}$. 

When $T > b$ the capacity in Theorem~\ref{thm:capacity} reduces to one of the following cases,
\begin{align}
C = \begin{cases}
\frac{T}{T+b}, & 0 \le B' \le \frac{b}{T+b}M,\\
\frac{M(T+b+1)-B}{M(T+b+1)}, & \frac{b}{T+b}M < B' \le M-1.
\end{cases}\label{eq:genT-cap}
\end{align}
Note that, for each value of $b$, the capacity remains constant as $B'$ is increased in the interval
$\left[0, \frac{b}{T+b}M\right]$. 

\subsection{Code Construction}
\label{subsec:mismatch-construction}

\begin{figure}
 \begin{minipage}[t]{0.5\linewidth}
  \centering
  \includegraphics[width = \columnwidth]{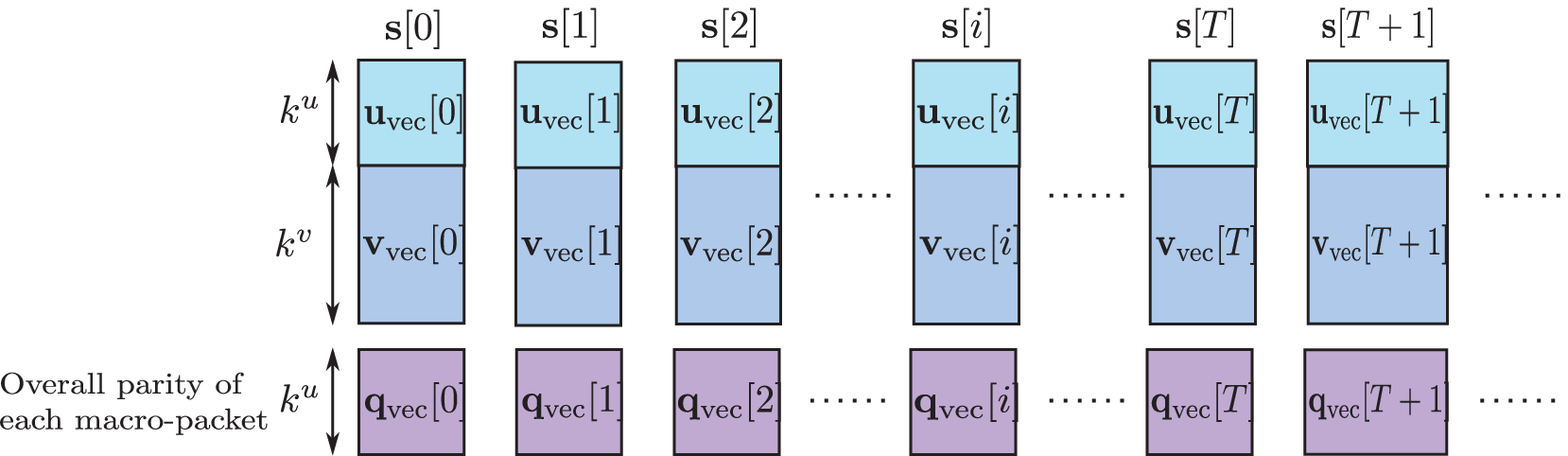}
  \caption{Construction of Parity-Check Symbols}
\label{fig:layered-1}
 \end{minipage}
 \hspace{0.5cm}
 \begin{minipage}[t]{0.5\linewidth}
  \centering
  \includegraphics[height = 1.2in]{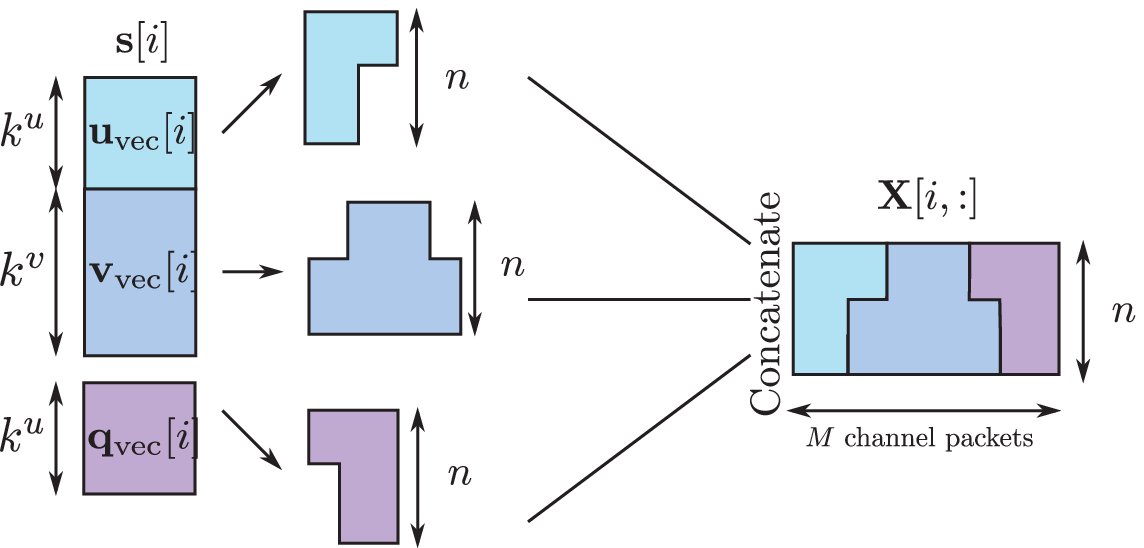}
  \caption{Reshaping of Channel Packets}
\label{fig:reshape}
 \end{minipage}
\end{figure}

The proposed construction is a natural generalization of the Generalized Maximally Short codes in section~\ref{subsec:gms}. This is illustrated in Fig.~\ref{fig:layered-1}. However an additional step of {\em reshaping} as illustrated in Fig.~\ref{fig:reshape} is needed. We separately consider three cases below.

\subsubsection{Encoding: $T \ge b$ and $B' \le \frac{b}{T+b}M$}
We let 
\begin{align}
n = T+b, \quad k = MT, \label{eq:nk-params-1}
\end{align}
throughout this case. Note that the rate $R=\frac{k}{Mn}$ reduces to the first case in both~\eqref{eq:minT-cap} and~\eqref{eq:genT-cap}.

\begin{itemize}

\item \textbf{Source Splitting:} 
We assume that each source symbol
$\bs[i] \in \mathbb{F}_q^k$ and partition the $k$ sub-symbols into two groups $\bu_{\mrm{vec}}[i] \in \mathbb{F}_q^{k^u}$ and $\bv_{\mrm{vec}}[i] \in \mathbb{F}_q^{k^v}$ as follows,
\begin{align}
\bs[i] = (s_0[i],\dots,s_{k-1}[i]) = (\underbrace{u_0[i],\dots,u_{k^u-1}[i]}_{\bu_{\mrm{vec}}[i]},\underbrace{v_{0}[i],\dots,v_{k^v-1}[i]}_{\bv_{\mrm{vec}}[i]})
\label{eq:s-part}
\end{align}
where we select
\begin{align}
k^u = Mb, \quad k^v = M(T-b). \label{eq:kuv-params-1}
\end{align}

\item \textbf{Strongly-MDS Parity-Checks:} Apply a $(k^v+k^u,k^v,T)$ Strongly-MDS code of rate $\frac{k^v}{k^v+k^u}$ to the sub-stream of $\bv_{\mrm{vec}}[\cdot]$ symbols generating $k^u$ parity-check symbols, $(p_0[i],\dots,p_{k^u-1}[i]) = \bp_{\mrm{vec}}[i] \in \mathbb{F}_q^{k^u}$ for each macro-packet. In particular we have that
\begin{align}
\bp_\mrm{vec}[i] = \left(\sum_{j=0}^T \bv_{\mrm{vec}}^\dagger[i-j] \cdot \bH_j \right)^\dagger \label{eq:q-def}
\end{align}
where $\bH_j \in \mathbb{F}_q^{k^v \times k^u}$ are the sub-matrices associated with the Strongly-MDS code~\eqref{eq:systematic}. 

\item \textbf{Parity-Check Generation:} Combine the $\bu_{\mrm{vec}}[\cdot]$ symbols with the $\bp_{\mrm{vec}}[\cdot]$ parity-checks after applying a shift of $T$ to the former i.e.,
\begin{align}
\bq_{\mrm{vec}}[i] = \bp_{\mrm{vec}}[i] + \bu_{\mrm{vec}}[i-T], \label{eq:p-def}
\end{align}
where $\bq_{\mrm{vec}}[i] \in \mathbb{F}_q^{k^u}$.

\item \textbf{Re-shaping:} In order to construct the macro-packet, we reshape $\bu_{\mrm{vec}}[i]$, $\bv_{\mrm{vec}}[i]$ and $\bq_{\mrm{vec}}[i]$ into groups each of $n$ sub-symbols generating the following matrices: 
\begin{equation}
\begin{aligned}
\label{eq:UVP}
\bU[i,:] &= \left[
\begin{array}{c|c|c|c}
 \bu[i,1] & \cdots & \bu[i,r] & \begin{array}{c}	\bu[i,r+1] \\\hline {\bf 0} \end{array} 
\end{array} \right] \in \mathbb{F}_q^{n \times r+1}\\
\bV[i,:] &= \left[
\begin{array}{c|c|c|c|c}
 \begin{array}{c} {\bf 0} \\\hline \bv[i,1] \end{array} & \bv[i,2] & \cdots & \bv[i,M-2r-1] & \begin{array}{c}	{\bf 0} \\\hline \bv[i,M-2r] \end{array} 
\end{array} \right] \in \mathbb{F}_q^{n \times M-2r} \\
\bQ[i,:] &= \left[
\begin{array}{c|c|c|c}
 \begin{array}{c}	\bq[i,r+1] \\\hline {\bf 0} \end{array} & \bq[i,r] & \cdots & \bq[i,1]
\end{array} \right] \in \mathbb{F}_q^{n \times r+1},
\end{aligned} \end{equation}
where
\begin{align}
&\begin{array}{cc}
\bu_{\mrm{vec}}[i] = \left[
\begin{array}{c}
\bu[i,1] \\
\bu[i,2] \\
\vdots \\
\bu[i,r] \\
\bu[i,r+1] \\
\end{array}
\right],
\qquad 
\bv_{\mrm{vec}}[i] = \left[
\begin{array}{c}
\bv[i,1] \\
\bv[i,2] \\
\vdots \\
\bv[i,M-2r-1] \\
\bv[i,M-2r] \\
\end{array}
\right],
\end{array}
\qquad
\bq_{\mrm{vec}}[i] = \left[
\begin{array}{c}
\bq[i,1] \\
\bq[i,2] \\
\vdots \\
\bq[i,r] \\
\bq[i,r+1] 
\end{array}\right]
\label{eq:uvp-split}
\end{align}

In~\eqref{eq:UVP} we define $r \in \mathbb{N}^0$ and $r' \in \{0,1,\ldots, n-1\}$ via \begin{align}k^u = r\cdot n+r'.\label{eq:rr'}\end{align} Note that 
$\bu[i,l] \in \mathbb{F}_q^n$ for each $l\in\{1,\ldots, r\}$ and $\bu[i, r+1] \in \mathbb{F}_q^{r'}$.
The splitting of $\bq_\mrm{vec}[i]$ into $\bq[i,j]$ in~\eqref{eq:UVP} follows in an analogous manner. We can express 
\begin{align}
\bq[i,j] = \bu[i-T, j] + \bp[i,j], \quad j=1,2,\ldots, r+1 \label{eq:pij-def}
\end{align}
where $\bp[i,j]$ is a sub-sequence of $\bp_\mrm{vec}[i]$ defined in a similar manner.
In the splitting of $\bv_\mrm{vec}[i]$ into $\bv[i,j]$ we note that $\bv[i,1], \bv[i,M-2r] \in \mathbb{F}_q^{n-r'}$ 
whereas $\bv[i,j] \in \mathbb{F}_q^n$ for $2\le j \le {M-2r-1}$. It can be easily verified that $M-2r > 0$ for our selected code parameters. When $M-2r=1$ the structure of $\bV[i,:]$ is as follows,
\begin{align}
\bV[i,:] = \left[ 
\begin{array}{c} 
{\bf 0} \\ \bv[i,1] \\ {\bf 0}
\end{array}
\right],
\end{align}
where $\bv[i,1] \in \mathbb{F}_q^{n-2r'}$.

\item \textbf{Macro-Packet Generation:}
Concatenate $\bU[i,:]$, $\bV[i,:]$ and $\bQ[i,:]$ to construct the channel macro-packet $\bX[i,:]$ as follows,
\footnotesize
\begin{align}
\label{eq:X}
&\bX[i,:] = \left[\bx[i,1]|\dots|\bx[i,M]\right] = \nonumber \\ 
&\left[
\begin{array}{c|c|c|c|c|c|c|c|c|c|c}
 \bu[i,1] & \cdots & \bu[i,r] & \begin{array}{c}	\bu[i,r+1] \\ \bv[i,1] \end{array} & \bv[i,2] & \cdots & \bv[i,M-2r-1] & \begin{array}{c} \bq[i,r+1] \\ \bv[i,M-2r] \end{array} & \bq[i,r] & \cdots & \bq[i,1]
\end{array} \right],
\end{align}
\normalsize
for $M-2r > 1$ while for the special case of $M-2r = 1$, the channel macro-packet is given by,
\begin{align}
\label{eq:Xsc}
\bX[i,:] &= \left[\bx[i,1]|\dots|\bx[i,M]\right] \nonumber \\
&= \left[
\begin{array}{c|c|c|c|c|c|c}
 \bu[i,1] & \cdots & \bu[i,r] & \begin{array}{c}	\bu[i,r+1] \\ \bv[i,1] \\ \bq[i,r+1] \end{array} & \bq[i,r] & \cdots & \bq[i,1]
\end{array} \right]
\end{align}
Note that the channel macro-packet at time $i$ is denoted by $\bX[i,:] \in \mathbb{F}_q^{n \times M}$ and the $j$-th channel packet in $\bX[i,:]$ by $\bx[i,j] \in \mathbb{F}_q^n$ for $j \in \{1,\dots,M\}$. 
\end{itemize}

\begin{remark}
We note that in the minimum delay case, i.e., $T=b$, this construction degenerates into a repetition code as $k^v = M(T-b) = 0$ in this case (cf.~\eqref{eq:kuv-params-1}). The corresponding rate of such repetition code is $R= \frac{k^u}{2k^u} = \frac{1}{2}$ which meets the capacity expression in the first case in~\eqref{eq:minT-cap}. The construction achieving the second case with $T=b$ and $B' > \frac{M}{2}$ is discussed later in this section.
\end{remark}

This completes the description of the encoding function for the first case in~\eqref{eq:capacity}. Fig.~\ref{fig:encoder} illustrates the overall encoder structure. 

\subsubsection{Encoding: $T > b$ and $B' > \frac{b}{T+b}M$}

We begin by choosing the following values of $n$ and $k$,
\begin{align}
n = T+b+1, \quad k = M(T+b+1)-B \label{eq:nk-params-2} 
\end{align}
and note that the rate $R = \frac{k}{Mn}$ reduces to the second case in~\eqref{eq:genT-cap}.

\begin{itemize}
\item Split each source $\bs[i] \in \mathbb{F}_q^k$ into $k$ sub-symbols and divide them into two groups $\bu_{\mrm{vec}}[i] \in \mathbb{F}_q^{k^u}$ and $\bv_{\mrm{vec}}[i] \in \mathbb{F}_q^{k^v}$ as in~\eqref{eq:s-part}. This time we select
\begin{align}
k^u = B=Mb+B', \quad k^v = M(T+b+1)-2B \label{eq:kuv-params-2}
\end{align}

\item Apply a $(k^v+k^u,k^v,T)$ Strongly-MDS code of rate $\frac{k^v}{k^v+k^u}$ to the sub-stream of $\bv_{\mrm{vec}}[\cdot]$ symbols generating $k^u$ parity-check symbols, $(p_0[i],\dots,p_{k^u-1}[i]) = \bp_{\mrm{vec}}[i] \in \mathbb{F}_q^{k^u}$ for each macro-packet as in~\eqref{eq:q-def}.

\item Combine the $\bu_{\mrm{vec}}[\cdot]$ symbols with the $\bp_{\mrm{vec}}[\cdot]$ parity-checks after applying a shift of $T$ to the former i.e., $\bq_{\mrm{vec}}[i] = \bp_{\mrm{vec}}[i] + \bu_{\mrm{vec}}[i-T]$.

\item Reshape the $\bu_\mrm{vec}[i],$ $\bv_\mrm{vec}[i]$ and $\bq_\mrm{vec}[i]$ vectors into matrices $\bU[i,:],$ $\bV[i,:]$
and $\bQ[i,:]$ as in~\eqref{eq:UVP}. In particular we let $r$ and $r'$ be such that $k^u = r\cdot n + r'$ as in~\eqref{eq:rr'}.
As in~\eqref{eq:uvp-split} we split $\bu_\mrm{vec}[i]$ into $\{\bu[i,j]\}_{1\le j \le (r+1)}$ where $\bu[i,j]\in {\mathbb F}_q^n$ for $1 \le j \le r$ and $\bu[i,r+1] \in {\mathbb F}_q^{r'}$ holds. In a similar manner we split $\bq_\mrm{vec}[i]$ into vectors $\{\bq[i,j]\}_{1\le j \le (r+1)}$ where $\bq[i,j] \in {\mathbb F}_q^n$ for $1 \le j \le r$ and $\bq[i,r+1] \in {\mathbb F}_q^{r'}$ holds. Finally we split $\bv_\mrm{vec}[i]$ into $\{\bv[i,j]\}_{1\le j \le {(M-2r)}}$ where $\bv[i,1], \bv[i, M-2r] \in \mathbb{F}_q^{n-r'}$ and $\bv[i,j] \in \mathbb{F}_q^n$ for $2\le j \le (M-2r-1)$.

\item Generate the Macro-Packet $\bX[i,:]$ by concatenating $\bU[i,:]$, $\bV[i,:]$ and $\bQ[i,:]$ as in~\eqref{eq:X} and~\eqref{eq:Xsc}.
\end{itemize}

\subsubsection{Encoding: $T = b$ and $B' > \frac{M}{2}$}
A simple repetition scheme is used. We split each source packet into $M-B'$ symbols i.e., $\mathbf{s}[i] = (s_0[i],\dots,s_{M-B-1'}[i])$ and assign the channel packets as follows,
\begin{align}
\mathbf{x}[i,j] = \left\{
\begin{array}{ll}
s_{j-1}[i] & j \in [1,M-B'] \\
0 & j \in [M-B'+1,B'] \\
s_{j-B'-1}[i-T] & j \in [B'+1,M].
\end{array}
\right.
\end{align}
The rate of such code is clearly $R = \frac{M-B'}{M}$ as stated in the second case in~\eqref{eq:minT-cap}.
In this case, by inspection we can check that the code described above is decodable within the decoding delay $T = b$.
Thus we will only focus on the previous two cases in our decoding analysis.

\begin{remark}
We note that the above construction achieves the capacity in Theorem~\ref{thm:capacity} for $W \ge M(T+1)$. For the case when $W < M(T+1)$, the same construction can be used with replacing the delay $T$ with the effective delay $\Te = \left\lfloor \frac{W}{M} \right\rfloor - 1$. 
\end{remark}

\begin{figure*}
\begin{center} 
\includegraphics[width=\columnwidth]{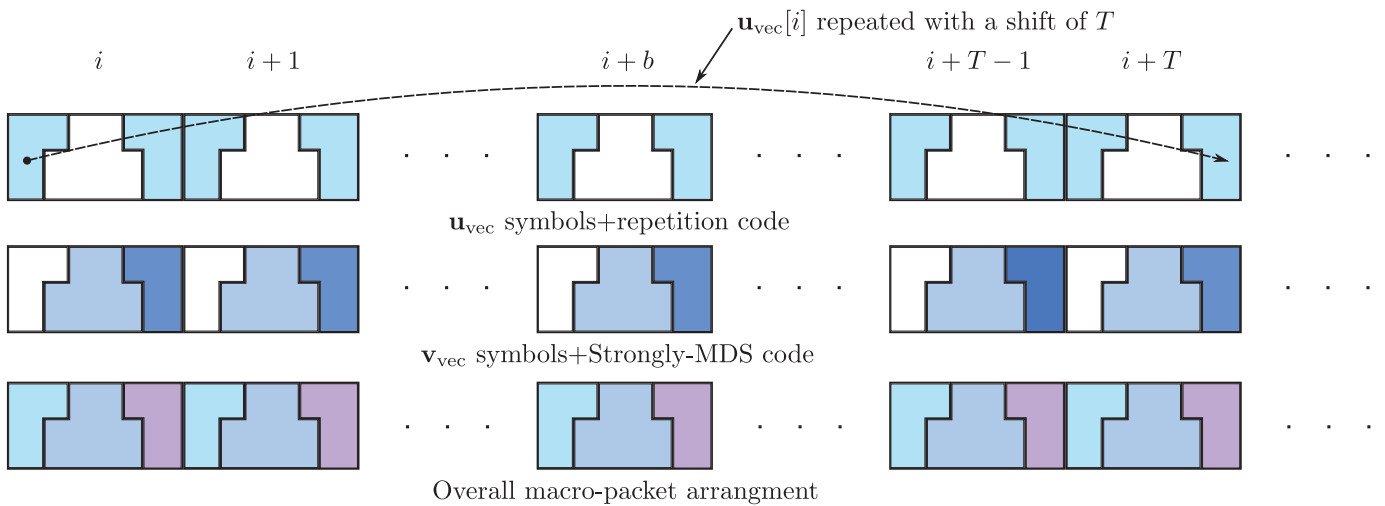}
\caption{Encoding of source symbols into macro-packets. Each source symbol is split into two groups. A repetition code is applied to one group with a delay of $T$ macro-packets as shown in the first figure. A Strongly-MDS code is applied to the second group as shown in the next figure. The resulting parity-checks are then combined as indicated in the last figure with purple parity-checks.}
\label{fig:encoder}
\end{center}
\end{figure*}

\begin{figure*}
\begin{center} 
\includegraphics[width=\columnwidth]{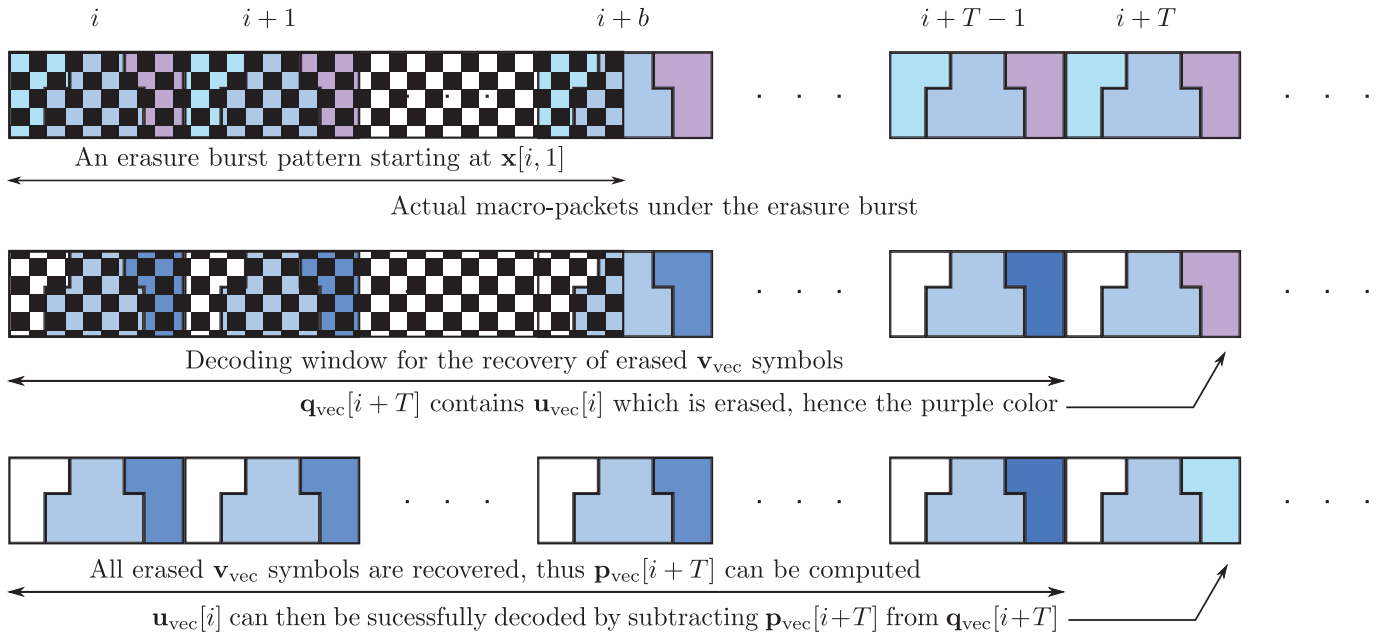}
\caption{Decoding for the burst pattern starting from $\bx[i,1]$. The first figure shows an erasure burst of length $B$. The parity-checks colored blue are used to recover the erased $\bv[\cdot]$ symbols in the second figure. The third figure shows the recovery of $\bu[i]$ using the parity-checks in macro-packet $i+T$.
\label{fig:decoder_case2}}
\end{center}
\end{figure*}

\subsection{Decoding Analysis}
\label{subsec:mismatch-decoding}

We show in this section that our proposed code construction can completely recover from any arbitrary burst of length $B$ within the deadline. Consider a channel that introduces such burst of length $B = bM + B'$ starting from $\bx[i,j]$ for $j \in \{1,\dots,M\}$. We first show how to recover $\bs[i]$ by the macro-packet $i+T$. Note that since our code is time invariant, it suffices to consider only the recovery of $\bs[i]$. Once $\bs[i]$ is recovered, we can compute $\bX[i,:]$ and repeat the same procedure with the smaller burst that starts at $\bx[i+1,1]$ to recover $\bs[i+1]$ and so on.

The decoding steps are as follows,

\begin{enumerate}
\item Step 1: In each macro-packet $\bX[t,:]$, for $t \in [i+b,i+T-1]$, recover all the unerased sub-symbols of $\bp_{\mrm{vec}}[t]$ by subtracting out $\bu_{\mrm{vec}}[t-T]$ from the corresponding $\bq_\mrm{vec}[t]$ as the former are not erased. 
Since, $\bu[i,1],\dots,\bu[i,j-1]$ are not erased, we can subtract these sub-symbols from the corresponding sub-symbols of $\bq_\mrm{vec}[i+T]$ to recover the respective $\bp_\mrm{vec}[i+T]$ sub-symbols.
\item Step 2: Recover all erased $\bv_{\mrm{vec}}[\cdot]$ sub-symbols by the macro-packet $i+T$ using the underlying $(k^u+k^v,k^v,T)$ Strongly-MDS code. This step will be justified later in the sequel.
\item Step 3: Compute $\bp_{\mrm{vec}}[i+T]$ as it combines $\bv_{\mrm{vec}}[\cdot]$ symbols that are either not erased or recovered in the previous step.
\item Step 4: Subtract $\bp_{\mrm{vec}}[i+T]$ from $\bq_{\mrm{vec}}[i+T]$ to recover $\bu_{\mrm{vec}}[i]$ within a delay of $T$ macro-packets. At this point both $\bu_{\mrm{vec}}[i]$ and $\bv_{\mrm{vec}}[i]$ have been recovered (and hence $\bs[i]$) with a delay of $T$ macro-packets as required.
\end{enumerate}

It only remains to show the sufficiency of the Strongly-MDS code in Step 2. To do that we use the following lemma.

\begin{lemma}
\label{lem:mdp_recovery_mismatch}
Consider any erasure burst of length $B$ starting at $\bx[i,j]$ for some $j \in \{1,\dots,M-r\}$. After Step 1 of cancelling $\bu_\mrm{vec}[t]$ sub-symbols, the total number of unrecovered sub-symbols in the sequence $\{(\bv_{\mrm{vec}}[t], \bp_{\mrm{vec}}[t])\}_{i \le t \le i+T}$ is at most $k^u(T+1)$.
\end{lemma}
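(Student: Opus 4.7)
The plan is to reduce the lemma to a counting inequality, exploit the conservation identity $E_u+E_v+E_q = B\cdot n$, and then verify the inequality by case analysis on the two construction regimes. First I would observe that in Step~1 the sub-symbol $p_l[t]$ is recovered from $q_l[t] = p_l[t] + u_l[t-T]$ precisely when both $q_l[t]$ and $u_l[t-T]$ are unerased. The burst of length $B = bM + B'$ starting at $\bx[i,j]$ occupies only macro-packets $i, i+1, \ldots, i+b+1$, all within $[i,i+T]$ under the assumption $T>b$ (the case $T=b$ degenerates to a repetition code). Thus for $t \in \{i,\ldots,i+T-1\}$ the shifted index $t-T$ is strictly less than $i$, so $\bu_{\mrm{vec}}[t-T]$ is available and $p_l[t]$ is unrecovered iff $q_l[t]$ is erased. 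At $t = i+T$, the interferer is $\bu_{\mrm{vec}}[i]$ itself, so $p_l[i+T]$ is unrecovered iff $q_l[i+T]$ or $u_l[i]$ is erased. Combining these contributions with the intrinsically unrecoverable erased $\bv$ sub-symbols yields
\[
\#\text{unrecovered in }\{(\bv_{\mrm{vec}}[t],\bp_{\mrm{vec}}[t])\}_{t=i}^{i+T}\ \le\ E_v + E_q + E_u(i),
\]
where $E_v,E_q$ denote the total numbers of erased $\bv,\bq$ sub-symbols across the burst and $E_u(i)$ denotes the number of erased $\bu$ sub-symbols within macro-packet $i$ alone.

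Next I would exploit that the $B$ erased channel packets carry $n$ sub-symbols each, yielding $E_u+E_v+E_q = Bn$. Substituting, the target bound $E_v+E_q+E_u(i) \le k^u(T+1)$ becomes
\[
E_u - E_u(i)\ \ge\ Bn - k^u(T+1),
\]
i.e.\ it suffices to bound from below the $\bu$ sub-symbols erased \emph{outside} macro-packet~$i$. The hypothesis $j \in \{1,\ldots,M-r\}$ is precisely what forces the burst to fully erase the $\bq$ region of macro-packet~$i$, so any slack from unerased $\bv$ positions at the tail of macro-packet~$i$ is compensated by additional $\bu$-region packets erased at the heads of later macro-packets. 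I would verify the required inequality by explicit enumeration in each regime: for $B' \le \tfrac{b}{T+b}M$ (so $n = T+b$, $k^u = Mb$, and $B' \le r$) the right-hand side equals $Mb(b-1) + B'(T+b)$; the $b-1$ fully erased macro-packets $i+1,\ldots,i+b-1$ supply $Mb(b-1)$, and the first $B'+j-1 \ge B'$ erased packets of macro-packet $i+b$ all lie in its $\bu$ region, contributing $B'(T+b)$ more. For $B' > \tfrac{b}{T+b}M$ (so $n = T+b+1$, $k^u = B$) the right-hand side collapses to $Bb$, and an analogous enumeration, possibly extending into macro-packet $i+b+1$ when $j$ is close to $M-r$, supplies the bound.

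The delicate step is the second regime, where the $\bu$ region need no longer be aligned with the $n$-boundary ($k^u = rn + r'$ with $r' > 0$) and the burst may spill from macro-packet $i+b$ into $i+b+1$. Each unerased $\bv$ sub-symbol at the trailing end of macro-packet $i+b$ must be matched one-for-one against a $\bu$ sub-symbol erased at the head of macro-packet $i+b+1$; verifying this matching uniformly in $j$ using the integer identities $k^u = rn+r'$, $B = bM+B'$, and the regime hypothesis $B' > \tfrac{b}{T+b}M$ is the crux of the argument. Once the counting inequality is established in both regimes, applying Property~P1 of Corollary~\ref{cor:mds} to the $(k^u+k^v,k^v,T)$ Strongly-MDS code guarantees recovery of all erased $\bv_{\mrm{vec}}[\cdot]$ sub-symbols in the window of $T+1$ macro-packets, completing the justification of Step~2 of the decoder.
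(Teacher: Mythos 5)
Your approach is genuinely different from the paper's. The paper first computes the unrecovered count exactly for $j=1$ and then proves a monotonicity claim ($N_j \le N_{j-1}$ for $j=2,\ldots,M-r$) by arguing that each rightward shift of the burst start reveals a column whose content compensates for the at most $n$ new erasures at the tail. You instead invoke the conservation identity $E_u+E_v+E_q = Bn$ and reduce the lemma to the lower bound $E_u - E_u(i) \ge Bn - k^u(T+1)$ on the $\bu$ sub-symbols erased outside macro-packet $i$; this is symmetric in $j$ and dispenses with the worst-case reduction entirely. The reduction itself is sound: for $t<i+T$ the interferer $\bu_{\mrm{vec}}[t-T]$ is unerased, so $p_l[t]$ is lost iff $q_l[t]$ is; at $t=i+T$, the hypothesis $j \le M-r$ confines the burst so that it ends no later than column $j+B'-1-M \le M-r-2$ of macro-packet $i+b+1$, never touching the $\bq$ region, hence the only extra loss comes from erased $\bu_{\mrm{vec}}[i]$. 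Both the paper's tally for $j=1$ and your identity then agree that $E_v+E_q+E_u(i) = k^u(T+1)$ there, and your reduction asks only for the inequality for general $j$.

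You have not, however, completed the verification, and you say so yourself. In the first regime the small hiccup is that when $B'+j-1 > r$ the erased prefix of macro-packet $i+b$ includes $\bv$-columns, so ``all lie in the $\bu$ region'' is literally false; the bound is rescued because the erased $\bu$ content there is then exactly $k^u = Mb \ge B'(T+b)$. In the second regime you explicitly defer the argument as the crux and sketch a one-for-one ``matching'' between trailing unerased $\bv$ sub-symbols of macro-packet $i+b$ and leading erased $\bu$ sub-symbols of $i+b+1$; this is not the mechanism that makes the count work. The clean resolution is that the regime hypothesis $B' > \tfrac{b}{T+b}M$ is equivalent to $B'n > k^u$, hence $B' \ge r+1$, so the $B'+j-1 \ge B'$ erased columns at the head of macro-packet $i+b$ already swallow the entire $\bu$ region, contributing exactly $k^u = B$; together with the $b-1$ fully erased intermediate macro-packets this yields $(b-1)B + B = bB$, and any spill into macro-packet $i+b+1$ only means macro-packet $i+b$ is fully erased, which can only add more. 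Until this case is closed out, your write-up is an outline rather than a proof.
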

\begin{proof}
See Appendix~\ref{app:mdp_recovery_mismatch}. 
\end{proof}
We next claim that the decoder can recover all the erased $\bv_\mrm{vec}[t]$ sub-symbols by the end of macro-packet $i+T$. To prove this, we recall that $(\bv_{\mrm{vec}}[t], \bp_{\mrm{vec}}[t])$ is a Strongly-MDS code with parameters $(k^v+k^u,k^v, T )$. We consider the following cases,
\begin{itemize}
\item If the burst starts at $j \in \{1,...,r+1\}$ then all the sub-symbols in $\{(\bv_{\mrm{vec}}[t], \bp_{\mrm{vec}}[t])\}_{i \le t \le i+b-1}$ are erased whereas a portion of the sub-symbols in $\{(\bv_{\mrm{vec}}[i+b], \bp_{\mrm{vec}}[i+b])\}$ are erased until the termination of the erasure burst. Furthermore $\{\bp_\mrm{vec}[i+T, l]\}_{j \le l \le r+1}$ are also considered to be erased since they are interfered by the erased $\bu_\mrm{vec}[i,l]$ symbols from macro-packet $i$. Note that all the erased sub-symbols involving $\bv_\mrm{vec}[t]$ will occur in a single erasure burst. Thus applying property L3 in Lemma~\ref{lem:mds-sub} with $j=T$ and $c=0$ and using $\hat{B} + \hat{I} \le k^u(T+1)=(n-k)(j+1)$, which follows from Lemma~\ref{lem:mdp_recovery_mismatch}, we are guaranteed that all the erased $\bv_\mrm{vec}[t]$ are recovered at the end of macro-packet $i+T$.
\item If the burst starts at $j \in \{r+2,\ldots, M-r\}$ then none of the sub-symbols $\bu_\mrm{vec}[i]$ are erased and can be subtracted out from $\bq_\mrm{vec}[i+T]$ to recover $\bp_\mrm{vec}[i+T]$. All the erased sub-symbols thus occur in a burst. Thus using property L2 in Lemma~\ref{lem:mds-sub}, and using $\hat{B} \le (n-k)(T+1)$ which follows from Lemma~\ref{lem:mdp_recovery_mismatch}, we are guaranteed that all the erased $\bv_\mrm{vec}[t]$ are recovered at the end of macro-packet $i+T$.
\item If $j \in \{M-r+1,\ldots, M-1\}$ then none of the sub-symbols in either $\bu_\mrm{vec}[i]$ or $\bv_\mrm{vec}[i]$ are erased. Thus we can proceed to block ${i+1}$ and apply the first step. 
\end{itemize}

Finally as mentioned in Step 4 above, once all the erased sub-symbols $\bv_\mrm{vec}[t]$ have been recovered by macro-packet ${i+T}$, their effect can be canceled and $\bu_\mrm{vec}[t]$, for $t \in \{i,i+1,\ldots, i+b\}$ can be sequentially recovered from macro-packet $t+T$ by computing and subtracting $\bp_\mrm{vec}[t+T]$ from $\bq_\mrm{vec}[t+T]$. Thus each $\bs[t] = (\bu_\mrm{vec}[t],\bv_\mrm{vec}[t])$ can be recovered by the end of macro-packet ${t+T}$. This completes the decoding analysis.

\begin{remark}
We discuss intuition on the fact that the capacity function does not decrease with $B'$ in the first case in~\eqref{eq:genT-cap} defined by $B' \le \frac{b}{T+b}M$. Recall that for this case the parameters that are selected are $k^u = Mb$ and $n=T+b$. Consider an erasure burst that starts at $\bx[i,1]$ and terminates at $\bx[i+b, B']$. We claim that for such an erasure burst, as long as $B' \le \frac{Mb}{T+b}$, only the $\bu[\cdot]$ symbols are erased in macro-packet $\bX[i+b,:]$. In particular the number of sub-symbols that are erased in marco-packet $\bX[i+b,:]$ is equal to $n B' = (T+b) B' \le Mb = k^u$. Since the $\bu[\cdot]$ symbols appears before any other symbols in each macro-packet only these symbols are erased. Thus during the recovery process, the number of parity-checks available for recovering $\bv[\cdot]$ symbols does not decrease as $B'$ is increased from $0$ to $\frac{Mb}{T+b}$. Thus the same code parameters can be used. The above argument assumes that the burst starts at the beginning of a macro-packet. In Appendix~\ref{app:mdp_recovery_mismatch}, in the proof of Lemma~\ref{lem:mdp_recovery_mismatch}, we show that this is indeed the worst case pattern. If the burst starts anywhere else, the number of available parity-checks could only increase. 
This explains why, remarkably, the capacity is not a strictly decreasing function of $B$.
\end{remark}

\subsection{Example}
\label{sec:code_examples}

\begingroup
\begin{table}[ht]
\begin{center}
\begin{tabular}{C{1.75cm}|C{2.0cm}|C{1.75cm}|C{2.0cm}|C{1.75cm}|C{2.0cm}|C{1.75cm}|C{2.0cm}}
\multicolumn{2}{c|}{$\mathbf{X}[0,:]$} & \multicolumn{2}{c|}{$\mathbf{X}[1,:]$} & \multicolumn{2}{c|}{$\mathbf{X}[2,:]$} & \multicolumn{2}{c}{$\mathbf{X}[3,:]$} \\ \hline
$\mathbf{x}[0,1]$ & $\mathbf{x}[0,2]$ & $\mathbf{x}[1,1]$ & $\mathbf{x}[1,2]$ & $\mathbf{x}[2,1]$ & $\mathbf{x}[2,2]$ & $\mathbf{x}[3,1]$ & $\mathbf{x}[3,2]$ \\ \hline
$u_0[0]$ 					& $v_2[0]$ 					& $u_0[1]$ 					& $v_2[1]$ 					& $u_0[2]$ 					& $v_2[2]$ 					& $u_0[3]$ 					& $v_2[3]$ \\
$u_1[0]$ 					& $v_3[0]$ 					& $u_1[1]$ 					& $v_3[1]$ 					& $u_1[2]$ 					& $v_3[2]$ 					& $u_1[3]$				 	& $v_3[3]$ \\
$u_2[0]$ 					& $u_0[-3]+p_0[0]$ 	& $u_2[1]$ 					& $u_0[-2]+p_0[1]$ 	& $u_2[2]$ 					& $u_0[-1]+p_0[2]$ 	& $u_2[3]$ 					& $u_0[0]+p_0[3]$\\
$v_0[0]$ 					& $u_1[-3]+p_1[0]$ 	& $v_0[1]$ 					& $u_1[-2]+p_1[1]$ 	& $v_0[2]$ 					& $u_1[-1]+p_1[2]$ 	& $v_0[3]$ 					& $u_1[0]+p_1[3]$ \\
$v_1[0]$ 					& $u_2[-3]+p_2[0]$ 	& $v_1[1]$ 					& $u_2[-2]+p_2[1]$ 	& $v_1[2]$ 					& $u_2[-1]+p_2[2]$ 	& $v_1[3]$ 					& $u_2[0]+p_2[3]$\\ \hline
\end{tabular}
\end{center}
\caption{Code construction for $(M=2,B=3,T=3)$ achieving a rate of $R = \frac{7}{10}$.}
\label{tab:example1}
\end{table}
\endgroup

In this section we show a code construction for parameters $M=2$, $B=3$, $T=3$. Note that $b=1$ and $B'=1 > \frac{1}{2} = \frac{b}{T+b}M$. Thus, the capacity is given by $C= \frac{M(T+b+1)-B}{M(T+b+1)} = \frac{7}{10}$ which can be achieved using the code illustrated in Table~\ref{tab:example1}.

\subsubsection*{Encoding}
\begin{enumerate}
	\item Split each source symbol into $M(T+b+1)-B = 7$ symbols i.e., $\bs[i] = (s_0[i],\cdots,s_6[i])$.
	\item Divide these into two groups $\bu_\mrm{vec}[i]$ and $\bv_\mrm{vec}[i]$ with $k^u = B = 3$ and $k^v = M(T+b+1)-2B = 4$ sub-symbols, respectively, as in~\eqref{eq:s-part}. We let
	$\bu_\mrm{vec}[i]=(u_0[i],\cdots,u_2[i])=(s_0[i],\cdots,s_2[i])$ and $\bv_\mrm{vec}[i] = (v_0[i],\cdots,v_3[i]) = (s_3[i],\ldots, s_6[i])$. 
	\item	We place $B=3$ parity symbols $\bq_\mrm{vec}[i]=(q_0[i],q_1[i],q_2[i])$ into the last channel packet of each macro-packet. These parities consist of two components, $\bq_\mrm{vec}[i]$=$\bp_\mrm{vec}[i]$+$\bu_\mrm{vec}[i-3]$. The parity symbols $\bp[i]$ are generated using a Strongly-MDS code.
\end{enumerate}

\subsubsection*{Decoding}
Since $M=2$, there are two burst patterns that need to be checked.
\begin{enumerate}
	\item Burst that erases $\bx[0,1]$, $\bx[0,2]$ and $\bx[1,1]$.
		
		\textit{Recovery of $v$ symbols:} We first subtract $\bu_\mrm{vec}[t-T]$ from $\bq_\mrm{vec}[t]$ for $t = \{1,2\}$ to recover the corresponding $\bp_\mrm{vec}[t]$. These are a total of $6$ sub-symbols and thus can be used to recover $v_0[0],\cdots,v_3[0]$ as well as $v_0[1],v_1[1]$. In other words, all erased $v$ sub-symbols are recovered by the end of the macro-packet $\bX[2,:]$.
		
		\textit{Recovery of $u$ symbols:} With all the erased $v$ symbols now recovered, we can compute the $\bp_\mrm{vec}[t]$ symbols for $t = \{3,4\}$ and subtract them from $\bq_\mrm{vec}[t]$ to recover $\bu_\mrm{vec}[0]$ and $\bu_\mrm{vec}[1]$ at their respective deadlines.
	
	\item Burst that erases $\bx[0,2], \bx[1,1], \bx[1,2]$.
		
		\textit{Recovery of $v$ symbols:} Since $\bu_\mrm{vec}[0]$ is not erased, we can subtract it from $\bq_\mrm{vec}[3]$ to recover $\bp_\mrm{vec}[3]$. This together with $\bp_\mrm{vec}[2]$ is a total of $6$ sub-symbols. Thus, they can be used to recover the erased $v$ symbols $(v_2[0],v_3[0])$ and $(v_0[1],\cdots,v_3[1])$. 
		
		\textit{Recovery of $u$ symbols:} Similar to the previous burst pattern, we compute the value of the parity-check symbols $\bp_\mrm{vec}[4]$ and subtract it from $\bq_\mrm{vec}[4]$ to recover $\mathbf{u}[1]$ by its deadline.
\end{enumerate}

\subsection{Converse}
\label{subsec:mismatch-converse}
In order to establish the converse we first consider the case when $T>b$. We show that any feasible rate satisfies
\begin{align}
R \le R^+=\min\left(\frac {M(T+b+1)-B}{M(T+b+1)}, \frac{T}{T+b}\right). \label{eq:conv-bnd}
\end{align}

\begin{figure*}
\begin{center} 
\includegraphics[width=\columnwidth]{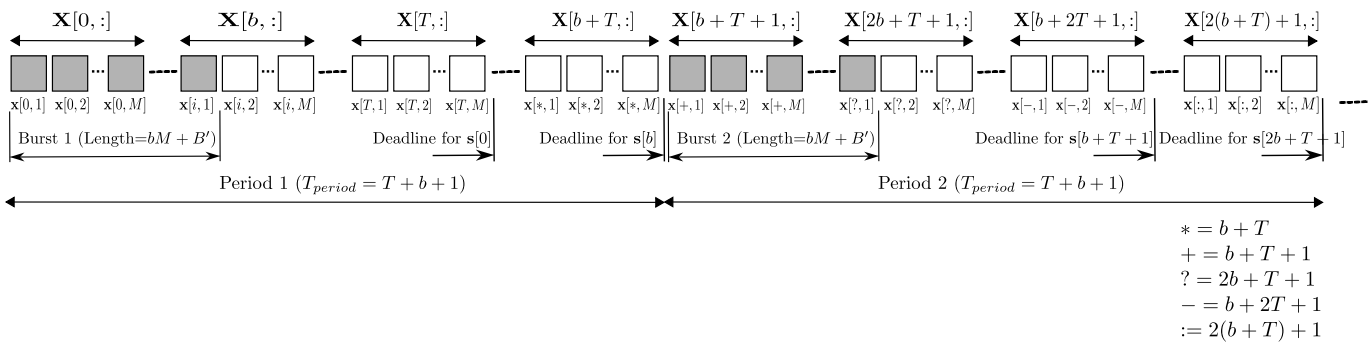}
\caption{Periodic erasure patterns\label{fig:periodic_burst}}
\end{center}
\end{figure*}


Consider a periodic erasure channel as shown in Figure \ref{fig:periodic_burst}. Each period consists of 
$\tau_P=T+b+1$ macro-packets. In each such period the first $B$ channel packets are erased and the subsequent $M(b+T+1)-B$ are not. Consider the first period with the burst starting at $\bx[0,1]$. By definition we require that $\bs[0]$ be recovered by the end of macro-packet $T$, $\bs[1]$ by macro-packet $T+1$ and likewise the last erased source symbol $\bs[b]$ by macro-packet $T+b$. Thus all the lost source packets are recovered by macro-packet $t=T+b$. Once these erased packets are recovered, we can treat these erasures as having never happened and simply repeat the argument for the next period and so on. Therefore our proposed streaming code must be a feasible code for the periodic erasure channel. Since the capacity of the erasure channel is simply the fraction of the non-erased channel symbols, it follows that
\begin{equation}
\label{eq:ub-pec}
R^+= \frac {M(T+b+1)-(bM+B')}{M(T+b+1)}.
\end{equation}
is an upper bound on the rate of any feasible streaming code.

To establish the other inequality in~\eqref{eq:conv-bnd} we consider a periodic erasure channel consisting of $\tau_P =T+b$ macro-packets and assume that in each period the first $\hat{B} =Mb \le B$ channel symbols are erased. Thus in the proposed channel first $b$ macro-packets are completely erased in each period and the remaining $T$ macro-packets are not erased. In particular in the first period, $\bs[0],\ldots, \bs[b-1]$ must be recovered at the end of macro-packets $T,\ldots, T+b-1$ respectively. At this point all the erased source symbols have been recovered and we can proceed to the recovery of the second burst starting at macro-packet $T+b$. Thus the streaming code must also be feasible on this erasure channel whose capacity is clearly $\frac{T}{T+b}$, and thus the upper bound follows.

When $T=b$ we show that
\begin{align}
C \le \min\left(\frac{M-B'}{M},\frac{1}{2} \right) \label{eq:conv-bnd-2}.
\end{align}
When $B' \le M/2$, the second condition $C \le \frac{1}{2}$ dominates. This bound immediately follows from~\eqref{eq:conv-bnd} by substituting $T=b$ in the second expression in~\eqref{eq:conv-bnd}. 
Thus we only need to show that when $B' > \frac{M}{2}$ and $T = b$ the upper bound $C \le \frac{M-B'}{M}$ is valid.

We start by considering a channel that erases the first $B = bM + B'$ channel packets $\mathbf{x}[i,1],\dots,\mathbf{x}[i+b,B']$. Since the delay constraint for $\mathbf{s}[i]$ is $i+T = i+b$, the following equation should be satisfied,
\begin{align}
\label{eq:Tmin-converse1}
& \mathsf{H}(\mathbf{s}[i] \big| \mathbf{x}[i+b,B'+1],\dots,\mathbf{x}[i+b,M]) = 0 \nonumber \\
& \Rightarrow \mathsf{H}(\bs) \le (M-B') \mathsf{H}(\bx),
\end{align}
which implies that $R = \frac{H(\rvbs)}{MH(\rvbx)} \le \frac{M-B'}{M}$ as required. This completes the proof of the upper bound. 

\subsection{Robust Extensions}
\label{subsec:robust-ext}
In Section~\ref{subsec:mismatch-construction}, we provided capacity achieving codes for $\cC(1,B,W \ge M(T+1))$. 
In order to extend the codes for channels with $N > 1$, we apply the approach used in the MiDAS construction in Section~\ref{subsec:midas}. In particular we construct an optimal burst erasure code and then append additional parity-checks for the $\bu[\cdot]$ symbols to deal with isolated losses. In particular we extend the macro-packet construction
in~\eqref{eq:X} as follows,
\footnotesize
\begin{align}
&\bX[i,:] = \left[\bx[i,1]|\dots|\bx[i,M]\right] =\nonumber \\
&\left[
\begin{array}{c|c|c|c|c|c|c|c|c}
 \begin{array}{c} \multirow{2}{*}{$\bu[i,1]$} \\ \\ \bp^u[i,1] \end{array} & \cdots & \begin{array}{c} \multirow{2}{*}{$\bu[i,r]$} \\ \\ \bp^u[i,r] \end{array} & \begin{array}{c}	\bu[i,r+1] \\ \bv[i,1] \\ \bp^u[i,r+1] \end{array} & \cdots & \begin{array}{c} \bq[i,r+1] \\ \bv[i,M-2r] \\ \bp^u[i,M-r] \end{array} & \begin{array}{c} \multirow{2}{*}{$\bq[i,r]$} \\ \\ \bp^u[i,M-r+1] \end{array} & \cdots & \begin{array}{c} \multirow{2}{*}{$\bq[i,1]$} \\ \\ \bp^u[i,M] \end{array}
\end{array} \right], \label{eq:Xrobust}
\end{align}
\normalsize
where $\bu[i,j],$ $\bv[i,j]$ and $\bq[i,j]$ are symbols obtained from the optimal code for the $\cC(N=1,B,W)$ channel.
We apply another $(k^u+M k^s,k^u,T)$ Strongly-MDS code to the $\bu_{\mrm{vec}}[\cdot]$ symbols generating $M k^s$ parity-check sub-symbols $(p^u_1[i],\dots,p^u_{M k^s}[i]) = \bp^u_{\mrm{vec}}[i] \in \mathbb{F}_q^{M k^s}$. We then concatenate the generated parities after splitting them into $M$ equal groups to each channel packet, $\bp^u_\mrm{vec}[i] = (\bp^u[i,1],\dots,\bp^u[i,M])$ as shown in~\eqref{eq:Xrobust}. The corresponding rate of such code is clearly $R = \frac{k^u+k^v}{M(n+k^s)}$, where $k^u$, $k^v$ and $n$ are based on the optimal code for the burst-only channel.

\begin{prop}
\label{prop:robust}
To recover from any $N \le \left\lfloor \frac{T}{T+b}Mb \right\rfloor$ isolated erasures when $W \ge M(T+1)$ and $T>b$, it suffices to select
\begin{align}
\label{eq:ks}
k^s = \left\lceil\frac{Nn}{M(T+1)-N} \right\rceil.
\end{align}
where $\lceil\cdot\rceil$ and $\lfloor\cdot\rfloor$ denote the ceil and floor functions respectively.
\end{prop}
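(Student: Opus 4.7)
The plan is to mimic the two-layer MiDAS argument from Section~\ref{subsec:midas} in the unequal source-channel rate setting. Since an erasure burst of length $B$ is already handled by the decoding analysis of Section~\ref{subsec:mismatch-decoding} applied to the inner code, I only need to treat the isolated-erasure regime. By the time invariance of the construction and induction on the macro-packet index, it suffices to argue that whenever at most $N$ isolated channel-packet erasures occur in a window of $M(T+1)$ channel packets covering $\bs[i]$, both $\bu_\mrm{vec}[i]$ and $\bv_\mrm{vec}[i]$ can be recovered by the end of macro-packet $i+T$. I will handle the two constituent Strongly-MDS codes in~\eqref{eq:Xrobust} separately.

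For the $(\bv,\bp^v)$ layer I would use the cancellation trick from Step~1 of Section~\ref{subsec:mismatch-decoding}: in each macro-packet $t \in [i,i+T-1]$ whose $\bq$ sub-symbols are not erased, subtract the inductively available $\bu_\mrm{vec}[t-T]$ from $\bq_\mrm{vec}[t]$ to expose $\bp_\mrm{vec}[t]$. This reduces the problem to decoding the $(k^u+k^v,k^v,T)$ Strongly-MDS code under sub-symbol erasures. Property L1 of Lemma~\ref{lem:mds-sub} with $j=T-1$ tolerates up to $k^u T = MbT$ erased sub-symbols inside the $T$-macro-packet window $[i,i+T-1]$. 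A per-packet count from~\eqref{eq:Xrobust} shows that an erased channel packet contributes at most $n$ sub-symbols to this layer (achieved by a $\bv$- or $\bq$-heavy packet), so $N$ isolated erasures produce at most $Nn$ code-level erasures. The inequality $Nn \le MbT$ is exactly the hypothesis $N\le \lfloor \tfrac{T}{T+b}Mb \rfloor$ in the first subcase $B' \le \tfrac{b}{T+b}M$ of~\eqref{eq:capacity}; in the second subcase $B' > \tfrac{b}{T+b}M$ one has $n=T+b+1$ and $k^u=B$, and the analogous bound is strictly weaker, so the same hypothesis still suffices. This gives the recovery of every erased $\bv_\mrm{vec}[\cdot]$ symbol in the window by macro-packet $i+T-1$.

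For the outer $(\bu,\bp^u)$ layer, the added parities make $(\bu_\mrm{vec}[t],\bp^u_\mrm{vec}[t])$ a Strongly-MDS code with parameters $(k^u+Mk^s,k^u,T)$. Applying Lemma~\ref{lem:mds-sub} L1 with $j=T$ over the $(T+1)$-macro-packet window $[i,i+T]$ shows that it can tolerate up to $Mk^s(T+1)$ erased code sub-symbols. The delicate step, and the part I expect to be the main obstacle, is the per-packet accounting: from~\eqref{eq:Xrobust}, any of the first $r$ channel packets of a macro-packet simultaneously carries $n$ sub-symbols of $\bu$ and $k^s$ sub-symbols of $\bp^u$, for a worst-case contribution of $n+k^s$ code sub-symbols per erased channel packet. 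This coupling of $\bu$ and $\bp^u$ erasures on the same packet is exactly what produces the $M(T+1)-N$ denominator rather than the more naive $M(T+1)$ one might first write. Combining the per-packet bound with the erasure tolerance of the code gives $N(n+k^s)\le Mk^s(T+1)$, which rearranges to $k^s \ge \tfrac{Nn}{M(T+1)-N}$; the hypothesis $N \le \lfloor \tfrac{T}{T+b}Mb \rfloor < M(T+1)$ keeps the denominator positive, and taking $k^s = \lceil \tfrac{Nn}{M(T+1)-N} \rceil$ closes the argument.
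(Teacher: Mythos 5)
Your proposal is correct and follows essentially the same two-layer analysis as the paper: cancel $\bu_\mathrm{vec}[t-T]$ from $\bq_\mathrm{vec}[t]$ to expose $\bp_\mathrm{vec}[t]$, apply Lemma~\ref{lem:mds-sub} L1 to the $(k^u+k^v,k^v,T)$ code over the window $[i,i+T-1]$ with worst-case per-packet cost $n$ (giving $N\le\lfloor k^uT/n\rfloor$, minimized over the two subcases to $\lfloor MbT/(T+b)\rfloor$), and then apply L1 to the $(k^u+Mk^s,k^u,T)$ code over $[i,i+T]$ with per-packet cost $n+k^s$ to get $N\le\lfloor Mk^s(T+1)/(n+k^s)\rfloor$, which rearranges to~\eqref{eq:ks}. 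Your explicit remark that the $M(T+1)-N$ denominator comes from $\bp^u$ and $\bu$ residing on the same channel packets is a nice way to highlight what the paper leaves implicit, but it does not change the argument.
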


\begin{proof}
We recall that there are two Strongly-MDS codes underlying our construction in~\eqref{eq:Xrobust}. A $(k^u + k^v, k^v, T)$ Strongly-MDS code is applied to $\bv_\mrm{vec}[\cdot]$ symbols to generate parity-checks $\bp_\mrm{vec}[\cdot]$ and $\bq_\mrm{vec}[t] = \bp_\mrm{vec}[t] + \bu_\mrm{vec}[{t-T}]$ are transmitted. Furthermore a $(k^u+ M k^s, k^u, T)$ Strongly-MDS code is applied to the $\bu_\mrm{vec}[\cdot]$ symbols to generate parity-checks $\bp^u_\mrm{vec}[\cdot]$. 
We recall that there are two Strongly-MDS codes underlying our construction in~\eqref{eq:Xrobust}. A $(k^u + k^v, k^v, T)$ Strongly-MDS code is applied to $\bv_\mrm{vec}[\cdot]$ symbols to generate parity-checks $\bp_\mrm{vec}[\cdot]$ and $\bq_\mrm{vec}[t] = \bp_\mrm{vec}[t] + \bu_\mrm{vec}[{t-T}]$ are transmitted. Furthermore a $(k^u+ M k^s, k^u, T)$ Strongly-MDS code is applied to the $\bu_\mrm{vec}[\cdot]$ symbols to generate parity-checks $\bp^u_\mrm{vec}[\cdot]$. 

Let us consider the window of length $T$ consisting of the macro-packets $\bX[i,:],\dots,\bX[i+T-1,:]$ and assume that there are $N$ erasures in arbitrary positions. Note that in $\bq_{\mrm{vec}}[t] = \bp_{\mrm{vec}}[t] + \bu_{\mrm{vec}}[t-T]$ for $t \in [i,i+T-1]$, the $\bu_{\mrm{vec}}[\cdot]$ are from time $i-1$ or before, and can be canceled to recover $\bp_{\mrm{vec}}[t]$. The $(k^u+k^v,k^v,T)$ Strongly-MDS code can recover $\bv_{\mrm{vec}}[i]$ if no more than $k^uT$ sub-symbols are erased among $(\bv_{\mrm{vec}}[i],\bq_{\mrm{vec}}[i],\dots,\bv_{\mrm{vec}}[i+T-1],\bq_{\mrm{vec}}[i+T-1])$. Since these sub-symbols are reshaped into columns each having no more than $n$ sub-symbols, the number of erasures that are guaranteed to be corrected is given by,

\begin{align}
N^v &= \left\lfloor \frac{k^uT}{n} \right\rfloor \nonumber \\
&\ge \min \left( \left\lfloor \frac{(bM+B')T}{T+b+1} \right\rfloor\bigg|_{B' \ge \frac{b}{T+b}M} , \left\lfloor \frac{MbT}{T+b} \right\rfloor\bigg|_{B' < \frac{b}{T+b}M} \right) \label{eq:Nv2}\\
&= \left\lfloor \frac{MbT}{T+b} \right\rfloor \label{eq:Nv3}\\
&\ge N, \label{eq:Nv}
\end{align}

where we use
\begin{align}
(k^u,n) = 
\begin{cases}
(B,T+b+1), & B' \ge \frac{b}{T+b}M \\
(Mb,T+b), & B' < \frac{b}{T+b}M \\
\end{cases}
\end{align}
to get~\eqref{eq:Nv2} and substitute for $B' \ge \frac{b}{T+b}M$ in the first term in~\eqref{eq:Nv2} to get~\eqref{eq:Nv3}.

Next we consider the number of erased symbols that can be corrected by the $(k^u+M k^s,k^u,T)$ Strongly-MDS code. Using Lemma~\ref{lem:mds-sub}, one can see that this code can recover from $Mk^s(T+1)$ erasures in the window of interest. Since each channel input can have up to $n+k^s$ sub-symbols belonging to this code, the total number of erasures that can be corrected is given by, 
\begin{align}
\label{eq:Nu}
N^u = \left\lfloor \frac{M k^s(T+1)}{n + k^s} \right\rfloor
\end{align}
which upon re-arranging gives~\eqref{eq:ks}.
\end{proof}

\begin{remark}
Unlike the case of MiDAS codes, we do not claim the optimality of the proposed robust codes. Nevertheless in the simulation results we observe that in some cases these codes outperform baseline schemes.
\end{remark}

\section{Numerical Comparisons}
\label{subsec:numerical}


\begin{table}[!htdp]
\begin{center}
\begin{tabular}{c|c|c}
Code & $N$ & $ B$ \\\hline\hline
Strongly-MDS Codes & $(1-R)(T+1)$ &$(1-R)(T+1)$ \\\hline
Maximally Short Codes & $1$ & $T\cdot\min\left(\frac{1}{R}-1, 1\right)$ \\\hline
 \!\!MiDAS Codes \!\! & $\min\left(B, T\!- \!\frac{R}{1-R}B \right) $& $B \in [1, T]$ \\\hline
E-RLC Codes~\cite{erlc-infocom} & & \\
 ${\Delta \in [R(T+1), T-1]},$ &$\frac{1-R}{R}(T-\Delta)+1$ & $\frac{1-R}{R}\Delta$ \\
 ${R \ge 1/2}$ & & \\\hline
\end{tabular}
\end{center}
\caption{Achievable $(N,B)$ for channel $\cC(N,B,W \ge T+1)$ for equal source-channel rates. Similar Tradeoffs for the first three codes can be achieved for $W < T+1$ by replacing $T$ with $W-1$.}
\label{Sco:Codes}
\end{table}%

\begin{figure}
 \begin{minipage}[t]{0.475\linewidth}
  \centering
  \includegraphics[width=\columnwidth]{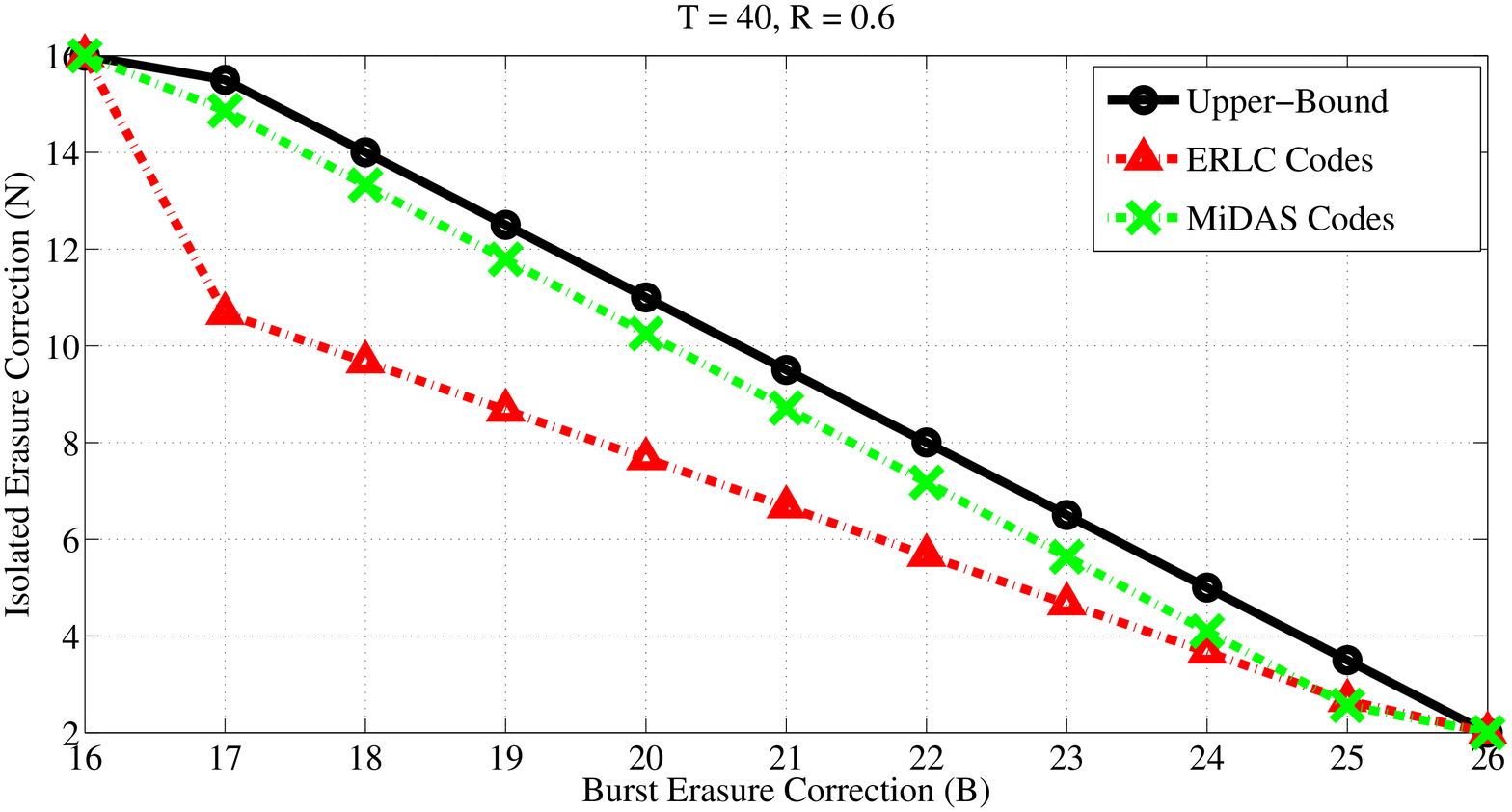}
  \caption{Achievable tradeoff between $N$ and $B$ for equal source-channel rates. The rate is fixed to $R = 0.6$ and the delay is fixed to $T=40$ and $W = T+1$. The uppermost curve (solid black lines with `o') is the upper bound in~\eqref{eq:r-ub}. The MiDAS codes are shown with broken green lines with `$\times$' and are very close to the upper bound. The E-RLC codes in~\cite{erlc-infocom} are shown with broken red lines with `$\triangle$'.}
\label{fig:cTdT_Tradeoff}
 \end{minipage}
 \hspace{0.05\linewidth}
 \begin{minipage}[t]{0.475\linewidth}
  \centering
  \includegraphics[width=\columnwidth]{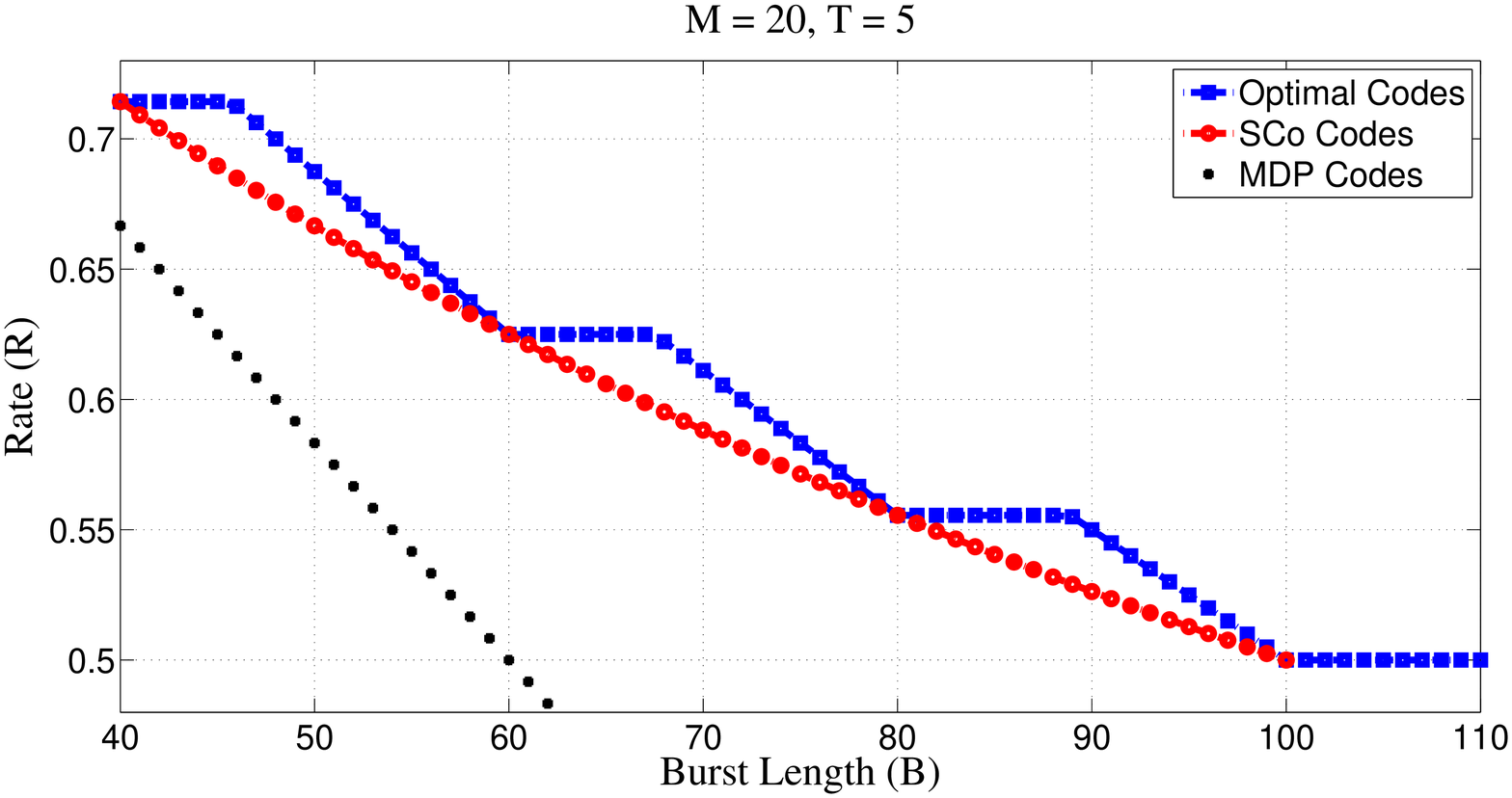}
  \caption{Achievable rates for different code constructions for the case of unequal source-channel rates for the $\cC(N=1,B, W= M(T+1))$ channel. We fix the delay to $T=5$ macro-packets and let $M=20$. The blue plot (marked with squares) corresponds to the capacity in Theorem~\ref{thm:capacity}. The red curve (marked with circles) corresponds to the rate achieved by the adapted MS code~\eqref{eq:R-SCO} whereas the black line corresponds to the rate of the Strongly-MDS code~\eqref{eq:r-MDS-mismatch}.}
\label{fig:RvsB_T5_M20}
 \end{minipage}
\end{figure}

\subsection{Equal Source-Channel Rates}
Table~\ref{Sco:Codes} summarizes the feasible values of $N$ and $B$ for different codes\footnote{We note that the floor of the values given in Table~\ref{Sco:Codes} should be considered as the values might not be integers}. For a fixed rate $R$ and delay $T$ we indicate the values of $N$ and $B$ achieved by various codes in the case of equal source-channel rates.
The first row corresponds to the Strongly-MDS in Section~\ref{subsec:strongly-mds}, while the second row corresponds to the MS codes in Section~\ref{subsec:maximally-short}. The third row corresponds to our proposed construction --- MiDAS codes --- in Theorem~\ref{thm:midas}. In contrast to the Strongly-MDS codes and MS codes, that only attain specific values of $N$ and $B$, the family of MiDAS codes can attain a range of $(N,B)$ for a given $R$ and $T$. The last row corresponds to another family of codes -- Embedded Random Linear Codes (E-RLC) -- proposed in~\cite{erlc-infocom}. While such constructions are optimal for $R=1/2,$ they are far from optimal in general and will not be discussed in this paper. 

We further numerically illustrate the achievable $(N,B)$ pairs for various codes in Fig.~\ref{fig:cTdT_Tradeoff}. We fix the rate to $R = 0.6$. As stated before, the Strongly-MDS and MS codes in sections~\ref{subsec:strongly-mds} and~\ref{subsec:maximally-short} respectively only achieve the extreme points on the tradeoff. The MiDAS codes achieve a tradeoff, very close to the upper bound for all rates. The E-RLC codes, illustrated with the red plot, are generally far from optimal except for $R=0.5$ which is not the case in this figure. 


\subsection{Unequal Source-Channel Rates}
Fig.~\ref{fig:RvsB_T5_M20} illustrates the capacity and rates achieved with baseline schemes for the case of unequal source-channel rates. In this example, we consider $M=20$ and a delay of $T=5$ macro-packets and plot the rate vs.\ correctable burst length. The capacity is shown by the blue-curve marked with squares. Note that it is constant in the intervals $B \in [40,45], [60, 67], [80, 88], [100, 110],$ which corresponds to the first case in~\eqref{eq:capacity}. The red curve marked with circles denotes the rate achieved by a suitable modification of the MS code~\eqref{eq:R-SCO}. We note that the curves intersect whenever $B$ is an integer multiple of $M$, indicating the optimality of the MS codes for these special values: $B \in \{40,60,80,100\}$. Furthermore for burst lengths $B > MT = 100$, the MS codes are no longer feasible and the associated rate is zero. The dotted black line shows the performance of the Strongly-MDS codes in~\eqref{eq:r-MDS-mismatch}. Since these codes do not perform sequential recovery, their achievable rate is significantly lower than the capacity.

\section{Simulation Results}
\label{sec:simulation}


\begin{figure}
\centering
 \begin{minipage}[t]{0.33\linewidth}
  \centering
  \includegraphics[scale=0.5]{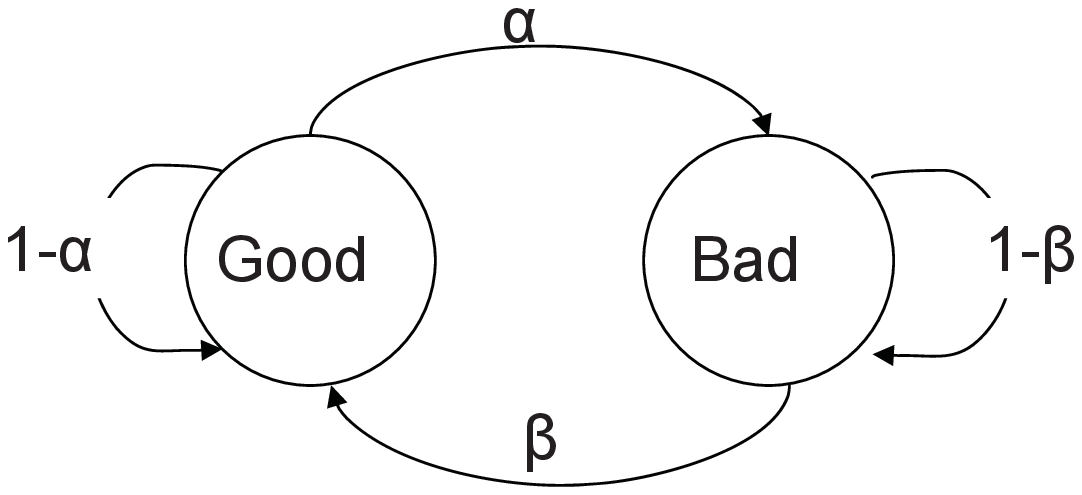}
  \caption[]{Gilbert-Elliott Channel Model}
\label{fig:GE}
 \end{minipage}
 \hspace{0.1cm}
	\centering
 \begin{minipage}[t]{0.65\linewidth}
  \centering
  \includegraphics[scale=0.45]{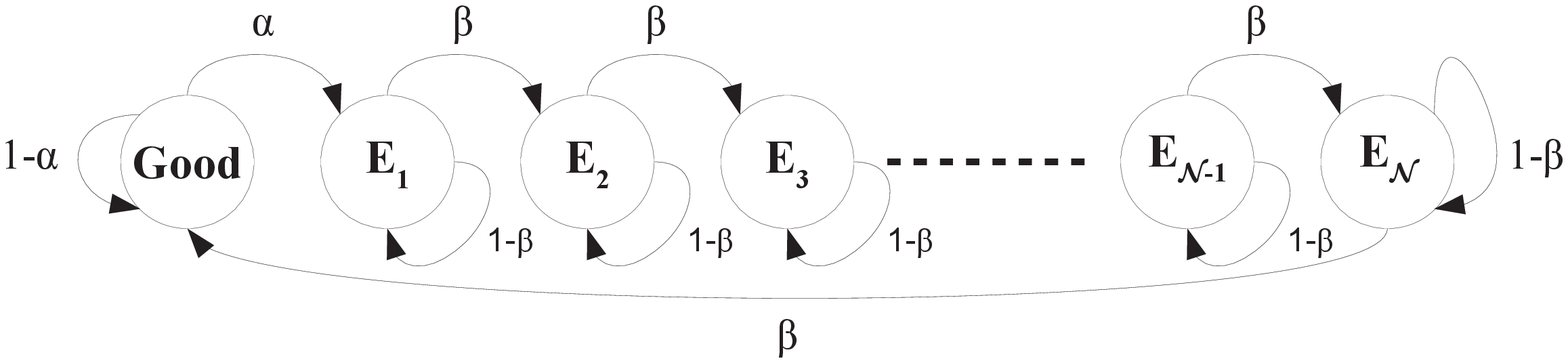}
		\captionsetup{justification=centering}
  \caption{Fritchman Channel Model}
\label{fig:fritchman}
 \end{minipage}
\end{figure}

In this section we study the validity of our proposed code constructions over statistical channel models. We consider two classes of channels that introduce both burst and isolated erasures.
A Gilbert-Elliott channel is a two-state Markov model. In the ``good state" each channel packet is lost with a probability of $\eps$ whereas in the ``bad state" each channel packet is lost with a probability of $1$. We note that the average loss rate of the Gilbert-Elliott channel is given by
\begin{align}
\Pr(\cE) = \frac{\beta}{\beta+\al}\eps + \frac{\al}{\al+\beta} \label{eq:loss-uncoded}.
\end{align}
where $\al$ and $\beta$ denote the transition probability from the good state to the bad state and vice versa. 
As long as the channel stays in the bad state the channel behaves as a burst-erasure
channel. The length of each burst is a Geometric random variable with mean of $\frac{1}{\beta}$.
When the channel is in the good state it behaves as an i.i.d.\ erasure channel with an erasure
probability of $\eps$. The gap between two successive bursts is also a geometric random variable
with a mean of $\frac{1}{\al}$. Finally note that $\eps=0$ results in a Gilbert Channel~\cite{gilbert}, which only results in burst losses.

Fig.~\ref{fig:fritchman} shows a Fritchman channel model~\cite{fritchman} with a total of ${\mathcal{N}+1}$ states. One of the states is the good state and the remaining $\mathcal{N}$ states are bad states.
We again let the transition probability from the good state to the first bad state $E_1$ to be $\al$ whereas the transition probability from each of the bad states equals $\beta$. Let $\eps$ be the probability of a packet loss in good state. We lose packets in any bad state with probability $1$. The burst length distribution in a Fritchman model is a hyper-geometric random variable instead of a geometric random variable. Fritchman and related higher order Markov models are commonly used to model fade-durations in mobile links.

\subsection{Equal Source-Channel Rates}

\begin{figure*}
  \centering
  \subfigure[Simulation results. All codes are evaluated using a decoding delay of $T=12$ symbols and a rate of $R = 12/23 \approx 0.52$.]
  {
    \includegraphics[width=0.48\linewidth, height=5cm]{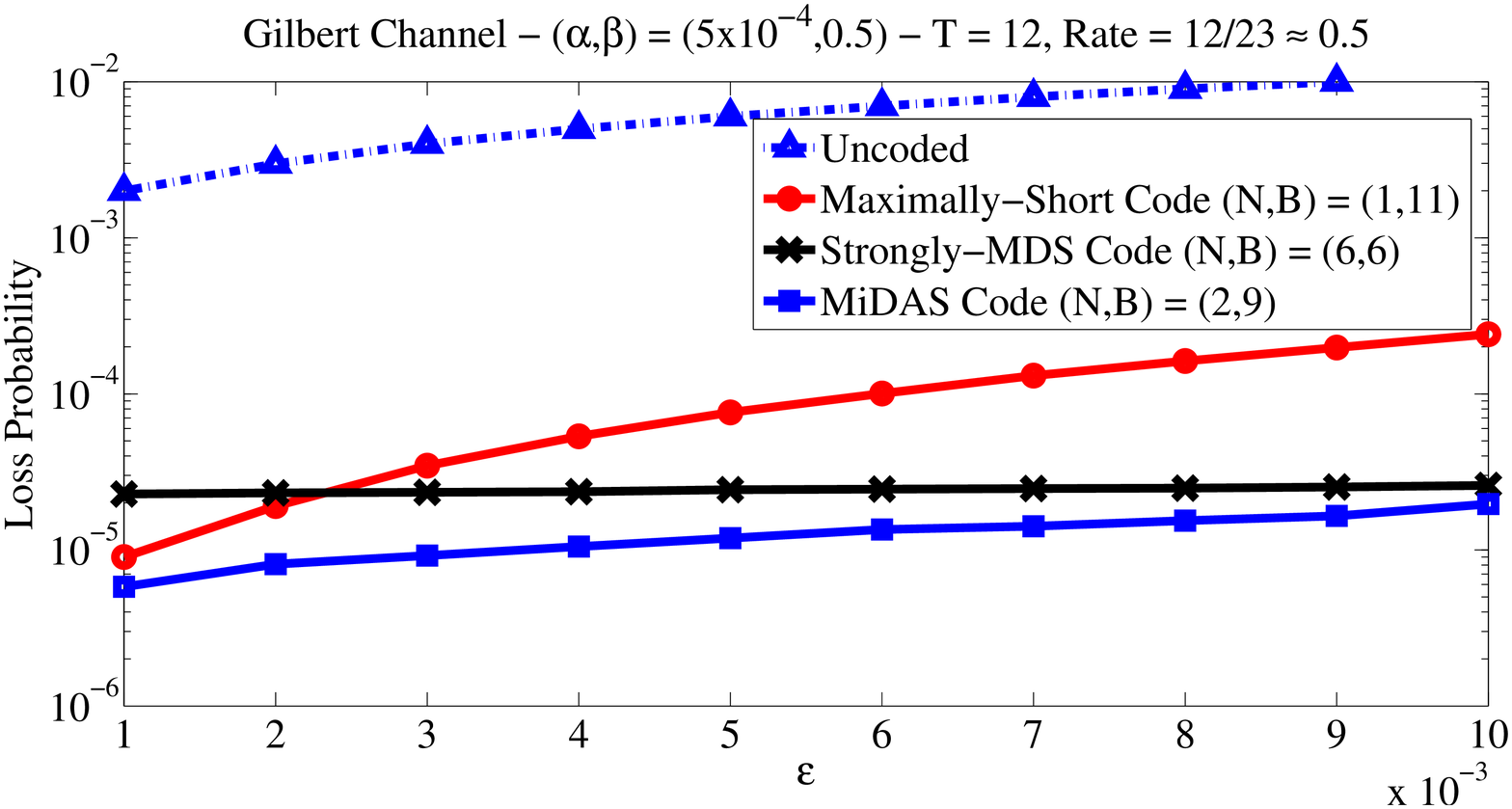}
    \label{fig:GE_T12_R5_Performance}
  }\hspace{0.1em}
  \subfigure[Burst Histogram.]
  {
    \includegraphics[width=0.48\linewidth, height=5cm]{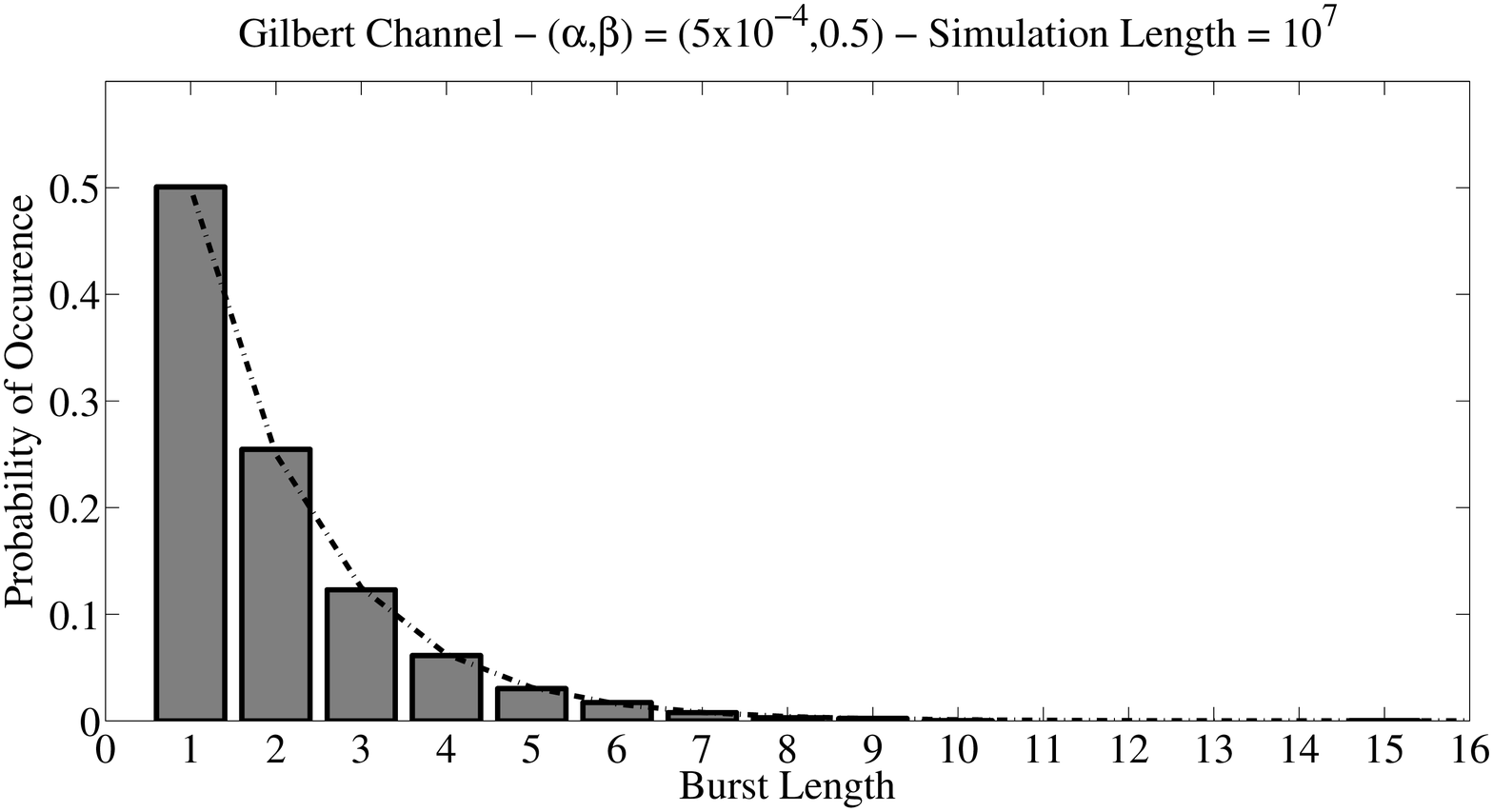}
    \label{fig:GE_T12_R5_Burst}
  }
  \caption{Simulation Experiments for Gilbert-Elliott Channel Model with $(\al,\beta) = (5 \times 10^{-4},0.5)$.}
  \label{fig:Gilbert_T12}
\end{figure*}

\begin{figure*}
  \centering
  \subfigure[Simulation results. All codes are evaluated using a decoding delay of $T=50$ symbols and a rate of $R = 50/83 \approx 0.6$.]
  {
    \includegraphics[width=0.48\linewidth, height=5cm]{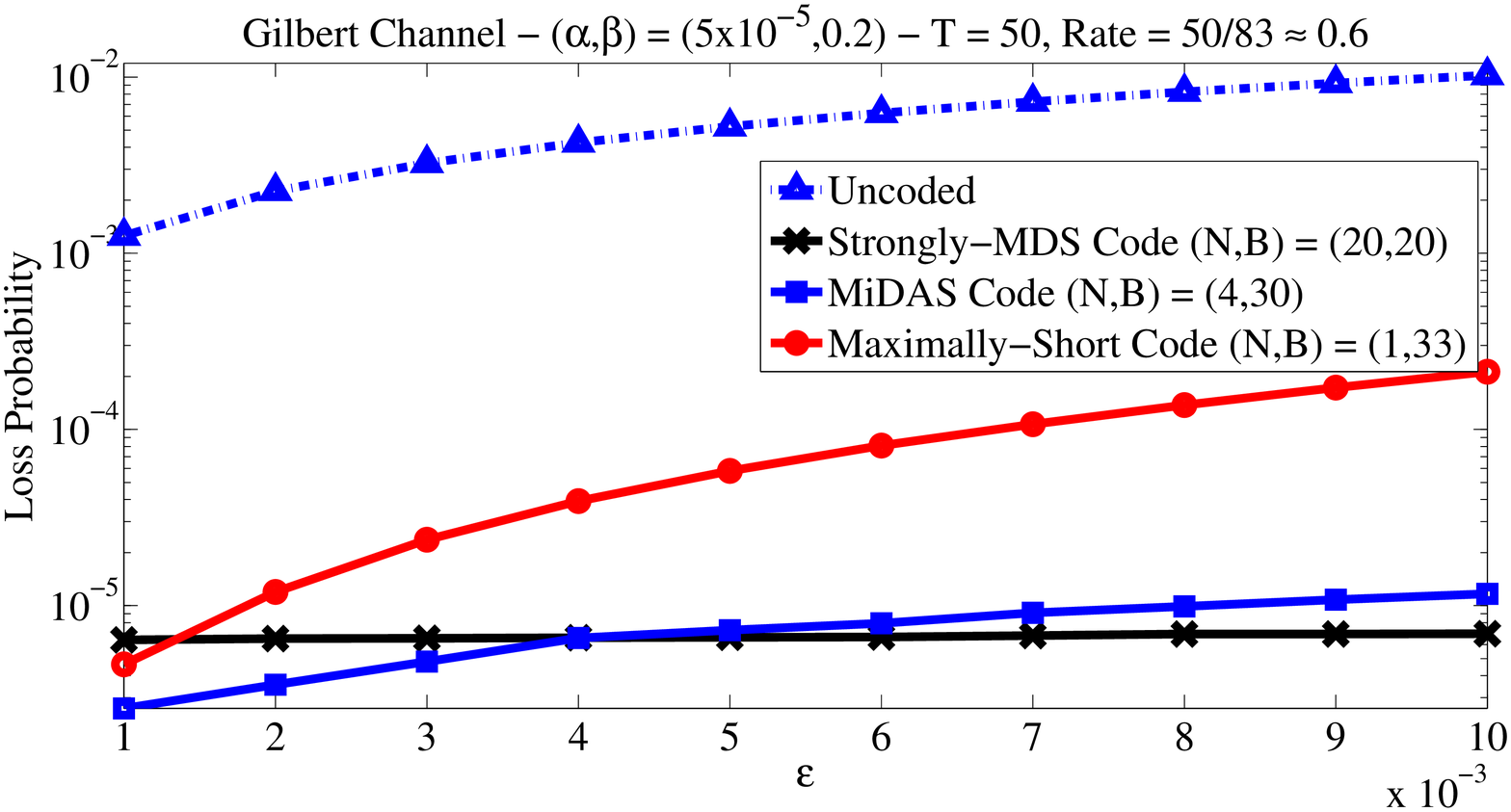}
    \label{fig:GE_T50_R6_Performance}
  }\hspace{0.1em}
  \subfigure[Burst Histogram.]
  {
    \includegraphics[width=0.48\linewidth, height=5cm]{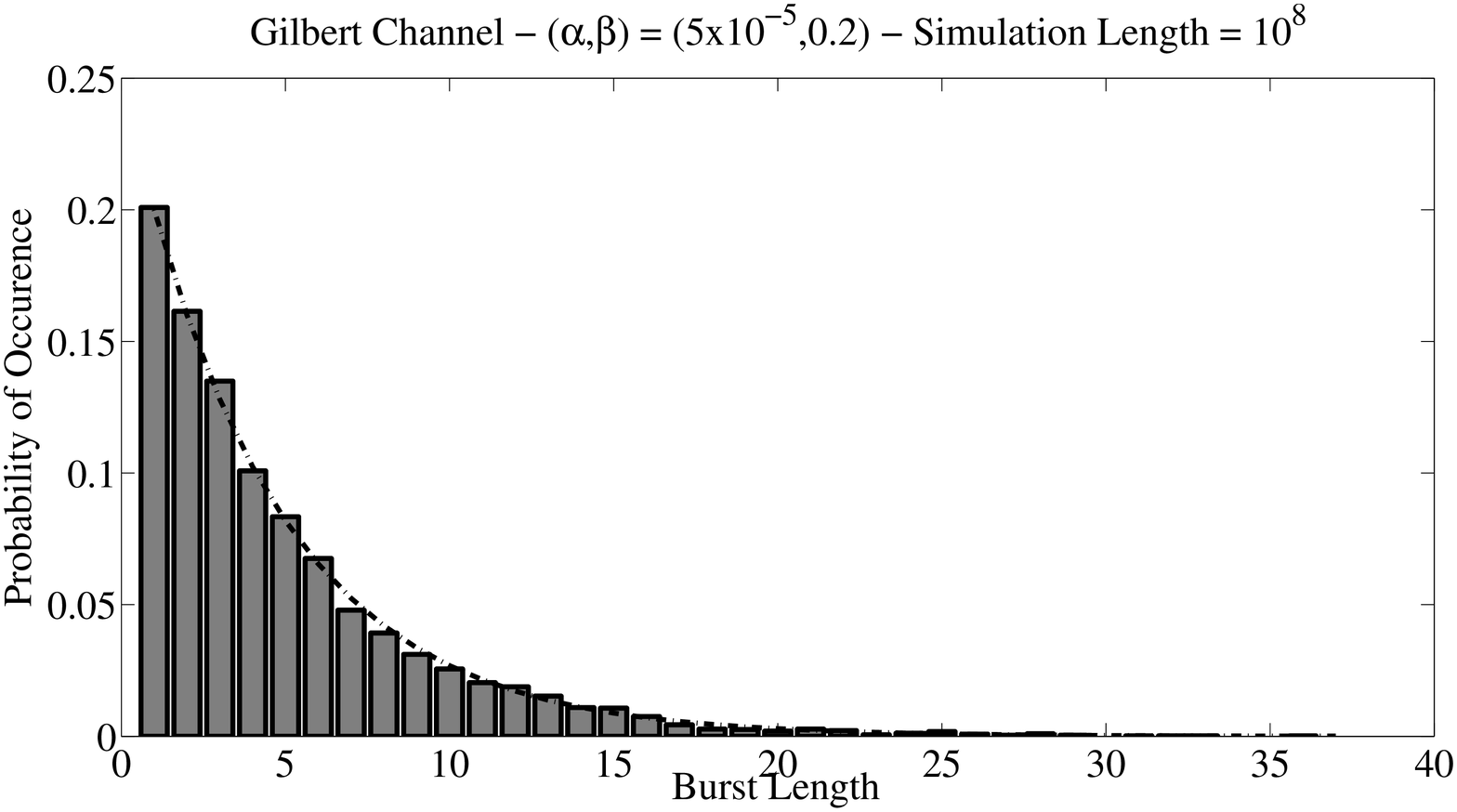}
    \label{fig:GE_T50_R6_Burst}
  }
  \caption{Simulation Experiments for Gilbert-Elliott Channel Model with $(\al,\beta) = (5 \times 10^{-5},0.2)$.}
  \label{fig:Gilbert_T50}
\end{figure*}

\begin{table}
\parbox[b]{.5\linewidth}{
\centering
\begin{tabular}{l|c|c}
			& Fig.~\ref{fig:GE_T12_R5_Performance} & Fig.~\ref{fig:GE_T50_R6_Performance} \\\hline
			Delay $T$ & 12 & 50 \\
			$(\alpha,\beta)$ & $(5 \times 10^{-4},0.5)$ & $(5 \times 10^{-5},0.2)$ \\
			Channel Length & $10^7$ & $10^8$ \\
			Rate $R$ & $12/23$ & $50/83$ \\\hline
		\end{tabular}
	\caption{Gilbert-Elliott Channel Parameters}
	\label{tab:GE-Matched}}
\hfill
\parbox[b]{.5\linewidth}{
\centering
		\begin{tabular}{l|c|c|c|c}
         & \multicolumn{2}{c|}{Fig.~\ref{fig:GE_T12_R5_Performance}} & \multicolumn{2}{c}{Fig.~\ref{fig:GE_T50_R6_Performance}} \\ \hline
        Code & N & B & N & B \\ \hline
        MiDAS Code & 2 & 9 & 4 & 30 \\
        Strongly MDS & 6 & 6 & 20 & 20\\
        MS Codes & 1 & 11 & 1 & 33 \\\hline
	\end{tabular}
	\caption{Achievable $N$ and $B$ for different streaming codes}\label{tab:GE-Matched-Codes}}
\end{table}

In Fig.~\ref{fig:GE_T12_R5_Performance} and Fig.~\ref{fig:GE_T50_R6_Performance} we study the performance of various streaming codes over the Gilbert-Elliott channel. The channel parameters and code parameters are shown in Table~\ref{tab:GE-Matched} and~\ref{tab:GE-Matched-Codes} respectively. Fig.~\ref{fig:GE_T12_R5_Burst} and~\ref{fig:GE_T50_R6_Burst} indicate the histogram of the burst lengths observed for the two channels. We remark that the channel parameters for the $T=12$ case are the same as those used in~\cite[Section 4-B, Fig.~5]{MartinianS04}. 
We remark that for this choice of $\al$, the contribution from failures due to small guard periods between bursts is not dominant. When the inter-burst gaps are smaller we believe that an extension of MiDAS codes that control the number of losses in such events may be necessary and is left for a future investigation. 

All codes in Fig.~\ref{fig:GE_T12_R5_Performance} are selected to have a rate of $R=12/23 \approx 0.52$ and the delay is $T=12$. For reference the uncoded loss-rate is also shown by the upper-most dotted blue line marked with triangles. The black horizontal line is the loss rate of the Strongly-MDS code. It achieves $B=N=6$. Thus its performance is limited by its burst-correction capability and thus is consistent with the probability of observing bursts longer than $6$ which is given by $\approx 2 \times 10^{-5}$. The red-curve which deteriorates rapidly as we increase $\eps$ is the Maximally Short code (MS). It achieves $B=11$ and $N=1$. Thus in general it cannot recover from even two losses occurring in a window of length $T+1$. 
The remaining curve marked with squares shows the MiDAS code which achieve $B=9$ and $N=2$. The loss probability also deteriorates with $\eps$ but at a much lower rate. Thus a slight decrease in $B$, while improving $N$ from $1$ to $2$ exhibits noticeable gains over both MS and Strongly-MDS codes. At the left most point i.e., when $\eps=10^{-3}$, the loss probability is dominated by burst losses, while as $\eps$ is increased, the effect of isolated losses becomes more significant.

In Fig~\ref{fig:GE_T50_R6_Performance} the rate of all codes is set to $R=50/83 \approx 0.6$. The delay is set to $T=50$. The Strongly-MDS code (black horizontal plot) achieves ${B=N=20}$ whereas the MS code (red plot) achieves $N=1$ and $B=33$. Both codes suffer from the same phenomenon discussed in the previous case. We also consider the MiDAS code (blue plot) with $N=4$ and $B=30$. We observe that its performance deteriorates as $\eps$ is increased and eventually crosses the Strongly-MDS codes. We believe that further improvements can be attained using codes that correct both burst and isolated losses~\cite{isit-midas}, but leave such extensions for a future investigation.

In Fig.~\ref{fig:Fritchman_T40} and Fig.~\ref{fig:Fritchman_R6}, we evaluate streaming codes over the Fritchman channel in Fig.~\ref{fig:fritchman}. The channel parameters and code parameters are shown in Table~\ref{tab:Fritchman} and~\ref{tab:Fritchman-Codes} respectively. We let the transition probability from the good state to the first bad state $E_1$ to be $\al$ whereas the transition probability from each of the bad states equals $\beta$. Let $\eps$ be the probability of a packet loss in good state. We lose packets in any bad state with probability $1$. Fig.~\ref{fig:Fritchman_T40_Burst} and~\ref{fig:Fritchman_R6_Burst} indicate the histogram of the burst lengths observed for the two channels. 

\begin{figure*}
  \centering
		\subfigure[Simulation over a $\mathcal{N}+1 = 9$-States Fritchman Channel with $(\al,\beta) = (10^{-5},0.5)$. All codes are evaluated using a decoding delay of $T=40$ symbols.]
		{
  \includegraphics[width=0.48\linewidth, height = 5cm]{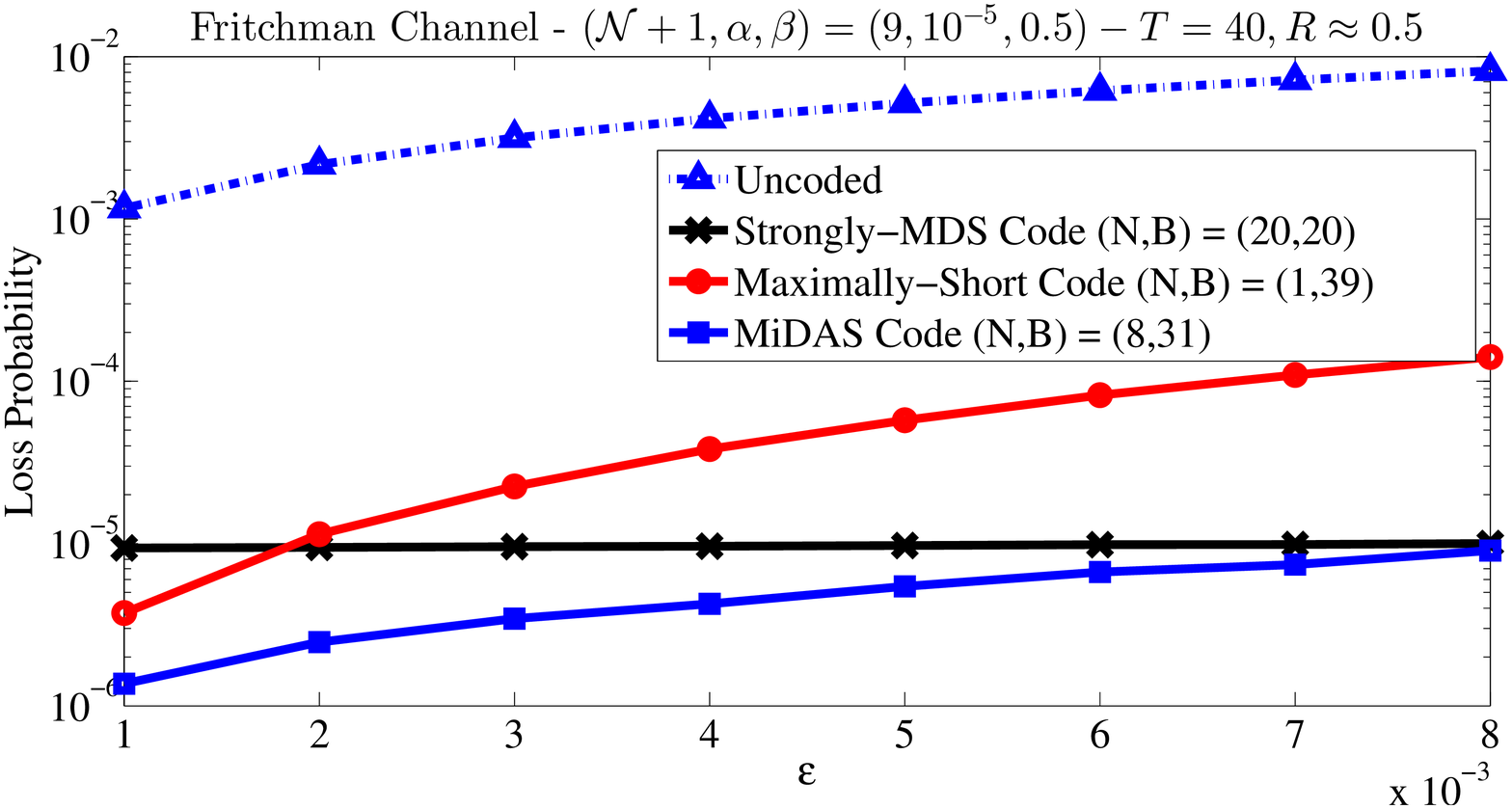}
  \label{fig:Fritchman_T40_Performance}
		}\hspace{0.1em}
		\subfigure[Burst Histogram.]
		{
		\includegraphics[width=0.48\linewidth, height = 5cm]{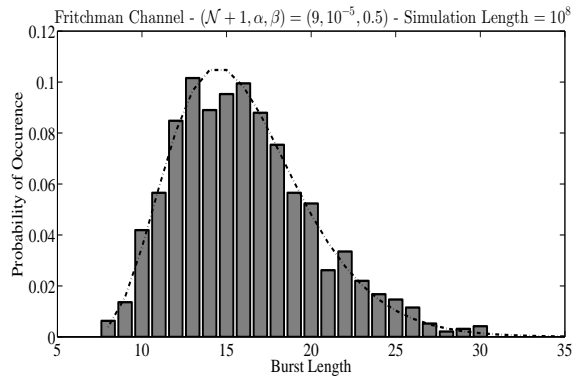}
		\label{fig:Fritchman_T40_Burst}
		}
		\caption{Simulation Experiments for Fritchman Channel Model with $(\mathcal{N},\al,\beta) = (8,10^{-5},0.5)$.}
		\label{fig:Fritchman_T40}
\end{figure*}

\begin{figure*}
  \centering
		\subfigure[Simulation over a $\mathcal{N}+1 = 12$-States Fritchman Channel with $(\al,\beta) = (2 \times 10^{-5},0.75)$. All codes are evaluated using a decoding delay of $T=40$ symbols.]
  {
		\includegraphics[width=0.48\linewidth, height = 5cm]{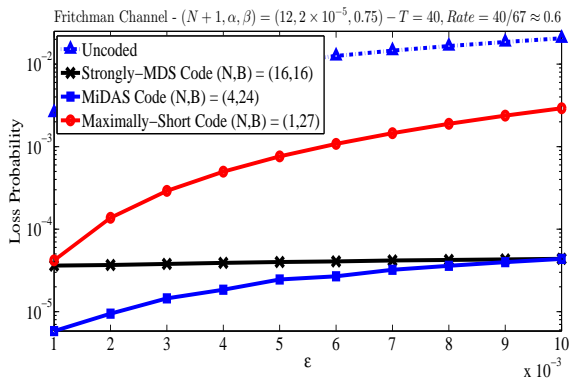}
  \label{fig:Fritchman_R6_Performance}
		}\hspace{0.1em}
  \subfigure[Burst Histogram.]
		{
		\includegraphics[width=0.48\linewidth, height = 5cm]{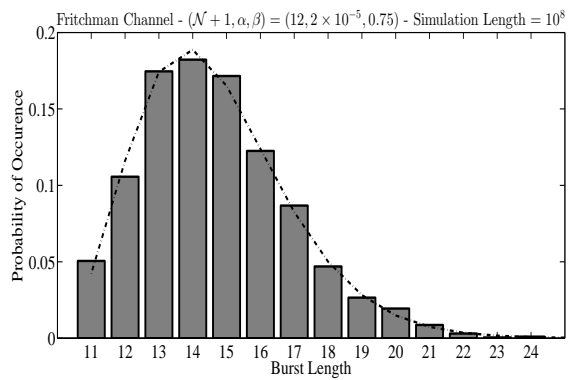}
		\label{fig:Fritchman_R6_Burst}
		}
		\caption{Simulation Experiments for Fritchman Channel Model with $(\mathcal{N},\al,\beta) = (11,2 \times 10^{-5},0.75)$.}
		\label{fig:Fritchman_R6}
\end{figure*}

\begin{table}[!htb]
\parbox[b]{.5\linewidth}{
\centering
\begin{tabular}{l|c|c}
			& Fig.~\ref{fig:Fritchman_T40} & Fig.~\ref{fig:Fritchman_R6} \\\hline
			Channel States & 9 & 12 \\
			Delay $T$ & 40 & 40 \\
			$(\alpha,\beta)$ & $(10^{-5},0.5)$ & $(2 \times 10^{-5},0.75)$ \\
			Channel Length & $10^8$ & $10^8$ \\
			Rate $R$ & $40/79 \approx 0.5$ & $40/67 \approx 0.6$ \\\hline
		\end{tabular}
	\caption{Fritchman Channel Parameters}
	\label{tab:Fritchman}
}
\hfill
\parbox[b]{.5\linewidth}{
\centering
		\begin{tabular}{l|c|c|c|c}
         & \multicolumn{2}{c|}{Fig.~\ref{fig:Fritchman_T40}} & \multicolumn{2}{c}{Fig.~\ref{fig:Fritchman_R6}} \\ \hline
        Code & N & B & N & B \\\hline
        MiDAS Codes & 8 & 31 & 4 & 24 \\
        Strongly MDS & 20 & 20 & 16 & 16\\
        MS Codes & 1 & 39 & 1 & 27 \\\hline
	\end{tabular}
	\caption{Achievable $N$ and $B$ for different streaming codes}\label{tab:Fritchman-Codes}}
\end{table}

In Fig.~\ref{fig:Fritchman_T40} and Fig.~\ref{fig:Fritchman_R6}, the uncoded loss rate is shown by the upper-most plot while the black horizontal line is the performance of Strongly-MDS code. Note that the performance of this code is essentially independent of $\eps$ in the interval of interest. As in the case of GE channels, the Strongly-MDS codes recover all the losses in the good state and fail against burst lengths longer than its burst erasure correction capability. Thus, their loss rate is consistent with the probability of observing bursts longer than $20$ and $16$ which can be calculated to be $\approx 10^{-5}$ and $\approx 3 \times 10^{-5}$, respectively. The performance of the MS codes is shown by the red-plot in both figures. We note that it is better than the Strongly-MDS codes for $\eps = 10^{-3}$, but deteriorates quickly as we increase $\eps$. The performance gains from MiDAS codes are significantly more noticeable for the Fritchman channel because the hyper-geometric burst-length distribution favors longer bursts over shorter ones. As in the case of GE Channels, we expect further performance gains to be possible by considering more sophisticated erasure patterns, such as burst plus isolated losses, but leave such an investigation for a future work.

In Fig.~\ref{fig:Gilbert_T12_B11_MDS_Performance}, we compare the performance of MiDAS and MS codes obtained by replacing the Strongly-MDS constituent code with a diagonally interleaved block MDS code (cf. Section~\ref{subsec:midas-field}). We consider the same GE channel in Fig.~\ref{fig:GE_T12_R5_Performance} and delay $T=12$. The codes involving Strongly-MDS codes are plotted using a solid line whereas the codes involving block MDS codes are shown by the dotted lines of the same color. We note that in all cases there is a noticeable increase in the loss rate when a block MDS code is used despite the fact that these codes achieve the same $(N,B)$ values over deterministic channels. This loss in performance is due to their sensitivity to non-ideal erasure patterns as discussed in Section~\ref{subsec:midas-nonideal}.

\begin{figure}
 \begin{minipage}{0.475\linewidth}
 \includegraphics[width=\columnwidth]{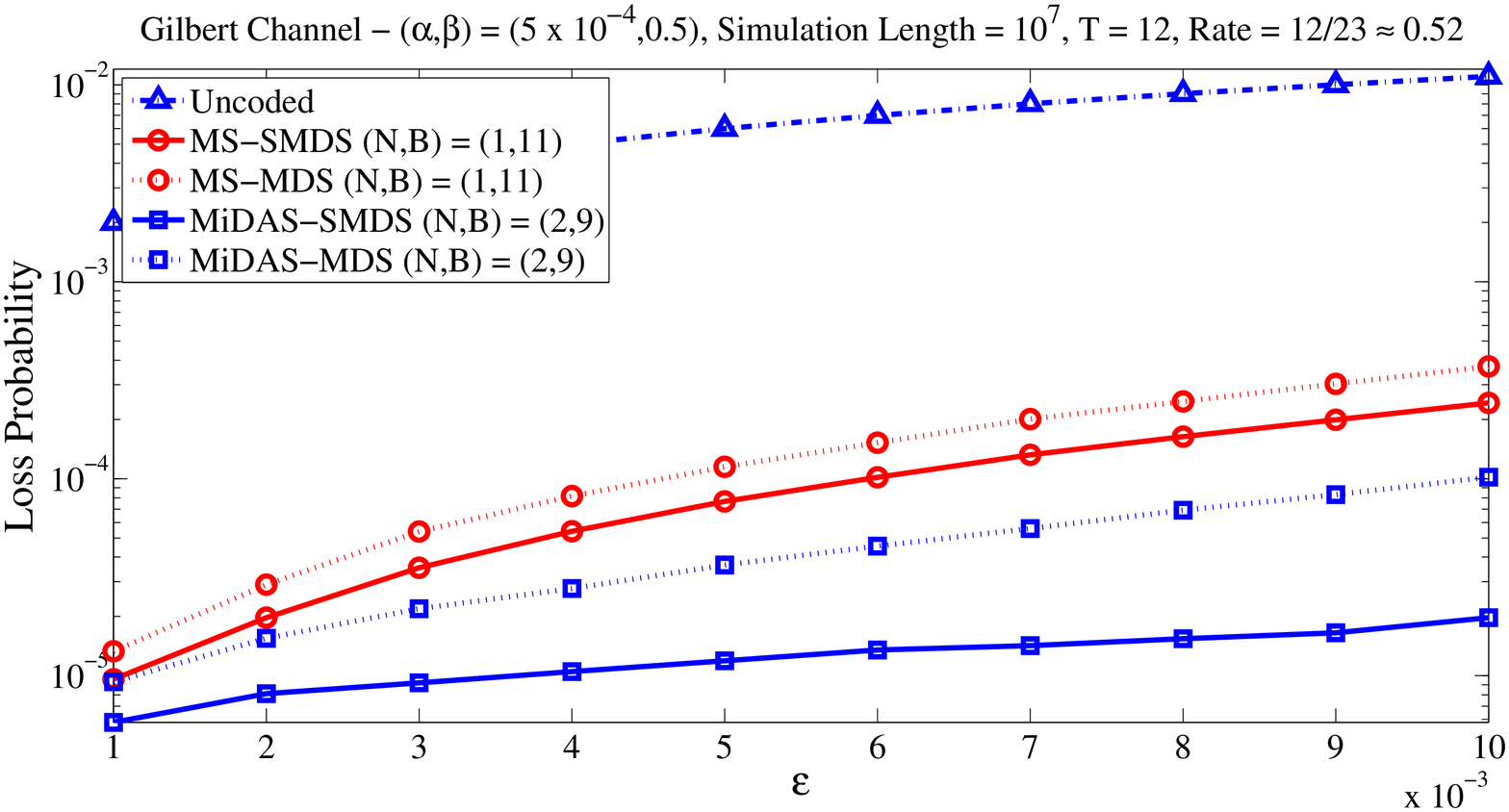}
\caption{Simulation over a Gilbert-Elliott Channel with $(\al,\beta) = (5 \times 10^{-4},0.5)$. All codes are evaluated using a decoding delay of $T=12$ symbols and a rate of $R = 12/23 \approx 0.52$.}
    \label{fig:Gilbert_T12_B11_MDS_Performance}
   \end{minipage}
   \hspace{0.05\linewidth}
   \begin{minipage}{0.475\linewidth}
 \includegraphics[width=\columnwidth]{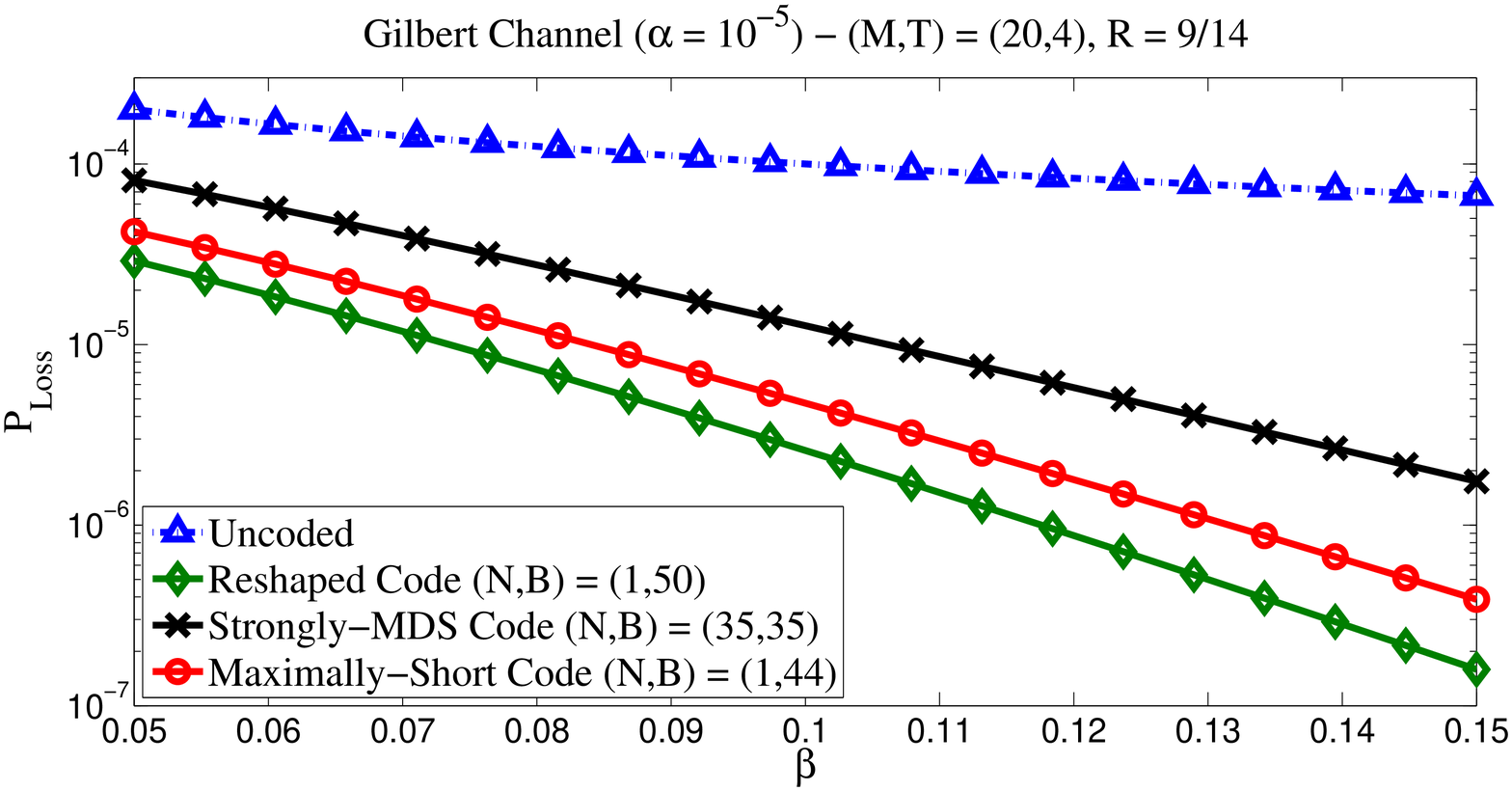}
\caption{Simulation over a Gilbert Channel with $\al = 10^{-5}$ and $\beta$ varied on the x-axis. All codes are of rate $R = \frac{9}{14}$ and evaluated using a decoding delay of $T=4$ macro-packets. Each macro-packet consists of $M=20$ channel symbols.}
\label{fig:Gilbert_Experiments_M20}
   \end{minipage}
 \end{figure}

\subsection{Unequal Source-Channel Rates}


In our simulations in Fig.~\ref{fig:Gilbert_Experiments_M20}, we consider a Gilbert channel model which is the same as a Gilbert-Elliott channel with $\eps = 0$, i.e., the loss probability is $0$ in the good state. We fix $\alpha = 10^{-5}$ and vary $\beta$ on the x-axis in the interval $[0.05,0.15]$ which in turn changes the burst length distribution. We further select $M = 20$, i.e., $20$ channel packets are generated for every source packet received at the encoder. We fix the rate $R = 9/14$ and the delay $T=4$ macro-packets. Under these conditions, the Strongly-MDS code can correct burst erasures of length up to $B = 35$, whereas a Maximally Short code achieves $B = 44$. The optimal code for $M>1$ achieves $B = 50$. This gain in terms of correctable burst-length is reflected in Fig.~\ref{fig:Gilbert_Experiments_M20} as one can see that codes designed for unequal source-channel rates, which are referred to as \emph{reshaped codes}, achieve a lower loss probability. We note that the code parameters in Fig.~\ref{fig:Gilbert_Experiments_M20} correspond to the second case in~\eqref{eq:capacity}.

\begin{table}[!htb]
\parbox[b]{.5\linewidth}{
\centering
\begin{tabular}{l|c|c}
			& Fig.~\ref{fig:Gilbert_Experiments_M20} & Fig.~\ref{fig:FC_M40_T2} \\\hline
			Channel States & $2$ & $20$ \\
			$(M,T)$ & $(20,4)$ & $(40,2)$ \\
			$(\alpha,\beta)$ & $(10^{-5},[0.05,0.15])$ & $(10^{-5},0.5)$ \\
			Channel Length & $10^9$ & $10^9$ \\
			Rate $R$ & $9/14 \approx 0.64$ & $40/63 \approx 0.63$ \\\hline
		\end{tabular}
	\caption{Unequal Source Channel Rates}
	\label{tab:Unequal}
}
\hfill
\parbox[b]{.5\linewidth}{
\centering
		\begin{tabular}{l|c|c|c|c}
         & \multicolumn{2}{c|}{Fig.~\ref{fig:Gilbert_Experiments_M20}} & \multicolumn{2}{c}{Fig.~\ref{fig:FC_M40_T2}} \\ \hline
        Code & N & B & N & B \\\hline
								Reshaped Code & 1 & 50 & 1 & 58 \\
								Robust Reshaped Code & N/A & N/A & 5 & 53 \\
        MiDAS Code & N/A & N/A & 5 & 42 \\
        Strongly MDS Code & 35 & 35 & 43 & 43\\
        MS Codes & 1 & 44 & 1 & 45 \\ \hline
	\end{tabular}
	\caption{Achievable $N$ and $B$ for different streaming codes}
	\label{tab:Unequal-Codes}}
\end{table}

In Fig.~\ref{fig:FC_M40_T2}, we consider a Fritchman channel with $(\alpha,\beta) = (10^{-5},0.5)$ and $\mathcal{N}+1 = 20$ states. The corresponding burst distribution is illustrated in Fig.~\ref{fig:FC_M40_T2_Burst}. In Fig.~\ref{fig:FC_M40_T2_Performance}, we show the performance of different streaming codes in the case of unequal source-arrival and channel-transmission rates on such channel. The rate for all codes is fixed to $R = 0.64$ and the delay constraint is $T=2$ macro-packets where each macro-packet has $M = 40$ packets. As the probability of erasure in the good state $\eps$ increases, the performance of Strongly-MDS code (black curve) does not change. The loss rate of this code is $\approx 10^{-4}$ which is dominated by the fraction of erasures introduced by bursts longer than $43$. On the other hand, both Maximally Short and reshaped codes achieve $N=1$ and thus deteriorate as quickly as $\eps^2$. For the left most point corresponding to $\eps = 0$, the probability of loss of the Maximally Short code is $\approx 10^{-4}$ which reflects the number of erasures introduced by bursts longer than $45$. Similarly, the loss probability of the reshaped code is $\approx 3 \times 10^{-6}$ which matches the fraction of losses introduced due to bursts longer than $58$. The performance of the robust versions of these codes, namely MiDAS and robust reshaped codes does not deteriorate as fast. However, the robust reshaped code outperforms the MiDAS code as the former achieves $B=53$ versus $B=42$ achieved by the later while fixing $N=5$, $R = 0.63$ and $T=2$.

\begin{figure*}
  \centering
  \subfigure[Simulation results. All codes are evaluated using a decoding delay of $T=2$ macro-packets and a rate of $R \approx 0.63$. Each macro-packet consists of $M=40$ channel packets.]
  {
    \includegraphics[width=0.48\linewidth, height=5cm]{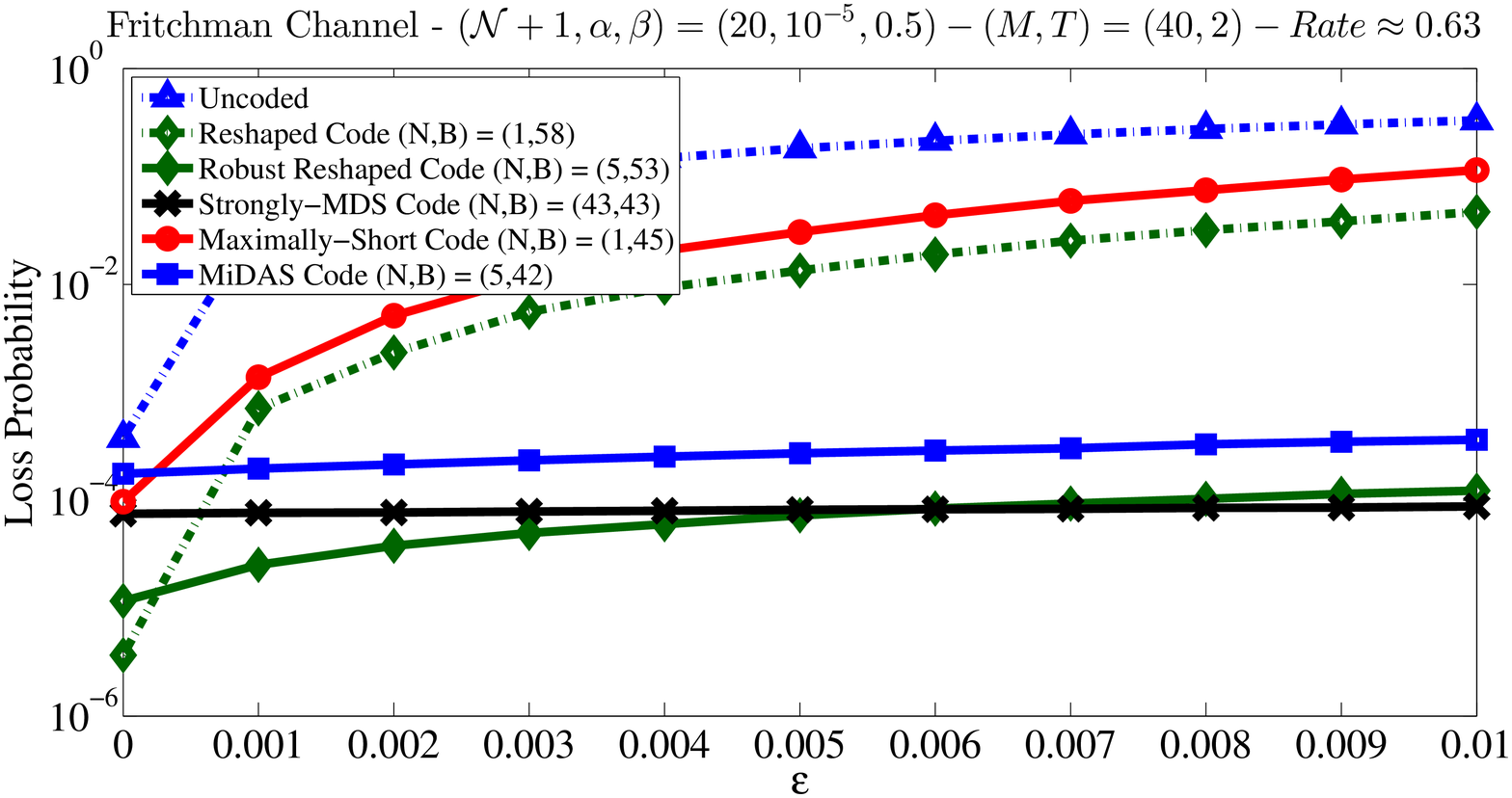}
    \label{fig:FC_M40_T2_Performance}
  }\hspace{0.1em}
  \subfigure[Burst Histogram.]
  {
    \includegraphics[width=0.48\linewidth, height=5cm]{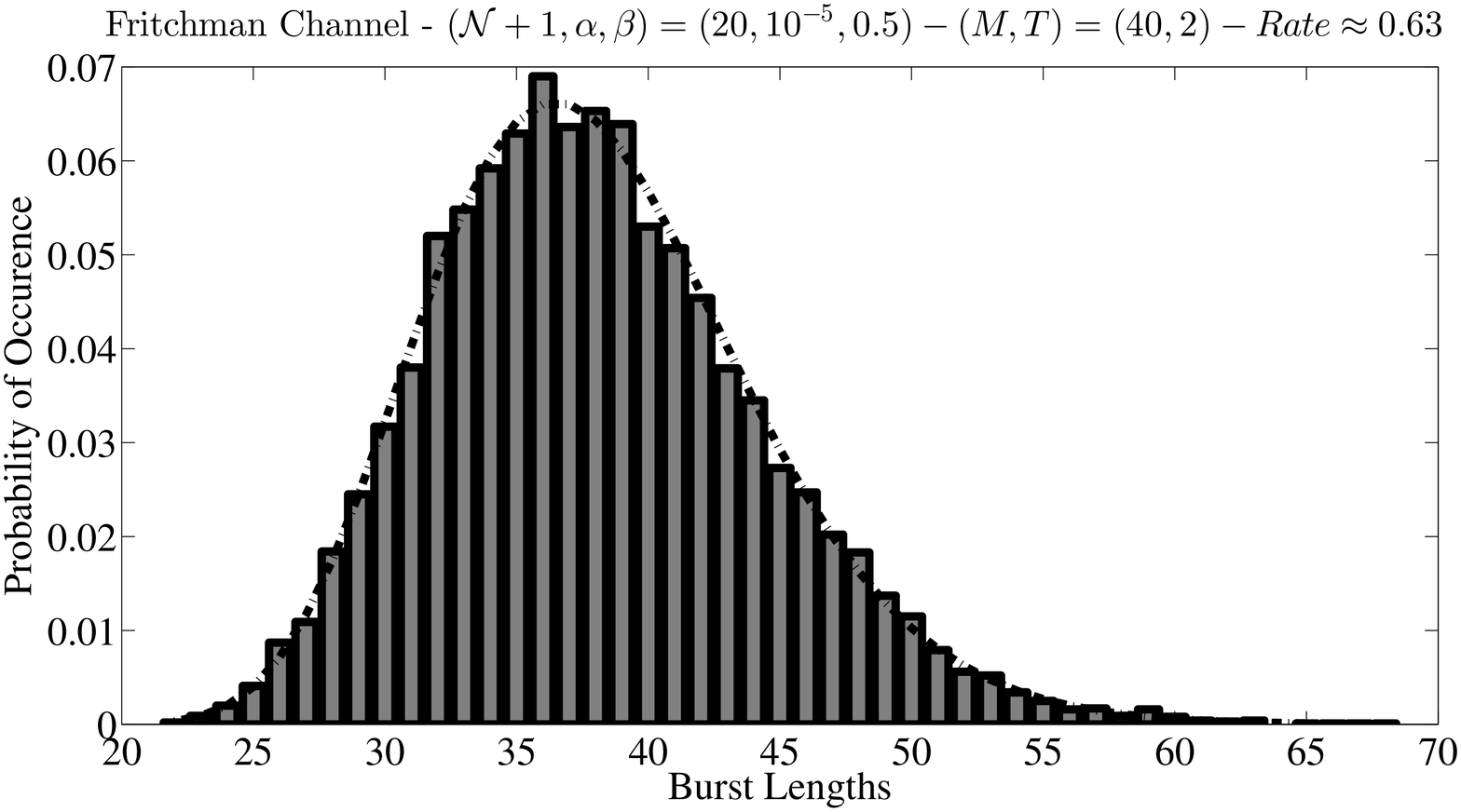}
    \label{fig:FC_M40_T2_Burst}
  }
  \caption{Simulation Experiments for Fritchman Channel Model with $\mathcal{N}+1=20$ states and $(\al,\beta) = (10^{-5},0.5)$.}
  \label{fig:FC_M40_T2}
\end{figure*}

\section{Conclusion}
\label{sec:conclusion}
We study low-delay error correction codes for streaming sources over packet erasure channels. Our constructions are based on a layered approach and use Strongly-MDS codes and repetition codes as constituent codes. For the case when the source-arrival and channel-transmission rates are equal, we establish a fundamental tradeoff between the burst-error correction and isolated-error correction capability of any streaming code and present a near-optimal construction. The relevance of the underlying column distance and column span properties for low-delay error correction is also discussed.
When the source-arrival and channel-transmission rates are not equal, we present an optimal code for the burst erasure channel with sufficiently long guard intervals. While our constructions are based on deterministic channel models, numerical simulations indicate that the proposed constructions outperform baseline schemes over statistical models.

For future work, the characterization of optimal streaming codes for unequal source-channel rates when $N>1$ remains open. The investigation of minimum required field-size for streaming codes is also highly important. Further improvements can be attained by considering streaming codes that correct both burst and isolated losses in the window of interest as observed in~\cite{isit-midas}.
Finally our constructions are tuned to specific channel parameters. In practice it is very desirable to extend such constructions that adapt to varying channel parameters with little or no feedback.

\appendices

\section{Column Distance and Column Span of Convolutional codes}
\label{app:distance-span}
In this section we show that the error correction capability of a streaming code can be expressed in terms of its column distance and column span. In our discussion we view the input symbols $\bs[i]$ as a length $\bar{k}$ vector over ${\mathbb F}_q$ and $\bx[i]$ as a length $\bar{n}$ vector over ${\mathbb F}_q$. 
We restrict our attention to time-invariant linear $(\bar{n},\bar{k},\bar{m})$ convolutional codes specified by
$$\bx[i] = \left(\sum_{j=0}^{\bar{m}} \bs^\dagger[i-j] \bG_{j} \right)^\dagger,$$
where $\bG_0,\ldots, \bG_{\bar{m}}$ are generator matrices over ${\mathbb F}_q^{\bar{k} \times \bar{n}}$. 

The first $T+1$ output symbols can be expressed as,
\begin{align}
\vspace{-1em}
\label{eq:trunc-cc}
[ \bx[0], \bx[1], \ldots, \bx[T] ] = [\bs[0], \bs[1], \ldots, \bs[T]] \cdot \bG^s_T.
\end{align}
where 
\begin{equation}\bG^s_T = \begin{bmatrix}\bG_0 & \bG_1 & \ldots & \bG_T \\ 0 & \bG_0 & & \bG_{T-1} \\ \vdots & &\ddots & \vdots \\ 0 & & \ldots & \bG_0 \end{bmatrix}\label{eq:GsT}\end{equation}
is the truncated generator matrix to the first ${T+1}$ columns. Note that $\bG_j =0$ if $j > m$. 


\begin{defn}[Column Distance]
The column distance of $\bG_T^s$ in~\eqref{eq:GsT} is defined as
\begin{equation}
d_T = \min_{\substack{\bs \equiv [\bs[0], \bs[1], \ldots, \bs[T]]\\ \bs[0] \neq 0}} \mathrm{wt}([\bx[0], \ldots, \bx[T]])\label{eq:dT-def}
\end{equation}
where $\mrm{wt}([\bx[0],\ldots, \bx[T]])$ counts the number of non-zero elements in the ${T+1}$ length vector. 
\end{defn}

Intuitively the column distance of the convolutional code finds the codeword sequence of minimum Hamming weight in the interval $[0,T]$ that diverges from the all zero state at time $t=0$. We refer the reader to~\cite[Chapter 3]{zigangirov} for some properties of $d_T$. 

\begin{fact}
A convolutional code with a column distance of $d_T$ can recover every information symbol with a delay of $T$ provided the channel introduces no more than
$N =d_T-1$ erasures in any sliding window of length ${T+1}$.
Conversely there exists at-least one erasure pattern with $d_T$ erasures in a window of length ${T+1}$ where the decoder fails to recover all source symbols.
\label{fact:dT}
\end{fact}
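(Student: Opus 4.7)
The plan is to prove both halves of the fact by reducing the question of ``decoder failure'' to an equivalent statement about the existence of a low-weight codeword emanating from the zero state. This reduction relies on two standard features of $(\bar n,\bar k,\bar m)$ linear time-invariant convolutional codes: time-invariance lets me translate any target symbol $\bs[i]$ to $\bs[0]$, and linearity lets me identify decoder ambiguity with the existence of a nonzero input-output pair consistent with the erasures.

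For the achievability direction, I would proceed by induction on $i$. Assuming $\bs[0],\ldots,\bs[i-1]$ have already been decoded by their respective deadlines, their contribution to every channel output in $[i,i+T]$ can be subtracted, so the decoder faces the equivalent problem of recovering $\bs[i]$ from a fresh convolutional encoding with zero initial state and at most $N\le d_T-1$ erasures in the window $[i,i+T]$. Shifting to $i=0$ by time-invariance, suppose for contradiction that the decoder cannot recover $\bs[0]$. Then by linearity there exist two input sequences $\bs$ and $\bs'$ with the same zero initial state, $\bs[0]\neq\bs'[0]$, whose codewords on $[0,T]$ agree on every non-erased coordinate. The difference $\tilde\bs=\bs-\bs'$ is then itself a valid input to the code, has $\tilde\bs[0]\neq 0$, and produces, via~\eqref{eq:trunc-cc}, a codeword whose support is contained in the erased coordinates. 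Its Hamming weight is therefore at most $N\le d_T-1$, contradicting the definition of $d_T$ in~\eqref{eq:dT-def}.

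For the converse, I would invoke~\eqref{eq:dT-def} directly to obtain an input $\bs^\star=(\bs^\star[0],\ldots,\bs^\star[T])$ with $\bs^\star[0]\neq 0$ whose codeword $(\bx^\star[0],\ldots,\bx^\star[T])$ has Hamming weight exactly $d_T$. Construct an adversarial channel realization that erases precisely the $d_T$ coordinates in which $\bx^\star$ is nonzero and no others; this places $d_T$ erasures in a window of length $T+1$. On this channel, the all-zero input and $\bs^\star$ (extended by zeros after time $T$) yield identical outputs at every non-erased position in $[0,T]$, so any decoder operating with delay $T$ must assign the same estimate to $\bs[0]$ in both cases, and therefore fails on at least one of the two. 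This produces the required bad erasure pattern.

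The main technical point to pin down carefully is the ``state reset'' step in the forward direction: one has to check that by the time $\bs[i]$ must be decoded, the previously-recovered symbols $\bs[0],\ldots,\bs[i-1]$ have indeed been reconstructed (which is delivered by the inductive hypothesis since each was recovered within its own window of length $T+1$ containing at most $d_T-1$ erasures), and that the residual equations on $(\bx[i],\ldots,\bx[i+T])$ after subtracting their contributions are exactly those of a fresh encoding from zero state. Once this equivalence is verified, both directions collapse to mechanical applications of~\eqref{eq:dT-def}, and no further weight-enumeration or algebraic work is needed.
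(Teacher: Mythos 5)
Your proposal matches the paper's proof in essence: the forward direction is the same ``two ambiguous inputs $\Rightarrow$ their difference is a codeword of weight at most $N<d_T$'' contradiction, followed by cancelling the recovered symbol and sliding the window; the converse is the same ``erase the support of a minimum-weight truncated codeword'' construction. The only difference is presentational — you make the induction and the state-reset step explicit rather than leaving them as a one-line ``repeat the argument,'' which is a minor but reasonable tightening.
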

\begin{proof}
Consider the interval $[0,T]$ and consider two input sequences $(\bs[0],\ldots, \bs[T])$ and $(\bs'[0],\ldots, \bs'[T])$ with $\bs[0] \neq \bs'[0]$. Let the corresponding output 
be $(\bx[0],\ldots, \bx[T])$ and $(\bx'[0],\ldots, \bx'[T])$. Note that the output sequences differ in at-least $d_T$ indices since otherwise the output sequence $(\bx[0] - \bx'[0], \ldots, \bx[T]-\bx'[T])$
which corresponds to $(\bs[0]-\bs'[0], \ldots, \bs[T]-\bs'[T])$ has a Hamming weight less than $d_T$ while the input $\bs[0] -\bs'[0] \neq 0$, which is a contradiction. 
Thus if $(\bs[0],\ldots, \bs[T])$ is the input source sequence, for any sequence of $d_T-1$ or fewer erasures, there will be at-least one symbol where $(\bx'[0], \ldots, \bx'[T])$ differs from the received sequence.
Thus $\bs[0]$ is recovered uniquely at time $T$. Once $\bs[0]$ is recovered we can cancel its contribution from all the future symbols and repeat the same argument for the interval $[1,T+1]$ to recover $\bs[1]$ and proceed.

Conversely there exists at-least one output sequence whose Hamming weight equals $d_T$ and the input symbol $\bs[0] \neq 0$. 
By erasing all the non-zero $d_T$ positions for this output sequence, we cannot distinguish it from the all-zero sequence. 
\end{proof}

To the best of our knowledge the column span of a convolutional code was first introduced in~\cite{MartinianS04}
in the context of low-delay codes for burst erasure channels. 

\begin{defn}[Column Span]
The column span of $\bG^s_T$ in~\eqref{eq:GsT} is defined as
\begin{equation}
c_T = \min_{\substack{\bs \equiv [\bs[0], \bs[1], \ldots, \bs[T]]\\ \bs[0] \neq 0}} \mathrm{span}([\bx[0], \ldots, \bx[T]])
\label{eq:cT-def}
\end{equation}
where $\mrm{span}([\bx[0], \ldots, \bx[T]])$ equals the support of the underlying vector, i.e., $\mrm{span}([\bx[0], \ldots, \bx[T]]) = j-i+1,$ where $j$ is the last index where $\bx$ is non-zero and $i$ is the first such index.
\end{defn}

\begin{fact}
Consider a channel that introduces no more than a single erasure burst of maximum length $B$ in any sliding window of length ${T+1}$. A necessary and sufficient condition for a convolutional code to recover every erased symbol with a delay of $T$ is that $c_T > B$.
\label{fact:cT}
\end{fact}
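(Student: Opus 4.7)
The plan is to establish both directions using linearity and time-invariance of the convolutional code, paralleling the argument used in Fact~\ref{fact:dT} for column distance but replacing ``number of erasures'' with ``length of a single burst.''

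For sufficiency, I will argue by time-shift induction. Suppose $c_T > B$ and assume $\bs[0],\bs[1],\ldots,\bs[a-1]$ have already been decoded by their deadlines; I want to show $\bs[a]$ is recoverable at time $a+T$. By time-invariance we may take $a=0$. The hypothesis on $\cC(N=1,B,W\ge T+1)$ guarantees that within the window $[0,T]$ there is at most one erasure burst, say in positions $[i,i+B'-1]$ with $B'\le B$ (if no burst occurs in this window, $\bs[0]$ is already received in the clear). Suppose for contradiction two candidate input sequences $\bs,\bs'$ with $\bs[0]\neq\bs'[0]$ produce identical channel outputs on $[0,T]\setminus[i,i+B'-1]$. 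By linearity their difference $\tilde\bs=\bs-\bs'$ satisfies $\tilde\bs[0]\neq 0$, and the associated output $\tilde\bx$ is supported inside $[i,i+B'-1]$ when restricted to $[0,T]$. Hence $\mathrm{span}([\tilde\bx[0],\ldots,\tilde\bx[T]])\le B'\le B$, contradicting $c_T>B$. Thus $\bs[0]$ is uniquely determined by time $T$, completing the induction.

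For necessity, I will contrapositively show that $c_T\le B$ implies some erased symbol cannot be recovered within delay $T$. Take a minimizer of the column-span definition: a nonzero input $\bs$ with $\bs[0]\neq 0$ whose output has span $s\le B$ in $[0,T]$, i.e., the nonzero positions of $(\bx[0],\ldots,\bx[T])$ lie inside some interval $[i,j]$ of length $s$. Consider a channel that introduces a single erasure burst precisely on $[i,j]$, which has length $s\le B$ and fits in the single window $[0,T]$ of length $T+1$ (so this is a legal $\cC(N=1,B,W)$ pattern). The outputs of $\bs$ and the all-zero sequence $\bs^\star\equiv 0$ agree on every unerased position in $[0,T]$, yet $\bs[0]\neq\bs^\star[0]$; the decoder therefore cannot distinguish them by time $T$, so $\bs[0]$ is not recoverable with delay $T$.

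The main subtlety I anticipate is the inductive step in the sufficiency direction, specifically when a burst overlaps the boundary between ``already-decoded'' symbols and the current symbol to be decoded. I will handle this by noting that once $\bs[0],\ldots,\bs[a-1]$ are known exactly, the decoder can subtract their contribution $\sum_{t<a}\bs^\dagger[t]\bG_{\tau-t}$ from every received $\bx[\tau]$ with $\tau\ge a$, reducing the problem to decoding a shifted version of the same convolutional code driven only by $\bs[a],\bs[a+1],\ldots$; the window-$T{+}1$ burst assumption then places at most one burst in $[a,a+T]$, and the span argument above applies verbatim. Finally, I will remark that if no nonzero sequence with $\bs[0]\neq 0$ produces any nonzero output on $[0,T]$ (so the span is undefined), this would contradict observability of the code, justifying that $c_T$ is well-defined whenever the code is usable.
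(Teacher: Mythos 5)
Your proof is correct and takes essentially the same approach the paper intends: the paper omits the proof of Fact~\ref{fact:cT} entirely, saying only that it is ``virtually identical'' to the proof of Fact~\ref{fact:dT}, and your argument is exactly that adaptation --- replacing Hamming weight with span, using linearity to reduce to the difference codeword in the sufficiency direction, and exhibiting a span-minimizing codeword masked by a single burst in the necessity direction. You are somewhat more careful than the paper about the inductive reduction (explicitly subtracting the contribution of already-decoded $\bs[0],\ldots,\bs[a-1]$ to handle bursts that straddle the boundary), which is a welcome clarification rather than a departure.
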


The justification is virtually identical to the proof of Fact~\ref{fact:dT} and is omitted.

It follows from Facts~\ref{fact:dT} and~\ref{fact:cT} that a necessary and sufficient condition for any convolutional code
to recover each source symbol with a delay of $T$ over a channel $\cC(N,B,W=T+1)$ is that both $d_T > N$ and $c_T > B$. Thus specializing Theorem~\ref{thm:Chan-1-UB} and~\ref{thm:midas} to $W=T+1$ we have the following.
 \begin{prop}[A Fundamental Tradeoff between Column Distance and Column Span]
 For any $(\bar{n},\bar{k},\bar{m})$ convolutional code and an integer $T > 0$ we have that the column distance $d_T$
 and column span $c_T$ must satisfy
 \begin{align}
 \frac{R}{1-R} c_T + d_T \le T+1 + \frac{1}{1-R}\label{eq:cTdTub}
 \end{align}
 where $R= \frac{\bar{k}}{\bar{n}}$ denotes the rate of the code. Furthermore for any $T>0$ there exists a $(\bar{n},\bar{k},\bar{m})$ convolutional code with column distance $d_T$ and column span $c_T$, over a sufficiently large field-size such that,
 \begin{align}
 \frac{R}{1-R} c_T + d_T \ge T + \frac{1}{1-R}\label{eq:cTdTlb}
 \end{align}
$\hfill\Box$
\label{prop:cTdT-tradeoff}
 \end{prop}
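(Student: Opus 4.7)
The plan is to reduce Proposition~\ref{prop:cTdT-tradeoff} directly to Theorems~\ref{thm:Chan-1-UB} and~\ref{thm:midas}, using Facts~\ref{fact:dT} and~\ref{fact:cT} as the dictionary between the intrinsic code parameters $(d_T, c_T)$ and the channel parameters $(N, B)$ for the deterministic model. The key observation is that by taking window size $W = T+1$, the effective delay $\Te = \min(T, W-1)$ equals $T$, so both main theorems apply cleanly with $\Te = T$.

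For the converse inequality~\eqref{eq:cTdTub}, I would start from a rate-$R$ convolutional code with column distance $d_T$ and column span $c_T$. Facts~\ref{fact:dT} and~\ref{fact:cT} show that such a code recovers every source symbol with delay $T$ whenever the channel introduces either (i) at most $d_T - 1$ erasures, or (ii) a single burst of length at most $c_T - 1$, in any window of length $T+1$. This is precisely the definition of feasibility over the channel $\cC(N = d_T - 1,\, B = c_T - 1,\, W = T+1)$. Invoking Theorem~\ref{thm:Chan-1-UB} with these parameters and $\Te = T$ gives $\frac{R}{1-R}(c_T - 1) + (d_T - 1) \le T + 1$. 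Rearranging and using the identity $1 + \frac{R}{1-R} = \frac{1}{1-R}$ yields~\eqref{eq:cTdTub}.

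For the achievability inequality~\eqref{eq:cTdTlb}, I would reverse the reduction. For any integer pair $(N^*, B^*)$ with $B^* \le T$ and $\frac{R}{1-R} B^* + N^* > T$, Theorem~\ref{thm:midas} (applied with $\Te = T$) provides a rate-$R$ streaming code for the channel $\cC(N^*, B^*, T+1)$ with delay $T$ over a sufficiently large field. By the converse halves of Facts~\ref{fact:dT} and~\ref{fact:cT}, such a code must have column distance $d_T \ge N^* + 1$ and column span $c_T \ge B^* + 1$ (otherwise a specific erasure pattern permitted by $\cC(N^*, B^*, T+1)$ would defeat it). Substituting these inequalities into the MiDAS strict bound and applying the same identity $1 + \frac{R}{1-R} = \frac{1}{1-R}$ gives $\frac{R}{1-R}c_T + d_T > T + \frac{1}{1-R}$, which implies~\eqref{eq:cTdTlb}.

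I do not anticipate any real obstacle: the conceptual content is entirely in Facts~\ref{fact:dT}--\ref{fact:cT} and in Theorems~\ref{thm:Chan-1-UB}--\ref{thm:midas}, which have already been proved. The only care needed is the off-by-one bookkeeping connecting $(d_T, c_T)$ with $(N, B) = (d_T - 1, c_T - 1)$ and the algebraic simplification of $1 + \frac{R}{1-R}$ to $\frac{1}{1-R}$, both of which are mechanical.
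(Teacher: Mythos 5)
Your proposal is correct and matches the paper's proof essentially step for step: the converse is obtained by combining the sufficiency halves of Facts~\ref{fact:dT}--\ref{fact:cT} with Theorem~\ref{thm:Chan-1-UB} (applied at $W=T+1$ so $\Te=T$), and the achievability is obtained by applying the necessity halves of the same Facts to the MiDAS codes of Theorem~\ref{thm:midas}. The off-by-one bookkeeping and the $1 + \tfrac{R}{1-R} = \tfrac{1}{1-R}$ simplification are exactly what the paper does implicitly when it says the substitutions "immediately" give the stated inequalities.
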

 \begin{proof}
 To establish~\eqref{eq:cTdTub}, consider any convolutional code with a column distance $d_T$ and column span $c_T$. From the sufficiency parts of Facts~\ref{fact:dT} and~\ref{fact:cT} such a code is feasible over the channel $\cC(N=d_T-1,B=c_T-1,W=T+1)$ with delay $T$. Thus it must satisfy the upper bound~\eqref{eq:r-ub}. Substituting $N=d_T-1$ and $B=c_T-1$ immediately gives~\eqref{eq:cTdTub}. 

To establish~\eqref{eq:cTdTlb}, consider the code that satisfies the lower bound in~\eqref{b-achiev} in Theorem~\ref{thm:midas}. From the necessity parts of Facts~\ref{fact:dT} and~\ref{fact:cT} such a code must satisfy $c_T \ge B+1$ and $d_T \ge N+1$. Substituting in~\eqref{b-achiev} immediately leads to~\eqref{eq:cTdTlb}.
 
 \end{proof}
 
As a final remark we note Facts~\ref{fact:dT} and~\ref{fact:cT} also immediately apply to any channel with $W \ge T+1$. In particular any erasure pattern for the $\cC(N,B,W)$ channel with $W \ge T+1$ is also feasible for $\cC(N,B,W=T+1)$ and thus the sufficiency follows. Furthermore also note that whenever $W \ge T+1$, any erasure pattern in the interval $[0,T]$ used in the proof of the necessity part can also be used for the channel $\cC(N,B,W)$.

\section{Proof of Lemma~\ref{lem:mds-sub}}
\label{app:strongly-mds}

In order to establish L1, we use the following property regarding systematic Strongly-MDS codes~\cite[Corollary 2.5]{strongly-mds}. Consider the window of the first $j+1$ symbols of a $(\bar{n},\bar{k},\bar{m})$ convolutional code and let the truncated codeword associated with the input sequence $(\bs[0], \ldots, \bs[j])$ be $(\bx[0],\dots,\bx[j])$, where each $\bx[i]$ is expressed as in~\eqref{eq:strong-mds-systematic}. Then the $j$-th (sub-symbol level) column distance\footnote{This differs from~\eqref{eq:dT-def} in that we measure the Hamming weight of sub-symbols rather than the symbols $\bx[j]$.} is defined as
\begin{align}
\label{eq:djc}
d_j^c = \min_{\substack{\bs \equiv (\bs[0], \ldots, \bs[j])\\ \bs[0] \neq 0}} \mathrm{wt}^c(\bx[0],\dots,\bx[j]),
\end{align}
where recall that each channel symbol $\bx[i]$ has $\bar{n}$ sub-symbols, i.e., $\bx[i]=(x_0[i],\dots,x_{\bar{n}-1}[i])$ and $\mathrm{wt}^c(\bv)$ counts the number of non-zero \textbf{sub-symbols} in the codeword $\bv$.

It is well-known that for any $(\bar{n},\bar{k},\bar{m})$ convolutional code $d_j^c \leq (\bar{n}-\bar{k})(j+1)+1$ for all $j \ge 0$. A special class of convolutional codes -- systematic Strongly-MDS codes -- satisfy this bound with equality for $j=\{0,\dots,\bar{m}\}$~\cite[Corollary 2.5]{strongly-mds}. 

The proof of property L1 follows by using an argument similar to that in the proof of Fact~\ref{fact:dT}. We will omit it as the argument is completely analogous.


\begin{figure}[t]
	\centering
	\resizebox{\columnwidth}{!}{
	\begin{tikzpicture}[node distance=0mm]
		\node[usym, minimum height=10mm]  (x100) {$0$};
		\node[usym, minimum height=10mm, right = of x100]     (x101) {$$};
		\node[esym, minimum height=10mm, right = of x101]     (x102) {$c$};
		\node[esym, minimum height=10mm, right = of x102]     (x103) {$$};
		\node[esym, minimum height=10mm, right = of x103]     (x104) {$n-1$};
		\node[esym, minimum height=10mm, right = of x104]     (x105) {$n$};
		\node[esym, minimum height=10mm, right = of x105]     (x106) {$$};
		\node[esym, minimum height=10mm, right = of x106]     (x107) {$$};
		\node[esym, minimum height=10mm, right = of x107]     (x108) {$$};
		\node[esym, minimum height=10mm, right = of x108]     (x109) {$2n-1$};
		\node[esym, minimum height=10mm, right = of x109]     (x110) {$2n$};
		\node[esym, minimum height=10mm, right = of x110]     (x111) {$$};
		\node[esym, minimum height=10mm, right = of x111]     (x112) {$$};
		\node[esym, minimum height=10mm, right = of x112]     (x113) {$$};
		\node[esym, minimum height=10mm, right = of x113]     (x114) {$3n-1$};
		\node[esym, minimum height=10mm, right = of x114]     (x115) {$$};
		\node[esym, minimum height=10mm, right = of x115]     (x116) {$$};
		\node[esym, minimum height=10mm, right = of x116]     (x117) {$$};
		\node[esym, minimum height=10mm, right = of x117]     (x118) {$c+\hat{B}-1$};
		\node[usym, minimum height=10mm, right = of x118]     (x119) {$$};
		\node[usym, minimum height=10mm, right = of x119]     (x120) {$$};
		\node[usym, minimum height=10mm, right = of x120]     (x121) {$$};
		\node[usym, minimum height=10mm, right = of x121]     (x122) {$$};
		\node[usym, minimum height=10mm, right = of x122]     (x123) {$$};
		\node[usym, minimum height=10mm, right = of x123]     (x124) {$$};
		\node[usym, minimum height=10mm, right = of x124]     (x125) {$(j-1)n$};
		\node[usym, minimum height=10mm, right = of x125]     (x126) {$$};
		\node[usym, minimum height=10mm, right = of x126]     (x127) {$$};
		\node[usym, minimum height=10mm, right = of x127]     (x128) {$$};
		\node[usym, minimum height=10mm, right = of x128]     (x129) {$jn-1$};
		\node[usym, minimum height=10mm, right = of x129]     (x130) {$jn$};
		\node[usym, minimum height=10mm, right = of x130]     (x131) {$$};
		\node[usym, minimum height=10mm, right = of x131]     (x132) {$$};
		\node[usym, minimum height=10mm, right = of x132]     (x133) {$$};
		\node[usym, minimum height=10mm, right = of x133]     (x134) {$(j+1)n-1$};
		\node      [right = of x134]     (x1end) {$\cdots$};
		\braceup{x100}{x134}{+1mm}{{$\cW_0(j+1)$}};
		\braceup{x105}{x134}{+8mm}{{$\cW_1(j)$}};
		\braceup{x110}{x134}{+15mm}{{$\cW_2(j-1)$}};
		
		\bracedn{x100}{x104}{-1mm}{{$\bx[0]$}};
		\bracedn{x105}{x109}{-1mm}{{$\bx[1]$}};
		\bracedn{x110}{x114}{-1mm}{{$\bx[2]$}};
		\bracedn{x125}{x129}{-1mm}{{$\bx[j-1]$}};
		\bracedn{x130}{x134}{-1mm}{{$\bx[j]$}};
		\bracedn{x102}{x118}{-8mm}{{$\hat{B}$}};
	\end{tikzpicture}}
	\caption{An erasure channel with $\hat{B}$ erasures in a burst starting at time $c$ used in proving L2 in Lemma~\ref{lem:mds-sub}. Grey and white squares resemble erased and unerased sub-symbols respectively.}
	\label{fig:mdp-subsymbols}
\end{figure}
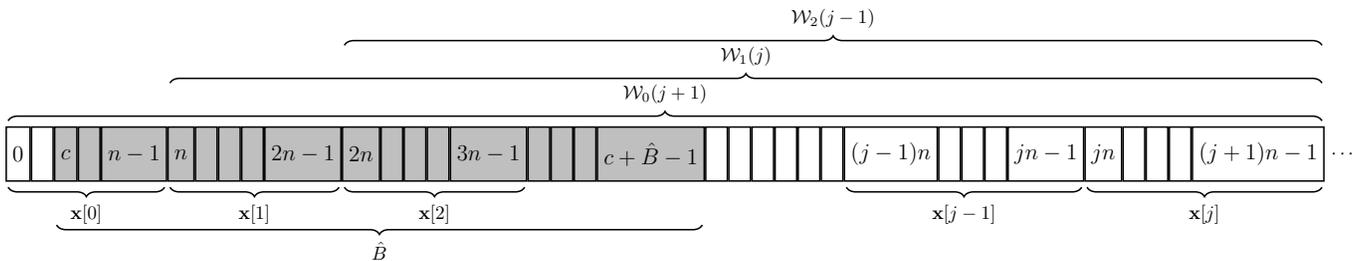

To establish L2, we use the notation $\cW_i(l)$ to denote a window of length $l \cdot \bar{n}$ starting at time $i \cdot \bar{n}$, i.e., $\cW_i(l) \defeq [i \cdot \bar{n},(i+l)\bar{n}-1]$ (see Fig.~\ref{fig:mdp-subsymbols}). We show that the entire erasure burst can be recovered through the following steps.
\begin{itemize}
\item In the window $\cW_0(j+1) = [0,(j+1)\bar{n}-1]$, the channel introduces $\hat{B} \le (\bar{n}-\bar{k})(j+1)$ erasures. Hence, we use L1 to recover $\bs[0] = (s_0[0],\dots,s_{\bar{k}-1}[0])$ at time $(j+1)n-1$ among which only the last $\bar{k}-c$ sub-symbols are erased. At this point we can also compute the $\bar{n}-\bar{k}$ sub-symbols of $\bp[0] = (p_0[0],\dots,p_{\bar{n}-\bar{k}-1}[0])$. Thus all the sub-symbols until time $t=\bar{n}-1$ have now been recovered by the decoder.
\item The next window $\cW_1(j) = [\bar{n},(j+1)\bar{n}-1]$ has $\hat{B} - (\bar{n}-c) < j(\bar{n}-\bar{k})$ erasures since $c < \bar{k}$. Hence, L1 can be used to recover $\bs[1]$ at time $(j+1)\bar{n}-1$ and $\bp[1]$ can be computed consequently. 
\item Similarly, $\cW_2(j-1) = [2\bar{n},(j+1)\bar{n}-1]$ has $\hat{B}- (\bar{n}-c) - \bar{n} < (\bar{n}-\bar{k})(j-1)$ erasures which implies the recovery of $\bs[2]$ at time $(j+1)\bar{n}-1$.
\item Repeating the previous step for $\cW_i(j-i+1) = [i \cdot \bar{n},(j+1)\bar{n}-1]$ and $i \cdot \bar{n} \le c+\hat{B}-1$, one can recover all erased symbols in the erasure burst at time $(j+1)\bar{n}-1$.
\end{itemize}

The proof of L2 is thus complete. The claim in L3 is a generalization of L2, as it permits the erasure pattern to have both burst and isolated erasures, but only guarantees the recovery of the burst erasure. To establish L3 we can proceed in a similar fashion as above and stop when the recovery of the erasure burst is complete. 

\section{Decoding Analysis of MiDAS code with MDS constituent codes}
\label{app:MiDAS-MDS}

In the decoding analysis it is sufficient to show that each source symbol $\bs[i]$ can be recovered at time ${t=i+\Te}$ if there is either an erasure burst of length $B$ or up to $N$ isolated erasures in the interval $[i,i+\Te]$.

\subsection{Burst Erasure}
First consider the case when a burst erasure spans $[i,i+B-1]$. Following this burst, we are guaranteed that for the $\cC(N,B,W)$ channel, there are no erasures in the interval $[i+B,i+ \Te+B-1]$. We argue that the decoder can first recover
$\bv[i],\ldots, \bv[i+B-1]$ simultaneously by time $t=i+\Te-1$ and then recover $\bu[i]$ at time $t=i+\Te$ by computing $\bp^v[i+\Te]$ and then $\bu[i] = \bq[i+\Te]-\bp^v[i+\Te]$. To show the recovery of $\bv[i],\ldots, \bv[i+B-1]$, note that there are no erasures in the interval spanning $[i+B, i+\Te-1]$ and the interfering $\bu[\cdot]$ symbols in $\bq[t] = \bu[t-\Te] + \bp^v[t]$ can be subtracted out to recover $\bp^v[t]$. The diagonal codewords $\left\{\bc_j^v[r]\right\}$ spanning $\bv[i],\ldots, \bv[i+B-1]$ start at $r \in \{ i-(\Te-B)+1,\dots,i+B-1 \}$. Each such codeword belongs to a $(\Te,\Te-B)$ MDS code. Hence, if no more than $B$ erasures take place in each codeword, the erased symbols can be recovered. However, we still need to take the delay into account. We first note that the $\bv[\cdot]$ symbols in the interval $[i,i+B-1]$ are erased. Also, the $\bq[\cdot]$ symbols in the interval $[i+\Te,i+\Te+B-1]$ combine $\bu[\cdot]$ which are erased and thus the corresponding $\bp[\cdot]$ symbols must also be treated as erased. We split the diagonal codewords of interest into two groups,
\begin{itemize}
\item $r \in \{i-(\Te-B)+1,\dots,i\}$: These codewords end before time $i+\Te$ where there are only $B$ erased columns in the interval $[i,i+B-1]$.
\item $r \in \{i+1,\dots,i+B-1\}$: Each of these codewords has ${\Te-B}$ sub-symbols in the interval $[i+B,i+\Te-1]$ which are not erased. Since the length of each codeword is $\Te$, then the number of erasures are at most $\Te-(\Te-B) = B$. 
\end{itemize}
We note that all the considered diagonal codewords $\left\{\bc_j^v[r]\right\}$ for $r \in \{ i-(\Te-B)+1,\dots,i+B-1 \}$ end before time $i+\Te+B$. Also, the $\bp[\cdot]$ parities in the interval $[i+\Te,i+\Te+B-1]$ cannot be used as discussed earlier. Thus it follows that the corresponding $\bv[\cdot]$ symbols are recovered by $i+\Te-1$.

\subsection{Isolated Erasures}
Next we show that when there are $N$ erasures in arbitrary locations in the interval $[i, i+\Te]$, then $\bu[i]$ is guaranteed to be recovered by time $t=i+\Te$, and $\bv[i]$ is guaranteed to be recovered by time $t=i+\Te-1$. For the recovery of $\bu[i]$ we note that the codewords $\left\{\bc_j^u[r]\right\}$ that include $\bu[i]$ start at $r \in \{ i-(\Te-N),\dots,i \}$. Since each $\bc_j^u[r]$ is a $(\Te+1, \Te-N+1)$ MDS code, and there are no more than $N$ erasures on each such sequence, it follows that all the erased symbols are guaranteed to be recovered by time $i+\Te$. The recovery of $\bu[i]$ by time $t=i+\Te$ now follows. For recovering $\bv[i]$, we consider the non-erased parity-check symbols $\bp^v[t]$ for $t\in [i, i+\Te-1]$, which can be obtained by cancelling the interfering $\bu[t-\Te]$ symbols from $\bq[t]$ as discussed in the case of burst erasure above. Notice that the diagonal codewords $\left\{\bc_j^v[r]\right\}$ spanning $\bv[i]$ start at $r \in \{ i-(\Te-B)+1,\dots,i \}$ and terminate by time $i+\Te-1$. It follows that each such sequence has no more than $N$ erasures and hence all the erased $\bv[i]$ symbols are recovered by time $t=i+\Te-1$.

\section{Proof of Lemma~\ref{lem:mdp_recovery_mismatch}}
\label{app:mdp_recovery_mismatch}

\begin{figure}
\footnotesize
\begin{align*}
j = 1 &\left[
\begin{array}{g|g|g|g|g|g|g|g|g|g|g|g|g}
  \bu[i,1] & \bu[i,2] & \cdots & \bu[i,r] & \begin{array}{c}	\bu[i,r+1] \\ \bv[i,1] \end{array} & \bv[i,2] & \bv[i,3] & \cdots & \bv[i,M-2r-1] & \begin{array}{c} \bq[i,r+1] \\ \bv[i,M-2r] \end{array} & \bq[i,r] & \cdots & \bq[i,1]\\
\end{array} \right]
\\
j = 2 &\left[
\begin{array}{c|g|g|g|g|g|g|g|g|g|g|g|g}
  \bu[i,1] & \bu[i,2] & \cdots & \bu[i,r] & \begin{array}{c}	\bu[i,r+1] \\ \bv[i,1] \end{array} & \bv[i,2] & \bv[i,3] & \cdots & \bv[i,M-2r-1] & \begin{array}{c} \bq[i,r+1] \\ \bv[i,M-2r] \end{array} & \bq[i,r] & \cdots & \bq[i,1]
\end{array} \right]
\\ \cline{1-2}
j = r+1 &\left[
\begin{array}{c|c|c|c|g|g|g|g|g|g|g|g|g}
  \bu[i,1] & \bu[i,2] & \cdots & \bu[i,r] & \begin{array}{c}	\bu[i,r+1] \\ \bv[i,1] \end{array} & \bv[i,2] & \bv[i,3] & \cdots & \bv[i,M-2r-1] & \begin{array}{c} \bq[i,r+1] \\ \bv[i,M-2r] \end{array} & \bq[i,r] & \cdots & \bq[i,1]
\end{array} \right]
\\
j = r+2 &\left[
\begin{array}{c|c|c|c|c|g|g|g|g|g|g|g|g}
  \bu[i,1] & \bu[i,2] & \cdots & \bu[i,r] & \begin{array}{c}	\bu[i,r+1] \\ \bv[i,1] \end{array} & \bv[i,2] & \bv[i,3] & \cdots & \bv[i,M-2r-1] & \begin{array}{c} \bq[i,r+1] \\ \bv[i,M-2r] \end{array} & \bq[i,r] & \cdots & \bq[i,1]
\end{array} \right]
\\
j = r+3 &\left[
\begin{array}{c|c|c|c|c|c|g|g|g|g|g|g|g}
  \bu[i,1] & \bu[i,2] & \cdots & \bu[i,r] & \begin{array}{c}	\bu[i,r+1] \\ \bv[i,1] \end{array} & \bv[i,2] & \bv[i,3] & \cdots & \bv[i,M-2r-1] & \begin{array}{c} \bq[i,r+1] \\ \bv[i,M-2r] \end{array} & \bq[i,r] & \cdots & \bq[i,1]
\end{array} \right]
\\ \cline{1-2}
j = M-r &\left[
\begin{array}{c|c|c|c|c|c|c|c|c|g|g|g|g}
  \bu[i,1] & \bu[i,2] & \cdots & \bu[i,r] & \begin{array}{c}	\bu[i,r+1] \\ \bv[i,1] \end{array} & \bv[i,2] & \bv[i,3] & \cdots & \bv[i,M-2r-1] & \begin{array}{c} \bq[i,r+1] \\ \bv[i,M-2r] \end{array} & \bq[i,r] & \cdots & \bq[i,1]
\end{array} \right]
\\
j = M-r+1 &\left[
\begin{array}{c|c|c|c|c|c|c|c|c|c|g|g|g}
  \bu[i,1] & \bu[i,2] & \cdots & \bu[i,r] & \begin{array}{c}	\bu[i,r+1] \\ \bv[i,1] \end{array} & \bv[i,2] & \bv[i,3] & \cdots & \bv[i,M-2r-1] & \begin{array}{c} \bq[i,r+1] \\ \bv[i,M-2r] \end{array} & \bq[i,r] & \cdots & \bq[i,1]
\end{array} \right]
\\ \cline{1-2}
\end{align*}
\normalsize
\caption{Different erasure patterns considered in the analysis of the decoder. The index $j$ at the left of each row, indicates the starting location of each burst in macro-block $i$. The shaded blocks shows the symbols that are erased. }
\label{eq:Bj}
\end{figure}

We need to show that the total erased sub-symbols $(\bv_{\mrm{vec}}[\cdot], \bp_{\mrm{vec}}[\cdot])$ between macro-packets $i$ to macro-packet $i+T$ i.e., in the following sequence,
\begin{multline}
\big\{\bv_\mrm{vec}[t],\bp_\mrm{vec}[t]\big\}_{i\le t \le i+T} =
\bigg(v_0[i],\dots,v_{k^v-1}[i],p_0[i],\dots, p_{k^u-1}[i],v_0[i+1],\dots,v_{k^v-1}[i+1],p_0[i+1],\dots, \\ p_{k^u-1}[i+1],\dots,v_0[i+T],\dots,v_{k^v-1}[i+T],p_0[i+T],\dots,p_{k^u-1}[i+T]\bigg),
\end{multline} after the cancellation of $\bu[\cdot]$ symbols, is $k^u(T+1)$. We start by considering that the burst begins at $j=1$ and subsequently consider other cases in Fig.~\ref{eq:Bj}. For the case when $j=1$ we consider two cases.

\begin{itemize}
\item $B' > \frac{b}{T+b}M$:
We first show that the total number of $\bv_{\mrm{vec}}[\cdot]$ and $\bp_{\mrm{vec}}[\cdot]$ sub-symbols erased due to the burst in the macro-packets $i, i+1,\ldots, i+b$ equals $k^u T = B\cdot T$. Furthermore in macro-packet $i+T$, the parity-checks $\bp_\mrm{vec}[i+T]$ combine with $\bu_\mrm{vec}[i]$ which are also erased. Hence these sub-symbols contribute to additional $k^u$ erasures leading to a total of $k^u(T+1)$ erased sub-symbols. 

Note that the erasure burst spans the entire macro-packets $\bX[i,:],\dots,\bX[i+b-1,:]$ as well as $\bx[i+b.1],\dots,\bx[i+b,B']$. The total number of sub-symbols in $\bv_{\mrm{vec}}[t]$ and $\bp_{\mrm{vec}}[t]$ in each macro-packet is $ k^v+k^u= M(T+b+1)-B$. In the $b$-th macro-packet we only have the first $B'$ columns erased. Out of these the first $k^u$ sub-symbols are from the $\bu_\mrm{vec}[\cdot]$ sub-symbols whereas the remaining
 $B' n - k^u = B'(T+b+1) - B$ sub-symbols come from $\bv_{\mrm{vec}}[i+b]$ and $\bp_{\mrm{vec}}[i+b]$ . 
 It can be easily verified that $B'(T+b+1)-B \ge 0$.
 Hence the total number of erased sub-symbols of $\bv_{\mrm{vec}}[t]$ and $\bp_{\mrm{vec}}[t]$ is 
 \begin{align}
b(k^u+k^v) + B' n - k^u &=b(M(T+b+1)-B)+B'(T+b+1) - k^u \nonumber \\
&= B(T+b+1) -bB-k^u \nonumber \\
&= B(T+1) - k^u = Tk^u
 \end{align}
 where we use the fact that $k^u=B$ in our code construction and $B = bM+B'$. 

\item $B' \leq \frac{b}{T+b}M$:
Again the macro-packets $\bX[i,:],\dots,\bX[i+b-1,:]$ are completely erased and each contributes to $b (k^v+k^u)= b(M(T+b)-Mb)$ erasures. In the $\bX[i+b,:]$ only the sub-symbols in $\bu_\mrm{vec}[i+b]$ are erased as it can be easily verified that $B'n \le k^u$. Finally as in the previous case all the $\bv_\mrm{vec}[i+T]$ sub-symbols in macro-packet $i+T$ that combine with $\bu_\mrm{vec}[i]$ must be considered erased. Thus the total number of erased sub-symbols
is $b(M(T+b)-Mb)+ k^u = bM T + k^u = k^u (T+1)$. 
\end{itemize}

To establish the claim for $j=2,3,\ldots, M-r$ it suffices to show the following lemma

\begin{lemma}
Let $N_j$ denote the total number of erased sub-symbols in $\{\bv_\mrm{vec}[t], \bp_\mrm{vec}[t]\}$ after the cancellation of non-erased $\bu_\mrm{vec}[\cdot]$ sub-symbols when the erasure burst begins at $\bx[i,j]$. Then we have that $N_j \le N_{j-1}$ for each $j=2,3,\ldots, M-r$.
\label{lem:worst-case}
\end{lemma}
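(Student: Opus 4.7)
The plan is to prove the inequality by a one-column sliding argument, showing that $j=1$ is the worst case. When the burst start shifts from $\bx[i,j-1]$ to $\bx[i,j]$, the erasure pattern changes in only two local places: column $j-1$ of $\bX[i,:]$ leaves the erased region, and exactly one new column at the tail of the burst enters it. So I will write $N_j-N_{j-1}$ as (increase from the new tail column) $-$ (decrease from freeing column $j-1$), and show the first is at most $n$ while the second is exactly $n$.

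For the tail column, since $T>b$ and the burst has length $B=bM+B'$, the new column lies in $\bX[i+b,:]$ or $\bX[i+b+1,:]$, both strictly before macro-packet $i+T$, so adding this column cannot affect the cancellation step in macro-packet $i+T$. Each column contains exactly $n$ sub-symbols in total, so the contribution to the erased $(\bv_\mrm{vec},\bp_\mrm{vec})$-count increases by at most $n$; this is the easy half.

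The substantive part is to justify that freeing column $j-1$ of $\bX[i,:]$ always reduces the count by exactly $n$. I will split by which ``zone'' of the macro-packet column $j-1$ lies in, using~\eqref{eq:X}: for $j-1\in\{1,\ldots,r\}$ the column is pure $\bu[i,j-1]$, so no $\bv$ or $\bp$ sub-symbol is freed directly, but $\bu[i,j-1]$ becoming available permits cancellation from $\bq[i+T,j-1]$, restoring the $n$ sub-symbols of $\bp[i+T,j-1]$; for $j-1=r+1$ the mixed column frees $n-r'$ sub-symbols of $\bv[i,1]$ directly and the cancellation of $\bu[i,r+1]$ restores $r'$ further sub-symbols of $\bp[i+T,r+1]$; for $j-1\in\{r+2,\ldots,M-r-1\}$ the column is pure $\bv$ of size $n$ which is freed directly. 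In every case the decrease equals exactly $n$, so $N_j-N_{j-1}\le n-n=0$.

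The main obstacle is the careful bookkeeping between ``physically erased'' sub-symbols and sub-symbols that are only ``effectively erased'' because the corresponding $\bu[i,\cdot]$ cannot be cancelled from $\bq[i+T,\cdot]$. One has to verify in each case that this indirect contribution to $N_{j-1}$ through $\bp_\mrm{vec}[i+T]$ is indeed removed by the shift (i.e., $\bq[i+T,j-1]$ is itself unerased, which follows from $T>b$), and that no such new indirect contribution is created by the new tail column — otherwise the increase bound of $n$ could be violated.
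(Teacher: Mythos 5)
Your proof is correct and follows the same one-column sliding argument, with the same zone-by-zone case split on the freed column $j-1$, that the paper's proof of Lemma~\ref{lem:worst-case} uses; if anything you are more explicit than the paper in flagging that the ``indirectly erased'' parity $\bp[i+T,j-1]$ can only be credited back if $\bq[i+T,j-1]$ is itself unerased. One small inaccuracy: when $T=b+1$ the new tail column can land inside $\bX[i+T,:]$, so it is not ``strictly before'' macro-packet $i+T$; the conclusion still holds because both parameter choices ($n=T+b+1,\,k^u=B$ or $n=T+b,\,k^u=Mb$) force $r<M/2$, which confines that tail column to the $\bu$/$\bv$ zone of $\bX[i+T,:]$, keeps the $\bq[i+T,\cdot]$ columns untouched, and thus preserves both the cancellation at $i+T$ and the $\le n$ bound on the increase.
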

Lemma~\ref{lem:worst-case} establishes that the worst case erasure sequence is the one that begins at $j=1$. 
Since we have already established that the total number of erasures in $\{\bv_\mrm{vec}[t], \bp_\mrm{vec}[t]\}$ in this case does not exceed $k^u(T+1)$, this will complete our claim. 

To establish Lemma~\ref{lem:worst-case}, we note that going from the burst pattern that starts at $\bx[i,j]$ to the pattern that start at $\bx[i,j+1]$ results in one extra erased channel packet at the end. Also, it results in revealing the first channel packet which is $\bx[i,j]$. We assume (as a worst case) that the extra erased channel packet at the end contributes to $n$ additional erased sub-symbols of either $\bv_{\mrm{vec}}[\cdot]$ or $\bp_{\mrm{vec}}[\cdot]$. We consider the effect of revealing the channel packet $\bx[i,j]$ and show that it always compensates exactly $n$ new unerased symbols
of either $\bv_{\mrm{vec}}[\cdot]$ or $\bp_{\mrm{vec}}[\cdot]$. Thus we do not increase the total number of erased symbols in such a transition.

Recall that $\bx[i,j]$ can be one of the following (cf.~Fig~\ref{eq:Bj}),
\begin{align}
\bx[i,j] = \begin{cases}
\bu[i,j] & j = \{1,\dots,r\} \\
\left[\begin{array}{c} \bu[i,r+1] \\ \bv[i,1] \end{array} \right] & j=r+1 \\
\bv[i,j-r] & j = \{r+2,\dots,M-r-1\} \\
\left[\begin{array}{c} \bq[i,r+1] \\ \bv[i,j-r] \end{array} \right] & j = M-r \\
\bq[i,M-j+1] & j = \{M-r+1,\dots,M\}. \\
\end{cases}
\end{align}

\begin{itemize}
\item $j=\{1,\cdots,r\}$:
In the case under consideration, the revealed $\bx[i,j]$ is always $\bu[i,j]$ . It can be subtracted from $\bq[i+T,j]$ to recover $\bp[i+T,j] \in \bp_{\mrm{vec}}[\cdot]$ having $n$ sub-symbols. Thus, it compensates for the $n$ extra erased sub-symbols.

\item $j = r+1$:
The $r'$ sub-symbols of $\bu[i,r+1]$ helps in recovering the $r'$ sub-symbols of $\bp[i+T,r+1] \in \bp_{\mrm{vec}}[\cdot]$. This together with the revealed $n-r'$ sub-symbols of $\bv[i,1] \in \bv_{\mrm{vec}}[\cdot]$ compensates for the $n$ extra erasures.

\item $j = \{r+2,\dots,M-r-1\}$:
In this case, the revealed channel packet is $\bx[i,j] = \bv[i,j-r] \in \bv_{\mrm{vec}}[\cdot]$ and has $n$ sub-symbols which are now available at the decoder.

\item $j = M-r$:
As shown in Fig.~\ref{eq:Bj}, the decoder can subtract $\bu[i-T,r+1]$ from $\bq[i,r+1]$ to recover the $r'$ sub-symbols $\bp[i,r+1] \in \bp_{\mrm{vec}}[\cdot]$. This together with the $n-r'$ sub-symbols of $\bv[i,j-r] \in \bv_{\mrm{vec}}[\cdot]$ add up to $n$ sub-symbols and the claim follows.
\end{itemize}
This establishes Lemma~\ref{lem:worst-case} and in turn the proof of Lemma~\ref{lem:mdp_recovery_mismatch} is complete.

\bibliographystyle{unsrt}
\bibliography{sm}

\end{document}